\newtheorem{theorem}{Theorem}[section]
\newtheorem{lemma}[theorem]{Lemma}
\newtheorem{proposition}[theorem]{Proposition}
\newtheorem{corollary}[theorem]{Corollary}
\theoremstyle{definition}
\newtheorem{remark}[theorem]{Remark}
\numberwithin{equation}{section}
\def\bC{{\mathbb C}}           
\def\bR{{\mathbb R}} 
\def\bZ{{\mathbb Z}}
\def\bT{{\mathbb T}}
\def\mA{{\mathcal A}}
\def\mB{{\mathcal B}}
\def\mC{{\mathcal C}}
\def\mH{{\mathcal H}}
\def\mI{{\mathcal I}}
\def\mW{{\mathcal W}}
\def\scrD{{\mathscr D}}
\def\scrH{{\mathscr H}}
\def\H{\mathrm{H}}
\def\dH{\hat{\H}}
\def\Ht{\H_\tor}
\def\Hf{\H_\free}
\def\OmegaZ{\Omega_\bZ}
\def\free{\mathrm{free}}
\def\tor{\mathrm{tor}}
\def\sc{\mathrm{sc}}
\def\c{\mathrm{c}}
\def\WF{\mathrm{WF}}
\def\Char{\mathrm{Char}}
\def\Conf{\mathfrak{C}}
\def\TT{\mathfrak{T}}
\def\Dyn{\mathsf{Dyn}}
\def\Top{\mathsf{Top}}
\def\Topf{\Top_\free}
\def\Topt{\Top_\tor}
\def\la{\langle}
\def\ra{\rangle}
\DeclareMathOperator{\cu}{curv}
\DeclareMathOperator{\dcu}{\ast\,\cu}
\DeclareMathOperator{\ch}{char}
\DeclareMathOperator{\dd}{d}
\DeclareMathOperator{\de}{\delta}
\DeclareMathOperator{\tr}{\iota}
\DeclareMathOperator{\fl}{\kappa}
\DeclareMathOperator{\mux}{\mu}
\DeclareMathOperator{\nux}{\nu}
\DeclareMathOperator{\be}{\beta}
\DeclareMathOperator{\jx}{j}
\DeclareMathOperator{\qx}{q}
\DeclareMathOperator{\etax}{\eta}
\DeclareMathOperator{\xix}{\xi}
\DeclareMathOperator{\chix}{\chi}
\DeclareMathOperator{\id}{id}
\begin{document}

\par 
\bigskip 
\par 
\rm 
 
\par 
\bigskip 
\LARGE 
\noindent 
{\bf Hadamard states for quantum Abelian duality} 
\bigskip 
\par 
\rm 
\normalsize 

\large
\noindent {\bf Marco Benini$^{1,a}$},
 {\bf Matteo Capoferri}$^{2,b}$, {\bf Claudio Dappiaggi}$^{3,4,c}$ \\
\par
\small

\noindent $^1$ Institut f\"ur Mathematik, Universit\"at Potsdam,
Karl-Liebknecht-Str.\ 24-25, 14476 Potsdam, Germany.\smallskip

\noindent $^2$  Department of Mathematics, University College London, Gower Street, London WC1E 6BT,
UK. \smallskip

\noindent $^3$ Dipartimento di Fisica, Universit\`a degli Studi di Pavia,
Via Bassi 6, I-27100 Pavia, Italy. \smallskip

\noindent $^4$ Istituto Nazionale di Fisica Nucleare - Sezione di Pavia,
Via Bassi 6, I-27100 Pavia, Italy. \smallskip

\bigskip

\noindent $^a$ mbenini87@gmail.com,
$^b$ matteo.capoferri@gmail.com, $^c$ claudio.dappiaggi@unipv.it
 \normalsize

\par 
 
\rm\normalsize 
\bigskip
\noindent {\small Version of \today}

\rm\normalsize 
 
 
\par 
\bigskip 

\noindent 
\small 
{\bf Abstract}. 
Abelian duality is realized naturally by combining differential cohomology and locally covariant quantum field theory. This leads to a C$^*$-algebra of observables, which encompasses the simultaneous discretization of both magnetic and electric fluxes. We discuss the assignment of physically well-behaved states to such algebra and the properties of the associated GNS triple. We show that the algebra of observables factorizes as a suitable tensor product of three C$^*$-algebras: the first factor encodes dynamical information, while the other two capture topological data corresponding to electric and magnetic fluxes. On the former factor we exhibit a state whose two-point correlation function has the same singular structure of a Hadamard state. Specifying suitable counterparts also on the topological factors we obtain a state for the full theory, providing ultimately a unitary implementation of Abelian duality.

\vskip .3cm

\noindent {\em Keywords:} Abelian gauge theory, differential cohomology, algebraic quantum field theory, Hadamard states

\vskip .3cm

\noindent {\em MSC2010:} 81T13, 81T05

\normalsize
\bigskip


\section{Introduction}

The implementation of the principle of local gauge invariance in the framework of algebraic quantum field theory has been a hot topic in the past years. Besides the obvious connections to models of major physical interest, a notable feature, which has emerged from the earliest investigations, is the violation of the principle of general local covariance \cite{BFV03} in the prime example of an Abelian gauge theory, i.e.\ electromagnetism \cite{DL12,Benini:2013tra,BDHS14,Sanders:2012sf}: it appears to be impossible to reconcile the necessity of gauge invariant dynamical observables with the isotony axiom, i.e.\ injectivity of the morphism between observable algebras induced by a spacetime isometric embedding.

In addition, gauge theories are of paramount interest to various areas of mathematical physics since they are the natural playground for the appearance of dualities, the most famous example being the electric/magnetic duality in source-free Maxwell's theory. In this context, the discrete nature of magnetic fluxes comes from refining the Faraday tensor as the curvature of a circle bundle connection, whereas discrete electric fluxes stem from the so-called Dirac charge quantization. The natural generalization of this mechanism goes under the name of {\em Abelian duality}. 

In this paper we focus on this specific aspect of Abelian gauge theories, further elaborating on the results of \cite{BMath,BPhys}. The starting point of these papers is the reformulation in the framework of locally covariant quantum field theory \cite{BSS14} of the well-established implementation of Abelian gauge theories by means of differential cohomology \cite{Freed:2000ta,Szabo:2012hc}. This viewpoint has been taken a step further in \cite{BPhys} by implementing Abelian duality naturally on globally hyperbolic spacetimes, resulting in a functorial assignment of self-dual gauge fields, which encompasses both the causality and the time-slice axiom. This novel approach has at least two net advantages. On the one hand it yields a simultaneous discretization of electric and magnetic fluxes, which is manifest at the level of the algebra of observables. On the other hand, it enhances the Hamiltonian description by \cite{FMSa,FMSb} in a spacetime covariant fashion: Abelian duality becomes a natural isomorphism between suitable quantum field theory functors. 

The results obtained in \cite{BPhys} still fall one step shorter from a full-fledged description of Abelian duality, namely the assignment of a suitable quantum state that recovers the usual interpretation of quantum theories is missing. Addressing this aspect is the main goal of our paper. While the existence of states is not a matter of debate, not all of them should be considered physically acceptable. Already for the scalar field, in order to guarantee finite quantum fluctuations for all observables as well as the existence of a covariant construction of an algebra of Wick polynomials \cite{KM}, one needs to focus on the restricted class of {\em Hadamard states}. These are characterized by a specific singular structure of the underlying two-point correlation function \cite{Rad96}. Existence, properties and construction schemes for these states have been thoroughly studied in the past twenty years, including a class of examples of Abelian gauge theories \cite{Dappiaggi:2011cj, FP03, BDM, Gerard:2014jba}. In the context of Abelian duality, similar results cannot be obtained blindly due to the non-trivial (and fascinating) entanglement of dynamical and topological degrees of freedom \cite{BPhys}: finding a physically relevant state implementing Abelian duality is the challenging task addressed here. 

In this paper we focus on globally hyperbolic spacetimes of arbitrary dimension, though with compact Cauchy surfaces. For the symplectic Abelian group of observables associated to Abelian duality \cite{BPhys} we provide a (non-canonical) decomposition into three symplectic Abelian subgroups. The first captures dynamical information, while the other two encode the topological degrees of freedom associated to magnetic and electric fluxes. This decomposition carries over upon Weyl quantization: the full C$^*$-algebra of observables is isomorphic to an appropriate tensor product of three C$^*$-algebras, each associated to one of the symplectic Abelian subgroups aforementioned. Therefore, a state for the full C$^*$-algebra is tantamount to one for each tensor factor. Furthermore, having disentangled the topological and the dynamical degrees of freedom, for the latter we can investigate the existence of states with two-point correlation function fulfilling the microlocal spectrum condition \cite{SV01}. With a slight abuse of terminology we will still call them {\em Hadamard states}, although the underlying algebra does not fulfil the standard CCR relations of a scalar field that lie at the core of \cite{Rad96}. To make our analysis more concrete we will construct these Hadamard states explicitly by focusing on ultra-static backgrounds. Additionally, we assign states to the C$^*$-algebras encompassing the topological degrees of freedom. As a by-product, we prove that Abelian duality is unitarily implemented at the level of the GNS triple, thus closing the gap with the existing literature based on a direct Hilbert space description \cite{FMSa,FMSb}.
\vskip 1em

As for the structure of the paper, in Section \ref{Prelimiaries} we introduce our notation and conventions, recollecting in particular the most important results from \cite{BSS14,BPhys,BMath}. Most notably, we discuss the symplectic Abelian group of observables. In Section \ref{Ortho_decomp}, we introduce three auxiliary symplectic Abelian groups, corresponding to the dynamical sector, to the torsion-free topological sector and to the torsion topological sector of our model. These are then used in Section \ref{secDecomposition} to present the symplectic Abelian group of observables as a direct sum of the mentioned sectors. In Section \ref{Sec:quantstates}, firstly we recall that we can associate a C$^*$-algebra of Weyl type to each symplectic Abelian group. Secondly we prove that the splitting of the preceding section entails a factorization of the full C$^*$-algebra of observables into a suitable tensor product of C$^*$-algebras, each factor being associated to one of the sectors. Focusing on the case of ultra-static, globally hyperbolic spacetimes with compact Cauchy surface, in Section \ref{Sec:states_dyn} we construct the ground state for the C$^*$-algebra associated to the dynamical sector, proving that its two-point correlation function fulfils the microlocal spectrum condition. Then we exhibit states also for the remaining two sectors, see Sections \ref{sec:TopfState} and \ref{sec:ToptState}, thus showing that Abelian duality is unitarily implemented on the GNS triple. Section \ref{sec:LorCyl} concludes the paper discussing the example of the Lorentz cylinder, providing a Fourier expansion of the two-point correlation function for the dynamical sector of the theory: remarkably, Abelian duality naturally circumvents the infrared obstructions to the existence of ground states in 1+1 dimensions.


\section{Preliminaries}\label{Prelimiaries}
In the present section we quickly recapitulate the background material for the rest of the paper. 
We also take the chance to introduce our notation and conventions. 
Let us remark once and for all that, unless otherwise stated, for any $m \in \bZ_{\geq0}$, 
the term \textit{($m$-dimensional) manifold} refers to a connected, second-countable, 
Hausdorff topological space that is locally homeomorphic to $\bR^n$ 
and that comes equipped with a smooth structure (an atlas with smooth transition maps). 
We will only consider manifolds of \textit{finite type}, namely those admitting a finite good cover. 
For Abelian groups we will always use the additive notation, 
for example $\bT$ is defined as the cokernel of the homomorphism $\bZ \to \bR$ of Abelian groups, 
while the multiplicative notation is reserved to ring multiplication, 
e.g.\ $\wedge$ for the graded algebra of differential forms, 
$\smile$ for the graded ring structure on cohomology, 
$\cdot$ for the graded differential cohomology ring 
and, more generally, juxtaposition for associative algebras.

\subsection{Differential cohomology}
The mathematical tool to describe configurations for the field theory under analysis 
is differential cohomology. For a full presentation of the subject 
we refer the reader to the existing literature, e.g.\ \cite{BB}. 
Here we will just recall few basic facts, mainly in order to introduce our notation. 
Differential cohomology for a manifold $M$ is a graded ring $\dH^\ast(M;\bZ)$ 
that arises as a suitable refinement of the cohomology ring $\H^\ast(M;\bZ)$ with $\bZ$-coefficients 
by the graded algebra $\Omega^\ast(M)$ of differential forms. 
More precisely, for $k \in \bZ_{\geq0}$, $k$-differential cohomology 
is a contravariant functor $\dH^k(\,\cdot\,;\bZ)$ 
defined on a suitable category of ``spaces'' (including manifolds) and taking values in Abelian groups. 
Such functor is equipped with four natural transformations $\tr$, $\fl$, $\cu$, $\ch$ 
relating it to differential forms and cohomology classes according to the commutative diagram depicted below, 
whose rows and columns are short exact sequences:
\begin{equation}\label{diaDiffCoho}
\xymatrix{
		&	0 \ar[d]	&	0 \ar[d]	&	0\ar[d]																\\
0 \ar[r]	&	\frac{\H^{k-1}(M;\bR)}{\Hf^{k-1}(M;\bZ)} \ar[r]^-{\nux} \ar[d]_-{\mux}
					&	\frac{\Omega^{k-1}(M)}{\OmegaZ^{k-1}(M)} \ar[r]^-\dd \ar[d]_-\tr
								&	\dd\Omega^{k-1}(M) \ar[r] \ar[d]^-{\subseteq}				&	0	\\
0 \ar[r]	&	\H^{k-1}(M;\bT) \ar[r]^-\fl \ar[d]_-\be
					&	\dH^k(M;\bZ) \ar[r]^-\cu \ar[d]_-\ch
								&	\OmegaZ^k(M) \ar[r] \ar[d]^-{[\,\cdot\,]}					&	0	\\
0 \ar[r]	&	\Ht^k(M;\bZ) \ar[r]_-\jx \ar[d]
					&	\H^k(M;\bZ) \ar[r]_-\qx \ar[d]
								&	\Hf^k(M;\bZ) \ar[r] \ar[d] 									&	0	\\
		&	0		&	0		&	0
}
\end{equation}
There are several equivalent models for differential cohomology, 
e.g. Cheeger-Simons differential characters \cite{CS}, de Rham-Federer characters \cite{HLZ}, 
Hopkins-Singer cocycles \cite{HS}. Furthermore, differential cohomology classifies isomorphism classes 
of higher circle bundles equipped with a connection. This interpretation of differential cohomology 
motivates the names usually attributed to the natural transformations mentioned above, 
namely characteristic class for $\ch$, curvature for $\cu$, 
inclusion of flat connections for $\fl$ and inclusion of trivial bundles for $\tr$. 
In analogy with cohomology and differential forms, differential cohomology 
is a graded ring with multiplication $\,\cdot\,$. 
Both $\ch$ and $\cu$ are ring homomorphisms with respect to $\,\cdot\,$, 
while $\tr$ and $\fl$ are compatible with the ring structure as indicated below: 
\begin{align}\label{eqRingComp}
\tr [A] \cdot h = \tr [A \wedge \cu h],		&&	\fl u \cdot h = \fl (u \smile \ch h), 
\end{align}
for all $h \in \dH^k(M;\bZ)$, $[A] \in \Omega^\ell(M) / \OmegaZ^\ell(M)$ and $u \in \H^\ell(M;\bT)$. 
Further information about differential cohomology and its graded ring structure can be found in \cite{BB,SS}. 
Let us also mention that relative versions of differential cohomology exist 
(see \cite{BB} for a comparison among different approaches)
and these can be used to realize differential cohomology with restricted support, 
for example models with compact support have been considered in \cite{HLZ, BMath}.

\subsection{Configurations}
By means of differential cohomology in degree $k$ 
on an $m$-dimensional globally hyperbolic spacetime\footnote{Here 
and in the following, the term \textit{spacetime} refers to an oriented and time-oriented Lorentzian manifold.} 
$M$ we can introduce the model under investigation, specifying its equation of motion 
and solving the associated Cauchy problem. This has been developed in \cite{BPhys}; 
however, for ease of reference we recapitulate the main results below. 
\vskip 1em

Denoting with $\ast$ the Hodge dual induced by the metric and by the orientation of $M$, we specify the Abelian group 
\begin{equation}\label{eqConf}
\Conf^k(M;\bZ) \doteq \big\{ (h,\tilde h) \in \dH^k(M;\bZ) \times \dH^{m-k}(M;\bZ):\; \cu h = \dcu \tilde h \big\},
\end{equation}
encompassing the configurations $(h,\tilde h)$ for the field theory of interest. 
In degree $k=2$ and dimension $m=4$, this provides a \textit{semiclassical} refinement 
of Maxwell theory in the vacuum and without external sources, with the interesting feature 
of encoding the discretization of electric and magnetic fluxes 
that originate from topological features of the underlying spacetime. 
This refinement has the pleasant feature 
of preserving the duality between electric and magnetic degrees of freedom 
typical of the source-free Maxwell theory, see \cite{FMSa, FMSb, BPhys}. 
In fact, recalling that $\ast \ast = (-1)^{p(m-p)+1}$ on $p$-forms over $M$, one obtains a natural isomorphism 
\begin{align}\label{eqn:duality}
\zeta: \Conf^k(M;\bZ) \longrightarrow \Conf^{m-k}(M;\bZ), 
&& (h, \tilde h) \longmapsto (\tilde h, (-1)^{k(m-k)+1} h),
\end{align}
that interchanges precisely the (discrete) magnetic and electric fluxes, 
carried respectively by $h$ and $\tilde h$ and detected through $\ch$. 
In particular, for $m=2k$ this duality is a natural automorphism 
that can be used to identify self-dual configurations \cite{BPhys}. 
\vskip 1em

The configuration space $\Conf^k(M;\bZ)$ fits into a commutative diagram 
whose rows and columns form short exact sequences, see \cite[Sect.\ 2.1]{BPhys}. 
For any $p,q \in \bZ_{\geq0}$ and any graded Abelian group $A$ set 
\begin{align}
A^{p,q} = A^p \times A^q
\end{align}
and introduce topologically trivial configurations by 
\begin{equation}
\TT^k(M;\bZ) \doteq \big\{ ([A],[\tilde A]) \in \Omega^{k-1,m-k-1}(M) / \OmegaZ^{k-1,m-k-1}(M):\; 
						\dd A = \ast \dd \tilde A \big\}. 
\end{equation}
Then the above-mentioned diagram reads as follows: 
\begin{equation}\label{diaConf}
\xymatrix@C=2.5em{
		&	0 \ar[d]		&	0 \ar[d]		&	0\ar[d]														\\
0 \ar[r]	&	\frac{\H^{k-1,m-k-1}(M;\bR)}{\Hf^{k-1,m-k-1}(M;\bZ)}
				\ar[r]^-{\nux \times \nux} \ar[d]_-{\mux \times \mux}
						&	\TT^k(M;\bZ) \ar[r]^-{\dd_1} \ar[d]_-{\tr \times \tr}
										&	\dd\Omega^{k-1} \cap \ast\,\dd\Omega^{m-k-1}(M) 
												\ar[r] \ar[d]^-\subseteq							&	0	\\
0 \ar[r]	&	\H^{k-1,m-k-1}(M;\bT) \ar[r]^-{\fl \times \fl} \ar[d]_-{\be \times \be}
						&	\Conf^k(M;\bZ) \ar[r]^-{\cu_1} \ar[d]_-{\ch \times \ch}
										&	\OmegaZ^k \cap \ast\,\OmegaZ^{m-k}(M)
												\ar[r] \ar[d]^-{([\,\cdot\,],[\ast^{-1}\,\cdot\,])}	&	0	\\
0 \ar[r]	&	\Ht^{k,m-k}(M;\bZ) \ar[r]_-{\jx \times \jx} \ar[d]
						&	\H^{k,m-k}(M;\bZ) \ar[r]_-{\qx \times \qx} \ar[d]
										&	\Hf^{k,m-k}(M;\bZ) \ar[r] \ar[d] 						&	0	\\
		&	0			&	0			&	0
}
\end{equation}
where the subscript $_1$ denotes the precomposition with the projection on the first factor. 
Using the diagram above, homotopy invariance of cohomology groups 
and solving the Cauchy problem for Maxwell's equation \cite{DL12}, 
one obtains the solution to the Cauchy problem formalized below \cite[Sect.\ 2.2]{BPhys}: 
for a spacelike Cauchy surface $\Sigma$ of $M$ 
and for arbitrary initial data $(h_\Sigma,\tilde h_\Sigma) \in \dH^{k,m-k}(\Sigma;\bZ)$, 
there exists a unique configuration $(h, \tilde h) \in \Conf^k(M;\bZ)$ 
that restricts to the given initial data along the embedding $i_\Sigma: \Sigma \to M$, i.e.\ 
\begin{align}
\cu h = \dcu \tilde h,	&&	i_\Sigma^\ast h = h_\Sigma,	&&	i_\Sigma^\ast \tilde h = \tilde h_\Sigma, 
\end{align}
where $i_\Sigma^\ast \doteq \dH^p(i_\Sigma;\bZ): \dH^p(M;\bZ) \to \dH^p(\Sigma;\bZ)$ 
denotes the differential cohomology pullback. This means that the map 
\begin{align}\label{eqCauchyIso}
i_\Sigma^\ast: \Conf^k(M;\bZ) \longrightarrow \dH^{k,m-k}(\Sigma;\bZ), 
	&&	(h, \tilde h) \longmapsto (i_\Sigma^\ast h, i_\Sigma^\ast \tilde h) 
\end{align}
is an isomorphism of Abelian groups. 
A similar result holds true for all objects in diagram \eqref{diaConf}, which becomes isomorphic to 
\begin{equation}\label{diaInitialData}
\xymatrix@C=3.5em{
		&	0 \ar[d]	&	0 \ar[d]	&	0\ar[d]																	\\
0 \ar[r]	&	\frac{\H^{k-1,m-k-1}(\Sigma;\bR)}{\Hf^{k-1,m-k-1}(\Sigma;\bZ)}
				\ar[r]^-{\nux \times \nux} \ar[d]_-{\mux \times \mux}
					&	\frac{\Omega^{k-1,m-k-1}(\Sigma)}{\OmegaZ^{k-1,m-k-1}(\Sigma)} 
							\ar[r]^-{\dd \times \dd} \ar[d]_-{\tr \times \tr}
								&	\dd\Omega^{k-1,m-k-1}(\Sigma) \ar[r] \ar[d]^-{\subseteq}		&	0	\\
0 \ar[r]	&	\H^{k-1,m-k-1}(\Sigma;\bT) \ar[r]^-{\fl \times \fl} \ar[d]_-{\be \times \be}
					&	\dH^{k,m-k}(\Sigma;\bZ) \ar[r]^-{\cu \times \cu} \ar[d]_-{\ch \times \ch}
								&	\OmegaZ^{k,m-k}(\Sigma) \ar[r] \ar[d]^-{([\,\cdot\,],[\,\cdot\,])}	&	0	\\
0 \ar[r]	&	\Ht^{k,m-k}(\Sigma;\bZ) \ar[r]_-{\jx \times \jx} \ar[d]
					&	\H^{k,m-k}(\Sigma;\bZ) \ar[r]_-{\qx \times \qx} \ar[d]
								&	\Hf^{k,m-k}(\Sigma;\bZ) \ar[r] \ar[d] 							&	0	\\
		&	0		&	0		&	0
}
\end{equation}
upon pullback along $i_\Sigma: \Sigma \to M$. Therefore, the assignment of a spacelike Cauchy surface 
provides an equivalent way to describe the configuration space $\Conf^k(M;\bZ)$. 
We will often adopt this point of view in the following, especially in Section \ref{Ortho_decomp}. 
\vskip 1em

\begin{remark}
Similar conclusions hold true for a modified version $\Conf_\sc^k(M;\bZ)$ 
of the configuration space $\Conf^k(M;\bZ)$ with support restricted to spacelike compact regions, 
see \cite[Sect.\ 3]{BPhys}. In particular, the embedding $i_\Sigma: \Sigma \to M$ 
of a spacelike Cauchy surface into the globally hyperbolic spacetime $M$ 
induces an isomorphism similar to \eqref{eqCauchyIso}: 
\begin{equation}\label{eqSCCauchyIso}
i_\Sigma^\ast: \Conf_\sc^k(M;\bZ) \longrightarrow \dH_\c^{k,m-k}(\Sigma;\bZ), 
\end{equation}
where $\dH_\c^p(\Sigma;\bZ)$ is the Abelian group of $p$-differential characters 
with compact support on $\Sigma$ (cf.\ \cite{BMath,HLZ}). 
The Abelian group $\Conf_\sc^k(M;\bZ)$ of spacelike compact configurations 
plays an important role because it can be used to introduce a class of well-behaved functionals 
on $\Conf^{k,m-k}(M;\bZ)$ that can be regarded as \textit{observables} for this theory, \cite[Sect.\ 4.2]{BPhys}. 
Notice that diagrams \eqref{diaConf} and \eqref{diaInitialData} have counterparts 
with spacelike compact and, respectively, compact support (see \cite{BMath,BPhys}) 
that provide crucial information about the observables of the model under consideration. 
\end{remark}

\subsection{\label{secObs}Observables}
In this section we recall the notion of observable considered in \cite[Sect.\ 4]{BPhys}. 
Throughout the paper we will be dealing 
with globally hyperbolic spacetimes $M$ admitting a \textit{compact} Cauchy surface; 
for this reason, from now on and unless otherwise stated we will focus on this specific case, 
thus avoiding some of the technical complications that arise in the general situation, cf.\ \cite{BPhys}. 
Compared with the general case, the peculiar feature of having a compact Cauchy surface is that 
there are no degeneracies in the presymplectic structure, i.e.\ it is symplectic, cf.\ \cite[Prop.\ 4.5]{BPhys}. 
\vskip 1em

Exploiting the fact that $M$ admits compact Cauchy surfaces, 
we introduce a non-degenerate pairing $\sigma: \Conf^k(M;\bZ) \times \Conf^k(M;\bZ) \to \bT$ 
and we use it to interpret $\Conf^k(M;\bZ)$ as the Abelian group 
labelling a well-behaved class of observables on $\Conf^k(M;\bZ)$ 
of the form $\sigma(\,\cdot\, ,(h,\tilde h))$, $(h,\tilde h) \in \Conf^k(M;\bZ)$. 
At the same time, this pairing will also provide a symplectic structure on $\Conf^k(M;\bZ)$. 
The procedure to define $\sigma$ involves the isomorphism \eqref{eqCauchyIso} 
between configurations and initial data on a spacelike Cauchy surface $\Sigma$ of $M$ 
and the definition of a suitable pairing $\sigma^\Sigma$ on initial data $\dH^{k,m-k}(\Sigma;\bZ)$. 
To introduce $\sigma^\Sigma: \dH^{k,m-k}(\Sigma;\bZ) \times \dH^{k,m-k}(\Sigma;\bZ) \to \bT$ 
we proceed as follows:
\begin{enumerate}
\item Recalling the ring structure for differential cohomology on $\Sigma$, 
we introduce a bi-homomorphism of Abelian groups 
\begin{align}
\dH^{k,m-k}(\Sigma;\bZ) \times \dH^{k,m-k}(\Sigma;\bZ) & \longrightarrow \dH^m(\Sigma;\bZ), \nonumber \\
\big( (h_\Sigma,\tilde h_\Sigma), (h^\prime_\Sigma,\tilde h^\prime_\Sigma) \big) 
& \longmapsto \tilde h_\Sigma \cdot h_\Sigma^\prime - \tilde h_\Sigma^\prime \cdot h_\Sigma;
\end{align}
\item Due to $\dim \Sigma = m-1$, it follows that $\dH^m(\Sigma;\bZ)$ 
is isomorphic to $\H^{m-1}(\Sigma;\bR)/\Hf^{m-1}(\Sigma;\bZ)$ in a natural way; 
\item $\Sigma$ being compact and oriented (its orientation is induced by orientation and time-orientation of $M$), 
we can consider its fundamental class $[\Sigma] \in H_{m-1}(\Sigma)$; 
\item $\Sigma$ is also connected, hence the canonical evaluation of cohomology classes on $[\Sigma]$ yields an isomorphism $\H^{m-1}(\Sigma;\bR)/\Hf^{m-1}(\Sigma;\bZ) \to \bT$. 
\end{enumerate}
These considerations lead to the definition of $\sigma^\Sigma$ given below: 
\begin{align}\label{eqn:symplformSigma}
\sigma^\Sigma: \dH^{k,m-k}(\Sigma;\bZ) \times \dH^{k,m-k}(\Sigma;\bZ) \longrightarrow \bT, 
	&&	\big( (h_\Sigma,\tilde h_\Sigma), (h^\prime_\Sigma,\tilde h^\prime_\Sigma) \big) \longmapsto 
			(\tilde h_\Sigma \cdot h_\Sigma^\prime - \tilde h_\Sigma^\prime \cdot h_\Sigma)[\Sigma]. 
\end{align} 
Notice that by \cite[Prop.\ 5.6]{BMath} $\sigma^\Sigma$ is a weakly non-degenerate pairing; 
moreover, it is antisymmetric. Pulling $\sigma^\Sigma$ back along the isomorphism 
$i_\Sigma^\ast: \Conf^k(M;\bZ) \to \dH^{k,m-k}(\Sigma;\bZ)$ in \eqref{eqCauchyIso} we obtain 
\begin{equation}\label{eqn:symplform}
\sigma \doteq \sigma^\Sigma \circ (i_\Sigma^\ast \times i_\Sigma^\ast): 
	\Conf^k(M;\bZ) \times \Conf^k(M;\bZ) \longrightarrow \bT. 
\end{equation}
An argument based on Stokes' theorem shows that $\sigma$ does not depend 
on the chosen (compact) spacelike Cauchy surface $\Sigma$, cf.\ \cite[Lem.\ 8.4]{BPhys}. 
Clearly, $\sigma$ inherits the properties of $\sigma^\Sigma$, 
in particular it is a weakly non-degenerate antisymmetric pairing. 

For $(h,\tilde h) \in \Conf^k(M;\bZ)$, $\sigma(\,\cdot\, ,(h,\tilde h))$ defines a functional 
on $\Conf^k(M;\bZ)$. \cite{BPhys, BMath} show that these functionals are distinguished, 
in the sense that they provide smooth characters on the configuration space. 
We interpret such functionals as the observables for the model under consideration. 
Since $\sigma$ is weakly non-degenerate, functionals of this type 
form an Abelian group isomorphic to $\Conf^k(M;\bZ)$. 
Furthermore, $\sigma$ is antisymmetric, therefore we regard 
\begin{equation}
(\Conf^k(M;\bZ),\sigma)
\end{equation}
as the symplectic space of observables for our theory. 
Recalling \eqref{eqn:symplformSigma} and \eqref{eqn:symplform} and on account of graded commutativity, 
it follows that the duality isomorphism $\zeta: \Conf^k(M;\bZ) \to \Conf^{m-k}(M;\bZ)$ 
defined in \eqref{eqn:duality} preserves $\sigma$, 
hence Abelian duality is implemented symplectically at the level of observables.

\section{Symplectically orthogonal decomposition}\label{Ortho_decomp}
The goal of this section is to establish a decomposition of the Abelian group of observables $\Conf^k(M;\bZ)$ 
that is compatible with its symplectic structure $\sigma$. 
As we will see later, at the level of quantum algebras this decomposition corresponds 
to a convenient factorization that will enable us to introduce a state on the full algebra 
by looking at each factor independently. 
Once again, we consider globally hyperbolic spacetimes $M$ admitting a compact Cauchy surface. 
The peculiarity of the compact Cauchy surface case is that all pairings we are going to consider 
are weakly non-degenerate, while they happen to have degeneracies in general. 
\vskip 1em

Let $M$ be an $m$-dimensional globally hyperbolic spacetime having a compact spacelike Cauchy surface $\Sigma$. 
As above, we will denote the embedding of $\Sigma$ into $M$ by $i_\Sigma$ 
and we will often implicitly consider the isomorphism 
$i_\Sigma^\ast: \Conf^k(M;\bZ) \to \dH^{k,m-k}(\Sigma;\bZ)$ of \eqref{eqCauchyIso}. 
In particular, recall that $i_\Sigma^\ast$ extends to an isomorphism 
between the diagrams \eqref{diaConf} and \eqref{diaInitialData} 
and that it relates the symplectic structure $\sigma$ on $\Conf^k(M;\bZ)$ 
to the symplectic structure $\sigma^\Sigma$ on $\dH^{k,m-k}(\Sigma;\bZ)$. 
Our task is twofold: first, we aim at introducing suitable symplectic structures 
on certain auxiliary Abelian groups arising from the diagrams \eqref{diaConf} and \eqref{diaInitialData}; 
second, we want to provide a symplectically orthogonal decomposition 
of the symplectic Abelian group $(\Conf^k(M;\bZ),\sigma)$ 
in terms of the above-mentioned auxiliary symplectic Abelian groups.

\subsection{\label{secDyn}Dynamical sector}
With \textit{dynamical sector} we refer to the top-right corner of diagram \eqref{diaConf}, i.e.\ the vector space 
\begin{equation}
\Dyn^k(M) \doteq \dd \Omega^{k-1}(M) \cap \ast \dd \Omega^{m-k-1}(M). 
\end{equation}
The reader should keep in mind that, as a special case of \cite[Th.\ 2.5]{BPhys}, 
one obtains an isomorphism induced by the restriction to the spacelike Cauchy surface $\Sigma$ of $M$: 
\begin{align}\label{eqDynIso}
i_\Sigma^\ast: \Dyn^k(M) \overset{\simeq}{\longrightarrow} \dd \Omega^{k-1,m-k-1} (\Sigma), 
	&&	\dd A = \ast \dd \tilde A \longmapsto (i_\Sigma^\ast \dd A, i_\Sigma^\ast \dd \tilde A). 
\end{align}
In the following we will often omit to spell out both ways to express an element of $\Dyn(M)$, 
that is to say that we will only present it as $\dd A \in \Dyn^k(M)$, 
although by definition there exists $\tilde A \in \Omega^{m-k-1}(M)$ such that $\dd A = \ast \dd \tilde A$. 
\vskip 1em

We can endow $\Dyn(M)$ with a pairing, namely 
\begin{align}\label{eqDynSympl}
\sigma_\Dyn: \Dyn^k(M) \times \Dyn^k(M) \longrightarrow \bR, 
	&&	(\dd A, \dd A^\prime) \longmapsto \int_\Sigma i_\Sigma^\ast 
			\big( \tilde A \wedge \dd A^\prime - \tilde A^\prime \wedge \dd A \big). 
\end{align}
Stokes theorem implies that $\sigma_\Dyn$ is well-defined 
and that its definition does not depend on the choice of $\Sigma$. 
Clearly, there is an equivalent pairing $\sigma_\Dyn^\Sigma$ 
on the isomorphic vector space $\dd_\Sigma \Omega^{k-1,m-k-1}(\Sigma)$ 
and explicitly defined using the same formula 
(note the subscript $_\Sigma$ to distinguish the differential on $\Sigma$ from the one on $M$): 
\begin{equation}
\sigma_\Dyn^\Sigma \big( (\dd_\Sigma A_\Sigma, \dd_\Sigma A^\prime_\Sigma), 
	(\dd_\Sigma \tilde A_\Sigma, \dd_\Sigma \tilde A^\prime_\Sigma) \big) 
		= \int_\Sigma \tilde A_\Sigma \wedge \dd_\Sigma A^\prime_\Sigma 
			- \tilde A^\prime_\Sigma \wedge \dd_\Sigma A_\Sigma, 
\end{equation}
for $(\dd_\Sigma A_\Sigma, \dd_\Sigma A^\prime_\Sigma), 
(\dd_\Sigma \tilde A_\Sigma, \dd_\Sigma \tilde A^\prime_\Sigma) \in \dd_\Sigma \Omega^{k-1,m-k-1}(\Sigma)$. 
From its definition and using Stokes theorem, 
one can conclude that $\sigma_\Dyn^\Sigma$ is both weakly non-degenerate and anti-symmetric. 
Therefore, $(\dd_\Sigma \Omega^{k-1,m-k-1}(\Sigma), \sigma_\Dyn^\Sigma)$ 
and $(\Dyn(M),\sigma_\Dyn)$ are isomorphic symplectic vector spaces. 
\vskip 1em

The dynamical sector carries a natural duality isomorphism, the counterpart of \eqref{eqn:duality}, 
which evidently preserves the symplectic structure $\sigma_\Dyn$: 
\begin{align}\label{eqn:dualityDyn}
\zeta_\Dyn: \Dyn^k(M) \longrightarrow \Dyn^{m-k}(M), 
&& \dd A = \ast \dd \tilde A \longmapsto \dd \tilde A = \ast \dd (-1)^{k(m-k)+1} A. 
\end{align}

\begin{remark}
We just established a symplectic structure $\sigma_\Dyn$ for the vector space $\Dyn^k(M)$, 
the top-right corner of diagram \eqref{diaConf}; moreover, we introduced an isomorphic symplectic vector space 
$(\dd \Omega^{k-1,m-k-1}(\Sigma), \sigma_\Dyn^\Sigma)$ 
corresponding to initial data on a (compact) spacelike Cauchy surface $\Sigma$. 
Recall that $(\Conf^k(M;\bZ),\sigma)$ is just a symplectic Abelian group, 
rather than a symplectic vector space; 
therefore, in order to relate $(\Dyn^k(M),\sigma_\Dyn)$ to $(\Conf^k(M;\bZ),\sigma)$, 
we will have to forget the multiplication by scalars in $\Dyn^k(M)$ 
and turn $\sigma_\Dyn$ into a $\bT$-valued bi-homomorphism, 
i.e.\ post-compose it with the quotient $\bR \to \bT = \bR/\bZ$. 
We will do so in Section \ref{secDecomposition} in order to identify $(\Dyn^k(M),\sigma_\Dyn)$ 
as a direct summand of $(\Conf^k(M;\bZ),\sigma))$. 
\end{remark}

\subsection{\label{secTopf}Torsion-free topological sector}
The second contribution we consider for our decomposition of $\Conf^k(M;\bZ)$ 
contains topological information only. We quotient out the part related to torsion subgroups 
as those have to be treated independently (and have their own interpretation, cf.\ \cite{FMSb}). 
This leads to the \textit{torsion-free topological sector}, namely the Abelian group arising from the direct sum 
of the top-left and bottom-right corners of diagram \eqref{diaConf}: 
\begin{equation}\label{Topf}
\Topf^k(M) \doteq \frac{\H^{k-1,m-k-1}(M;\bR)}{\Hf^{k-1,m-k-1}(M;\bZ)} \oplus \Hf^{k,m-k}(M;\bZ). 
\end{equation}
Also $\Topf^k(M)$ has an equivalent description in terms of data 
specified on a (compact) spacelike Cauchy surface $\Sigma$: 
\begin{equation}\label{eqTopfIso}
i_\Sigma^\ast: \Topf^k(M) \overset{\simeq}{\longrightarrow}
	\frac{\H^{k-1,m-k-1}(\Sigma;\bR)}{\Hf^{k-1,m-k-1}(\Sigma;\bZ)} \oplus \Hf^{k,m-k}(\Sigma;\bZ). 
\end{equation}
This isomorphism, obtained by pulling cohomology classes back along $i_\Sigma: \Sigma \to M$, 
is just an instance of homotopy invariance of cohomology groups.
Further information can be found in \cite[Lem.\ 8.2]{BPhys}. 
\vskip 1em

We equip $\Topf^k(M)$ with a pairing $\sigma_\free$ 
induced by the cup product $\smile$ between cohomology groups. 
The procedure to define $\sigma_\free$ is similar to the one adopted in Section \ref{secObs}, 
namely we introduce a pairing $\sigma_\free^\Sigma$ on the right-hand-side of \eqref{eqTopfIso} 
and then we induce $\sigma_\free$ on $\Topf^k(M)$ via the isomorphism \eqref{eqTopfIso}. 
As a first step, following \cite[(5.22)]{BMath}, 
we observe that the cohomological cup product provides a bi-homomorphism of Abelian groups:
\begin{align}\label{eqTZCup}
\H^p(\Sigma;\bT) \times \H^{m-p-1}(\Sigma;\bZ) \longrightarrow \H^{m-1}(\Sigma;\bT), 
	&&	(f_\Sigma,\zeta_\Sigma) \longmapsto f_\Sigma \smile \zeta_\Sigma. 
\end{align}
Note that for dimensional reasons $\H^{m-1}(\Sigma;\bT)$ is naturally isomorphic 
to the quotient of $\H^{m-1}(\Sigma;\bR)$ by its subgroup $\Hf^{m-1}(\Sigma;\bZ)$, cf.\ \eqref{diaDiffCoho}. 
Therefore, repeating arguments 3.\ and 4.\ of Section \ref{secObs} 
and with the help of diagram \eqref{diaDiffCoho}, we can introduce the non-degenerate pairing 
\begin{align}\label{eqFreePairing}
\la \cdot,\cdot \ra_\free: \frac{\H^p(\Sigma;\bR)}{\Hf^p(\Sigma;\bZ)} \times \Hf^{m-p-1}(\Sigma;\bZ) 
		\longrightarrow \bT, 
	&&	(u_\Sigma,z_\Sigma) \longmapsto (\mux u_\Sigma \smile \zeta_\Sigma)[\Sigma], 
\end{align}
where$\zeta_\Sigma$ is any element of $\H^{m-p-1}(\Sigma;\bZ)$ such that $\qx \zeta_\Sigma = z_\Sigma$.  
This definition is well-posed on account of the properties of the cup product. 
A suitable combination of the pairings $\la \cdot,\cdot \ra_\free$ for different degrees provides 
\begin{subequations}\label{eqTopfSymplSigma}
\begin{equation}
\sigma_\free^\Sigma: \left(\frac{\H^{k-1,m-k-1}(\Sigma;\bR)}{\Hf^{k-1,m-k-1}(\Sigma;\bZ)} \oplus \Hf^{k,m-k}(\Sigma;\bZ) \right) \times \left( \frac{\H^{k-1,m-k-1}(\Sigma;\bR)}{\Hf^{k-1,m-k-1}(\Sigma;\bZ)} \oplus \Hf^{k,m-k}(\Sigma;\bZ) \right) \longrightarrow \bT, 
\end{equation}
where 
\begin{align}
\sigma_\free^\Sigma & \big( (u,\tilde u,z,\tilde z), (u^\prime,\tilde u^\prime,z^\prime,\tilde z^\prime) \big) 	\\
	&	\doteq \la \tilde u_\Sigma, z^\prime_\Sigma \ra_\free 
			- (-1)^{k(m-k)} \la u_\Sigma, \tilde z^\prime_\Sigma \ra_\free 
			- \la \tilde u^\prime_\Sigma, z_\Sigma \ra_\free 
			+ (-1)^{k(m-k)} \la u^\prime_\Sigma, \tilde z_\Sigma \ra_\free. 
	\nonumber 
\end{align}
\end{subequations}
$\sigma_\free^\Sigma$ is clearly antisymmetric and inherits non-degeneracy 
from $\la \cdot,\cdot \ra_\free$. These properties are directly transferred 
to the pairing $\sigma_\free$, defined on $\Topf^k(M)$ 
as the pullback of $\sigma_\free^\Sigma$ along \eqref{eqTopfIso}: 
\begin{equation}\label{eqTopfSympl}
\sigma_\free \doteq \sigma_\free^\Sigma \circ (i_\Sigma^\ast \times i_\Sigma^\ast): 
	\Topf^k(M) \times \Topf^k(M) \longrightarrow \bT. 
\end{equation}
Notice that $\sigma_\free$ is actually independent of the choice of $\Sigma$. 
In fact, for any choice of Cauchy surface $\Sigma$, 
$i_{\Sigma\,\ast}[\Sigma]$ is the unique generator of $\H_{m-1}(M) \simeq \bZ$. 
Summing up, the right-hand-side of \eqref{eqTopfIso} equipped with $\sigma_\free^\Sigma$ 
and $(\Topf^k(M),\sigma_\free)$ are isomorphic symplectic Abelian groups. 
In Section \ref{secDecomposition} we will identify $(\Topf^k(M),\sigma_\free)$ 
as a direct summand of the symplectic Abelian group $(\Conf^k(M;\bZ),\sigma)$. 
\vskip 1em

As the dynamical sector, also $(\Topf^k(M),\sigma_\free)$ carries 
a counterpart of the duality isomorphism \eqref{eqn:duality}: 
\begin{align}\label{eqn:dualityFree}
\zeta_\free: \Topf^k(M) \longrightarrow \Topf^{m-k}(M), 
&& (u,\tilde u,z,\tilde z) \longmapsto (\tilde u, (-1)^{k(m-k)+1} u, \tilde z, (-1)^{k(m-k)+1} z). 
\end{align}
Note that the one above is a natural isomorphism preserving $\sigma_\free$, 
hence duality is symplectically implemented also on the torsion-free topological sector.

\subsection{Torsion topological sector}
This is the last contribution we have to consider in order to decompose $\Conf^k(M;\bZ)$. 
Again it contains information of purely topological nature, but it is quite special 
in that it relates to the torsion part of certain cohomology groups. 
An interpretation of these quantities in terms of non-commutativity 
between electric and magnetic fluxes can be found in \cite{FMSa, FMSb}. 
As in the previous sections, we will provide two equivalent ways 
to describe the object of interest, related by an isomorphism induced by the embedding $i_\Sigma: \Sigma \to M$ 
of a (compact) spacelike Cauchy surface $\Sigma$ into the globally hyperbolic spacetime $M$. 
Subsequently, we will introduce a suitable symplectic structure. 
We will refer to the Abelian group 
\begin{equation}
\Topt^k(M) \doteq \Ht^{k,m-k}(M;\bZ) 
\end{equation}
as the \textit{torsion topological sector}. 
Since the embedding $i_\Sigma: \Sigma \to M$ is a retraction, 
homotopy invariance of cohomology implies that the restriction along $i_\Sigma$ 
induces an equivalent description of $\Topt^k(M)$ in terms of cohomology groups of the Cauchy surface $\Sigma$: 
\begin{equation}\label{eqToptIso}
i_\Sigma^\ast: \Topt^k(M) \overset{\simeq}{\longrightarrow} \Ht^{k,m-k}(\Sigma;\bZ). 
\end{equation}
Notice that this is an isomorphism of Abelian groups in contrast to \eqref{eqDynIso}, 
which is an isomorphism of vector spaces. 
\vskip 1em

Adopting the Cauchy surface point of view on $\Topt^k(M)$, it is easy to construct a symplectic structure, 
that essentially arises from the torsion linking form. 
We introduce a non-degenerate $\bT$-valued pairing between $\Ht^p(\Sigma;\bZ)$ 
and $\Ht^{m-p}(\Sigma;\bZ)$ using \eqref{eqTZCup} and \eqref{diaDiffCoho}: 
\begin{align}
\la\cdot,\cdot\ra_\tor: \Ht^p(\Sigma;\bZ) \times \Ht^{m-p}(\Sigma;\bZ) \to \bT 
	&&	(t_\Sigma,t_\Sigma^\prime) \longmapsto (u_\Sigma \smile \jx t_\Sigma^\prime)[\Sigma], 
\end{align}
for any $u_\Sigma \in \H^{p-1}(\Sigma;\bT)$ such that $\be u_\Sigma = t_\Sigma$. Notice that this definition is well-posed on account of the properties of the cup product. 
Using $\la\cdot,\cdot\ra_\tor$, we can introduce the bi-homomorphism: 
\begin{align}
\sigma_\tor^\Sigma: \Ht^{k,m-k}(\Sigma;\bZ) \times \Ht^{k,m-k}(\Sigma;\bZ) \longrightarrow \bT, 
	&&	\big( (t_\Sigma, \tilde t_\Sigma), (t_\Sigma^\prime, \tilde t_\Sigma^\prime) \big) 
			\longrightarrow \la \tilde t_\Sigma, t_\Sigma^\prime \ra_\tor 
								- \la \tilde t_\Sigma^\prime, t_\Sigma \ra_\tor.
\end{align}
Since the pairing $\la\cdot,\cdot\ra_\tor$ is non-degenerate, also $\sigma_\tor^\Sigma$ is such. 
Furthermore, it is clearly antisymmetric, hence $(\Ht^{k,m-k}(\Sigma;\bZ),\sigma_\tor^\Sigma)$ 
is symplectic Abelian group. The isomorphism allows us to transfer 
$\sigma_\tor^\Sigma$ to $\Topt^k(M)$ by setting: 
\begin{equation}\label{eqToptSympl}
\sigma_\tor \doteq \sigma_\tor^\Sigma \circ (i_\Sigma^\ast \times i_\Sigma^\ast): 
	\Topt^k(M) \times \Topt^k(M) \longrightarrow \bT. 
\end{equation}
The argument that makes the definition in \eqref{eqTopfSympl} independent of the choice of Cauchy surface 
can be applied here too to show that also $\sigma_\tor$ does not depend on such choice. 
With the last equation, we have endowed $\Topt^k(M)$ with a natural symplectic structure, 
so that $(\Topt^k(M),\sigma_\tor)$ is a symplectic Abelian group. 
\vskip 1em

Similarly to the dynamical and the torsion-free topological sectors, 
also the torsion topological sector carries a natural duality isomorphism, 
compatible with the symplectic structure $\sigma_\tor$: 
\begin{align}\label{eqn:dualityTor}
\zeta_\tor: \Topt^k(M) \longrightarrow \Topt^{m-k}(M), && (t, \tilde t) \longmapsto (\tilde t, (-1)^{k(m-k)+1} t). 
\end{align}

\subsection{\label{secDecomposition}Symplectically orthogonal decomposition}
Recall that $M$ is an $m$-dimensional globally hyperbolic spacetime 
admitting a compact spacelike Cauchy surface $\Sigma$. 
So far the assumption of a compact Cauchy surface was just meant to simplify our presentation. 
In Theorem \ref{thmSplitting} this assumption will be used in a crucial way. 

We will now present a procedure to decompose orthogonally (however not canonically) 
the symplectic Abelian group $(\Conf^k(M;\bZ),\sigma)$ 
into the three symplectic Abelian groups $(\Dyn^k(M),\sigma_\Dyn)$, 
$(\Topf^k(M),\sigma_\free)$ and $(\Topt^k(M),\sigma_\tor)$ 
introduced in the previous sections. 
To achieve this result, we need to choose (non-canonical) splittings 
for the short exact sequences in \eqref{diaConf}. 
Before we prove that splittings of the desired type actually exist, 
let us illustrate the assumptions that ensure the compatibility 
of our decomposition with the relevant symplectic structures. 

\begin{lemma}\label{lemSplitting}
Consider a commutative diagram 
\begin{equation}
\xymatrix{
		&	0 \ar[d]	&	0 \ar[d]	&	0 \ar[d]														\\
0 \ar[r]	&	A_1 \ar[r]^-{i_1} \ar[d]_-{a_1}
					&	E_1 \ar[r]_-{p_1} \ar[d]_-{e_1}
								&	B_1 \ar[r] \ar[d]^-{b_1}	\ar@/_1pc/[l]_-{\pi_1}		&	0	\\
0 \ar[r]	&	A_2 \ar[r]^-{i_2} \ar[d]^-{a_2}
					&	E_2 \ar[r]^-{p_2} \ar[d]_-{e_2}
								&	B_2 \ar[r] \ar[d]^-{b_2}								&	0	\\
0 \ar[r]	&	A_3 \ar[r]_-{i_3} \ar[d] \ar@/^1pc/[u]^-{\alpha_2}
					&	E_3 \ar[r]_-{p_3} \ar[d]
								&	B_3 \ar[r] \ar[d] \ar[ul]^-\chi						&	0	\\
		&	0		&	0		&	0
}
\end{equation}
of Abelian groups whose rows and columns are short exact sequences 
and assume that $\pi_1$, $\alpha_2$, $\chi$ split the relevant short exact sequences, i.e. 
\begin{align}\label{eqSplitId}
p_1 \circ \pi_1 = \id_{B_1},	&&		a_2 \circ \alpha_2 = \id_{A_3},	&&	b_2 \circ p_2 \circ \chi = \id_{B_3}. 
\end{align}
Then 
\begin{equation}
I \doteq e_1 \circ \pi_1 + e_1 \circ i_1 + \chi + i_2 \circ \alpha_2: 
	B_1 \oplus A_1 \oplus B_3 \oplus A_3 \longrightarrow E_2 
\end{equation}
is an isomorphism of Abelian groups. 
\end{lemma}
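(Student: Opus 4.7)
The plan is to construct an explicit inverse for $I$ by combining the three given splittings into a direct sum decomposition of $E_2$ compatible with the middle row. The key observation is that the three splittings, although they split three distinct short exact sequences (respectively the top row, the left column, and a mixed one involving $b_2 \circ p_2$), can be assembled to split the middle row $0 \to A_2 \to E_2 \to B_2 \to 0$, at which point the four summands fall out of two nested direct sum decompositions.

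First, the identities \eqref{eqSplitId} ensure that $\alpha_2$ is a section of the column map $a_2$ and that $p_2 \circ \chi$ is a section of the column map $b_2$. Since the left and right columns are short exact by hypothesis, standard arguments for split exact sequences of Abelian groups yield the isomorphisms
\begin{align*}
A_1 \oplus A_3 \overset{\simeq}{\longrightarrow} A_2, \quad &
(\alpha_1,\alpha_3) \longmapsto a_1(\alpha_1) + \alpha_2(\alpha_3), \\
B_1 \oplus B_3 \overset{\simeq}{\longrightarrow} B_2, \quad &
(\beta_1,\beta_3) \longmapsto b_1(\beta_1) + p_2(\chi(\beta_3)).
\end{align*}
Using the splitting $\pi_1$ of $p_1$ together with $\chi$, I would then define a section $s : B_2 \to E_2$ of $p_2$ on the preceding decomposition of $B_2$ by
\begin{equation*}
s\bigl(b_1(\beta_1) + p_2(\chi(\beta_3))\bigr) := e_1(\pi_1(\beta_1)) + \chi(\beta_3).
\end{equation*}
Well-definedness follows from the isomorphism $B_1 \oplus B_3 \simeq B_2$, while $p_2 \circ s = \id_{B_2}$ is a direct consequence of the commutativity relation $p_2 \circ e_1 = b_1 \circ p_1$ together with $p_1 \circ \pi_1 = \id_{B_1}$. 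Hence the middle row splits and yields an isomorphism $A_2 \oplus B_2 \overset{\simeq}{\to} E_2$, $(y,b) \mapsto i_2(y) + s(b)$.

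Chaining these three isomorphisms provides an isomorphism $B_1 \oplus A_1 \oplus B_3 \oplus A_3 \overset{\simeq}{\to} E_2$; invoking the remaining commutativity relation $i_2 \circ a_1 = e_1 \circ i_1$ to rewrite $i_2(a_1(\alpha_1))$ as $e_1(i_1(\alpha_1))$, one sees by inspection that this composition coincides with $I$. The main obstacle is the construction of $s$: the splittings in the hypothesis are a priori independent, and one must verify that $\pi_1$ and $\chi$ interact correctly through the $B_1 \oplus B_3$ decomposition of $B_2$ to produce a well-defined section of $p_2$. Once this is established, the rest is routine diagram chasing in the category of Abelian groups.
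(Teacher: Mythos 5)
Your proof is correct, and it takes a genuinely cleaner route than the paper's. The paper proceeds by explicitly constructing an inverse $J: E_2 \to B_1 \oplus A_1 \oplus B_3 \oplus A_3$, which requires first introducing a host of auxiliary morphisms: the retractions $\beta_1: B_2 \to B_1$, $\iota_3: E_3 \to A_3$, $\iota_1: E_1 \to A_1$, $\alpha_1: A_2 \to A_1$, then the central splittings $\epsilon_2: E_3 \to E_2$ and $\pi_2: B_2 \to E_2$ (obtained from the input data by formulas such as $\pi_2 = \chi \circ b_2 + e_1 \circ \pi_1 \circ \beta_1$), and finally the retractions $\epsilon_1: E_2 \to E_1$, $\iota_2: E_2 \to A_2$ needed to assemble $J$ and verify both composites $I \circ J$ and $J \circ I$. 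Your argument sidesteps almost all of this: you exploit the given sections $\alpha_2$ and $p_2 \circ \chi$ to realize $A_2 \simeq A_1 \oplus A_3$ and $B_2 \simeq B_1 \oplus B_3$ directly from the split-exactness of the columns, construct the single section $s: B_2 \to E_2$ of the middle row, and read off $I$ as the resulting composite of isomorphisms. In fact your $s$ agrees with the paper's $\pi_2$ (the latter's retraction $\beta_1$ is just the first coordinate projection in your $B_1 \oplus B_3$ decomposition), so the two proofs produce the same splitting of $E_2$ — but your presentation delegates the work to the standard isomorphism theorem for split sequences of Abelian groups, avoiding the need to exhibit and verify the inverse map by hand. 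The one place that deserves to be spelled out a little more (you flag it as the ``main obstacle'') is the well-definedness of $s$, which as you correctly note is immediate once one observes that $(b_1, p_2 \circ \chi): B_1 \oplus B_3 \to B_2$ is a bijection, so every element of $B_2$ has a unique presentation in the required form.
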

\begin{proof}
Commutativity of the bottom right square and the last identity of \eqref{eqSplitId} entail that 
$\beta_2 \doteq p_2 \circ \chi: B_3 \to B_2$ splits the left column, 
while $\pi_3 \doteq e_2 \circ \chi: B_3 \to E_3$ splits the bottom row. 
Therefore, by exactness there exist unique homomorphisms $\beta_1: B_2 \to B_1, \iota_3: E_3 \to A_3$ such that 
\begin{align}\label{eqn:auxsplit1}
\beta_2 \circ b_2 + b_1 \circ \beta_1 = \id_{B_2},	&&	\pi_3 \circ p_3 + i_3 \circ \iota_3 = \id_{E_3}. 
\end{align}
Similarly, the first two identities of \eqref{eqSplitId} entail that there exist unique homomorphisms 
$\iota_1: E_1 \to A_1$ and $\alpha_1: A_2 \to A_1$ such that 
\begin{align}\label{eqn:auxsplit2}
\pi_1 \circ p_1 + i_1 \circ \iota_1 = \id_{E_1},	&&	\alpha_2 \circ a_2 + a_1 \circ \alpha_1 = \id_{A_2}. 
\end{align}
Combining \eqref{eqn:auxsplit1} and \eqref{eqn:auxsplit2}, 
we can also split the central column and row. In fact, introducing 
\begin{align}\label{eqCentralSplit}
\epsilon_2 \doteq \chi \circ p_3 + i_2 \circ \alpha_2 \circ \iota_3: E_3 \longrightarrow E_2, 
	&& \pi_2 \doteq \chi \circ b_2 + e_1 \circ \pi_1 \circ \beta_1: B_2 \longrightarrow E_2, 
\end{align}
it is easy to confirm that $e_2 \circ \epsilon_2 = \id_{E_3}$ and $p_2 \circ \pi_2 = \id_{B_2}$. 
In particular, by exactness there exist unique homomorphisms 
$\epsilon_1: E_2 \to E_1, \iota_2: E_2 \to A_2$ such that 
\begin{align}
\pi_2 \circ p_2 + i_2 \circ \iota_2 = \id_{E_2},	&&	\epsilon_2 \circ e_2 + e_1 \circ \epsilon_1 = \id_{E_2}. 
\end{align}
With these preparations, we obtain a candidate for the inverse of $I$: 
\begin{equation}
J \doteq \big( (p_1,\iota_1) \oplus (p_3,\iota_3) \big) \circ (\epsilon_1,e_2): 
	E_2 \longrightarrow B_1 \oplus A_1 \oplus B_3 \oplus A_3. 
\end{equation}
To confirm that $J$ is the inverse of $I$, observe that \eqref{eqCentralSplit} entails 
the identities $\epsilon_2 \circ \pi_3 = \chi$ and $\epsilon_2 \circ i_3 = i_2 \circ \alpha_2$, hence 
\begin{equation}
I = (e_1 + \epsilon_2) \circ \big( (\pi_1 + i_1) \oplus (\pi_3 + i_3) \big): 
	B_1 \oplus A_1 \oplus B_3 \oplus A_3 \longrightarrow E_2. 
\end{equation}
Using the splitting identities, we conclude that $I \circ J$ and $J \circ I$ are the appropriate identity morphisms. 
\end{proof}

In the specific case of \eqref{diaConf}, the splittings we are interested in are depicted in the diagram below: 
\begin{equation}\label{diaSplittings}
\xymatrix@C=2.5em{
		&	0 \ar[d]		&	0 \ar[d]		&	0\ar[d]														\\
0 \ar[r]	&	\frac{\H^{k-1,m-k-1}(M;\bR)}{\Hf^{k-1,m-k-1}(M;\bZ)}
				\ar[r]^-{\nux \times \nux} \ar[d]_-{\mux \times \mux}
						&	\TT^k(M;\bZ) \ar[r]_-{\dd_1} \ar[d]_-{\tr \times \tr} 
										&	\Dyn^k(M) 
												\ar[r] \ar[d]^-\subseteq \ar@/_1.5pc/[l]_-\eta 		&	0	\\
0 \ar[r]	&	\H^{k-1,m-k-1}(M;\bT) \ar[r]^-{\fl \times \fl} \ar[d]^-{\be \times \be}
						&	\Conf^k(M;\bZ) \ar[r]^-{\cu_1} \ar[d]_-{\ch \times \ch}
										&	\OmegaZ^k \cap \ast\,\OmegaZ^{m-k}(M)
												\ar[r] \ar[d]^-{([\,\cdot\,],[\ast^{-1}\,\cdot\,])}	&	0	\\
0 \ar[r]	&	\Topt^k(M) \ar[r]_-{\jx \times \jx} \ar[d] \ar@/^1pc/[u]^-\xi 
						&	\H^{k,m-k}(M;\bZ) \ar[r]_-{\qx \times \qx} \ar[d]
										&	\Hf^{k,m-k}(M;\bZ) \ar[r] \ar[d] \ar[ul]^-\chi 		&	0	\\
		&	0			&	0			&	0
}
\end{equation}
More explicitly, the splitting conditions read 
\begin{align}\label{eqSplittings}
(\qx \times \qx) \circ (\ch \times \ch) \circ \chix = \id_{\Hf^{k,m-k}(M;\bZ)}, 
	&&	(\beta \times \beta) \circ \xix = \id_{\Topt^k(M)},		&&	\dd_1 \circ \etax = \id_{\Dyn^k(M)}. 
\end{align}

\begin{lemma}\label{lemSymplSplit}
Let $M$ be an $m$-dimensional globally hyperbolic spacetime 
admitting a compact spacelike Cauchy surface $\Sigma$. 
Assuming splittings $\chix: \Hf^{k,m-k}(M;\bZ) \to \Conf^k(M;\bZ)$, $\etax: \Dyn^k(M) \to \TT^k(M;\bZ)$ 
and $\xix: \Topt^k(M) \to \H^{k-1,m-k-1}(M;\bT)$ according to \eqref{diaSplittings} and \eqref{eqSplittings}, 
the following identities are fulfilled: 
\begin{subequations}\label{eqn:symplidentities}
\begin{align}
\sigma \circ \Big( \big( (\tr \times \tr) \circ \etax \big) \times \big( (\tr \times \tr) \circ \etax \big) \Big) 
	&	= \sigma_\Dyn,	\\
\sigma \circ \Big( \big( (\fl \times \fl) \circ \xix \big) \times \big( (\fl \times \fl) \circ \xix \big) \Big) 
	&	= \sigma_\tor,	\\
\sigma \circ \big( (\tr \times \tr) \times (\fl \times \fl) \big)	&	= 0. 
\end{align}
Furthermore, for each 
$(u,\tilde u, z, \tilde z), (u^\prime,\tilde u^\prime, z^\prime, \tilde z^\prime) \in \Topf^k(M)$, one has 
\begin{equation}
\sigma \big( (\tr \times \tr) (\nux \times \nux) (u, \tilde u), \chix (z^\prime, \tilde z^\prime) \big) 
	+ \sigma \big( \chix (z, \tilde z), (\tr \times \tr) (\nux \times \nux) (u^\prime, \tilde u^\prime) \big) 
	= \sigma_\free \big( (u,\tilde u, z, \tilde z), (u^\prime,\tilde u^\prime, z^\prime, \tilde z^\prime) \big). 
\end{equation}
\end{subequations}
\end{lemma}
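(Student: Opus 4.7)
The strategy is to transport all four identities to the compact spacelike Cauchy surface $\Sigma$ via the isomorphism \eqref{eqCauchyIso}, so that $\sigma$ is replaced by $\sigma^\Sigma$ of \eqref{eqn:symplformSigma}, which is expressed in terms of the differential cohomology ring product on $\Sigma$ and the evaluation on the fundamental class $[\Sigma]$. Each statement then reduces to a ring-theoretic manipulation in $\dH^*(\Sigma;\bZ)$, whose primary tools are the ring-compatibility rules \eqref{eqRingComp}, the graded commutativity of the differential cohomology product, and the exactness of rows and columns of diagram \eqref{diaDiffCoho}---most crucially $\cu \circ \fl = 0$, $\cu \circ \tr = \dd$, $\ch \circ \fl = \jx \circ \be$, and the commutativity identity $\tr \circ \nux = \fl \circ \mux$ from the top-left square.

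I would first dispatch the symplectic orthogonality statement, the third identity: the compatibility $\tr[A] \cdot \fl u = \tr[A \wedge \cu \fl u]$ vanishes because $\cu \circ \fl = 0$, and the reversed product vanishes as well by graded commutativity, so every summand in the relevant $\sigma^\Sigma$-pairing is trivially zero. For identity (1), I would write $\etax(\dd A) = ([A],[\tilde A])$, use $\tr[\tilde A] \cdot \tr[A'] = \tr[\tilde A \wedge \dd A']$ via $\cu \circ \tr = \dd$, and observe that the isomorphism $\H^{m-1}(\Sigma;\bR)/\Hf^{m-1}(\Sigma;\bZ) \simeq \bT$ identifies the evaluation of $\tr[\,\cdot\,]$ on $[\Sigma]$ with the integral defining $\sigma_\Dyn$ in \eqref{eqDynSympl}, modulo the integer ambiguity which is absorbed by the quotient to $\bT$. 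Identity (2) is dual: $\fl u \cdot \fl u' = \fl(u \smile \ch \fl u') = \fl(u \smile \jx \be u')$, and antisymmetrization combined with the definition of $\la \cdot,\cdot \ra_\tor$ reproduces $\sigma_\tor^\Sigma$ term by term.

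The main obstacle is identity (4), which requires careful sign bookkeeping. Substituting $\tr \nux = \fl \mux$, one has $(\tr \times \tr)(\nux \times \nux)(u,\tilde u) = (\fl \mux u, \fl \mux \tilde u)$, so each $\sigma^\Sigma$-pairing against $\chix(z', \tilde z') = (h'_\Sigma, \tilde h'_\Sigma)$ produces two contributions. The first, $\fl \mux \tilde u \cdot h'_\Sigma = \fl(\mux \tilde u \smile \ch h'_\Sigma)$, evaluated on $[\Sigma]$, equals $\la \tilde u, z' \ra_\free$ because the splitting condition $\qx \circ \ch \circ \chix = \id$ guarantees $\qx \ch h'_\Sigma = z'$. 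The second, $\tilde h'_\Sigma \cdot \fl \mux u$, must be rewritten using graded commutativity to swap the $(m-k)$-degree class past the $k$-degree one; this produces the sign $(-1)^{k(m-k)}$ and yields $(-1)^{k(m-k)} \la u, \tilde z'\ra_\free$. Processing the second $\sigma$-term of the LHS of (4) the same way, with arguments swapped, gives $-\la \tilde u', z\ra_\free + (-1)^{k(m-k)} \la u', \tilde z\ra_\free$; the sum matches $\sigma_\free^\Sigma$ of \eqref{eqTopfSymplSigma} precisely, the sign factors $(-1)^{k(m-k)}$ there being exactly those produced by graded commutativity. A subtle point that I would address explicitly is that $\ch h'_\Sigma$ is only a lift of $z'_\Sigma$ modulo the image of $\jx$; however, the well-posedness of $\la\cdot,\cdot\ra_\free$ in \eqref{eqFreePairing} is designed precisely to absorb this ambiguity, so the computation is insensitive to the choice of splitting $\chix$ beyond the splitting condition itself.
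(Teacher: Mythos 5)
Your proposal is correct and follows essentially the same route as the paper: transport everything to $\Sigma$ via \eqref{eqCauchyIso}, apply the ring-compatibility rules \eqref{eqRingComp}, graded commutativity, and the commutativity relations in \eqref{diaDiffCoho} (in particular $\cu\fl=0$, $\ch\tr=0$, $\ch\fl=\jx\be$, $\tr\nux=\fl\mux$), and evaluate on $[\Sigma]$. The only cosmetic difference is that for the orthogonality identity you invoke graded commutativity to kill the second term, whereas the paper instead uses $\ch\tr=0$ directly; both are valid, and your sign bookkeeping for the fourth identity matches the paper's.
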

\begin{proof}
It is easier to show the desired identities using the equivalent definition of the relevant symplectic structures 
in terms of data on the Cauchy surface $\Sigma$. The proof relies uses \eqref{eqRingComp} extensively. 
Consider $\dd A, \dd A^\prime \in \Dyn^k(M)$ and denote the images of $(\tr \times \tr) \etax \dd A$ 
and $(\tr \times \tr) \etax \dd A^\prime$ along the isomorphism \eqref{eqCauchyIso} 
by $(\tr [A_\Sigma], \tr [\tilde A_\Sigma]) \in \dH^{k,m-k}(\Sigma;\bZ)$ 
and respectively by $(\tr [A^\prime_\Sigma], \tr [\tilde A^\prime_\Sigma]) \in \dH^{k,m-k}(\Sigma;\bZ)$. 
According to Section \ref{secObs}, 
to determine $\sigma \big( (\tr \times \tr) \etax \dd A, (\tr \times \tr) \etax \dd A^\prime \big) \in \bT$ 
we have to consider 
\begin{align}
\tr [\tilde A_\Sigma] \cdot \tr [A^\prime_\Sigma] - \tr [\tilde A^\prime_\Sigma] \cdot \tr [A_\Sigma] 
&	= \tr \big[ \tilde A_\Sigma \wedge \cu \tr [A^\prime_\Sigma] \big] - \tr \big[ \tilde A^\prime_\Sigma \wedge \cu \tr [A_\Sigma] \big]	\nonumber \\
&	= \tr [\tilde A_\Sigma \wedge \dd_\Sigma A^\prime_\Sigma] 
		- \tr [\tilde A^\prime_\Sigma \wedge \dd_\Sigma A_\Sigma]. 
\end{align}
Notice the use of \eqref{eqRingComp} to establish the first equality. 
Evaluation on the fundamental class $[\Sigma] \in \H_{m-1}(\Sigma)$ of $\Sigma$ 
concludes the proof of the first identity of \eqref{eqn:symplidentities} 
once one recalls also Section \ref{secDyn} and \eqref{eqSplittings}, which entail 
\begin{equation}
i_\Sigma^\ast \dd A = i_\Sigma^\ast \dd_1 \etax \dd A = \dd_1 ([A^\prime_\Sigma], [\tilde A^\prime_\Sigma]). 
\end{equation}
A similar argument proves the second identity of \eqref{eqn:symplidentities} too. 
\vskip .5em

To prove the third identity, consider $([A_\Sigma],[\tilde A_\Sigma]) 
\in \Omega^{k-1,m-k-1}(\Sigma) / \OmegaZ^{k-1,m-k-1}(\Sigma)$ 
and $(u_\Sigma, \tilde u_\Sigma) \in \H^{k-1,m-k-1}(\Sigma;\bT)$ and evaluate 
$\sigma^\Sigma \big( (\tr[A_\Sigma], \tr[\tilde A_\Sigma]), (\fl u_\Sigma, \fl \tilde u_\Sigma)\big)$. 
According to Section \ref{secObs}, this involves the following computation: 
\begin{align}\nonumber
\tr[\tilde A_\Sigma] \cdot \fl u_\Sigma - \fl \tilde u_\Sigma \cdot \tr[A_\Sigma] 
&	= \tr[\tilde A_\Sigma \wedge \cu \fl u_\Sigma] - \fl (\tilde u_\Sigma \smile \ch \tr[A_\Sigma]) \\
&	= \tr[\tilde A_\Sigma \wedge 0] - \fl (\tilde u_\Sigma \smile 0) = 0. 
\end{align}
Notice that we are again using \eqref{eqRingComp}. 
\vskip .5em

For the last identity of \eqref{eqn:symplidentities}, 
take $(u_\Sigma,\tilde u_\Sigma), (u_\Sigma^\prime,\tilde u_\Sigma^\prime) 
\in \H^{k-1,m-k-1}(\Sigma;\bR) / \Hf^{k-1,m-k-1}(\Sigma;\bZ)$ and 
$(z, \tilde z), (z^\prime, \tilde z^\prime) \in \Hf^{k,m-k}(M;\bZ)$ and introduce 
\begin{align}
(h_\Sigma, \tilde h_\Sigma) 
		\doteq i_\Sigma^\ast \chix (z, \tilde z) \in \dH^{k,m-k}(\Sigma;\bZ), 
	&&	(h_\Sigma^\prime, \tilde h_\Sigma^\prime) 
		\doteq i_\Sigma^\ast \chix (z^\prime, \tilde z^\prime) \in \dH^{k,m-k}(\Sigma;\bZ), 
\end{align}
where \eqref{eqCauchyIso} has been used. In view of Section \ref{secObs} computing
$\sigma^\Sigma \big( (\tr \nux u_\Sigma, \tr \nux \tilde u_\Sigma), 
(h_\Sigma^\prime, \tilde h_\Sigma^\prime) \big)$ involves the following calculation, 
based on \eqref{diaInitialData} and \eqref{eqRingComp}: 
\begin{align}\label{eqn:auxsplit3}
\tr \nux \tilde u_\Sigma \cdot h_\Sigma^\prime - \tilde h_\Sigma^\prime \cdot \tr \nux u_\Sigma \nonumber
&	= \fl \mux \tilde u_\Sigma \cdot h_\Sigma^\prime - \tilde h_\Sigma^\prime \cdot \fl \mux u_\Sigma	\\
&	= \fl ( \mux \tilde u_\Sigma \smile \ch h_\Sigma^\prime) 
		- (-1)^{k(m-k)} \fl (\mux u_\Sigma \smile \ch \tilde h_\Sigma^\prime). 
\end{align}
On account of \eqref{eqSplittings}, we observe that the element
$(\qx \times \qx) (\ch \times \ch) (h_\Sigma^\prime, \tilde h_\Sigma^\prime)$ 
is precisely the restriction $(z_\Sigma^\prime, \tilde z_\Sigma^\prime)$ to $\Sigma$ 
of $(z^\prime, \tilde z^\prime)$. 
Therefore, evaluating \eqref{eqn:auxsplit3} on the fundamental class 
$[\Sigma] \in \H_{m-1}(\Sigma)$ of $\Sigma$ and recalling \eqref{eqFreePairing}, we obtain 
\begin{equation}
\sigma^\Sigma \big( (\tr \nux u_\Sigma, \tr \nux \tilde u_\Sigma), 
		(h_\Sigma^\prime, \tilde h_\Sigma^\prime) \big) 
	= \la \tilde u_\Sigma, z_\Sigma^\prime \ra_\free 
		- (-1)^{k(m-k)} \la u_\Sigma, \tilde z_\Sigma^\prime \ra_\free. 
\end{equation}
A similar argument shows that 
\begin{equation}
\sigma^\Sigma \big( (h_\Sigma, \tilde h_\Sigma) \big), 
		(\tr \nux u_\Sigma^\prime, \tr \nux \tilde u_\Sigma^\prime) \big) 
	= - \la \tilde u_\Sigma^\prime, z_\Sigma \ra_\free 
		+ (-1)^{k(m-k)} \la u_\Sigma^\prime, \tilde z_\Sigma \ra_\free. 
\end{equation}
We conclude combining the last two equations and recalling Section \ref{secTopf}. 
\end{proof}

Lemma \ref{lemSplitting} entails that the one defined below is an isomorphism of Abelian groups: 
\begin{align}\label{eqDecomposition}
\Dyn^k(M) \oplus \Topf^k(M) \oplus \Topt^k(M)		&	\overset{\simeq}{\longrightarrow} \Conf^k(M;\bZ),	\\
\big( \dd A, (u, \tilde u, z, \tilde z), (t, \tilde t) \big)	&	\longmapsto (\tr \times \tr) \big( \etax \dd A 
	+ (\nux \times \nux) (u, \tilde u) \big) + \chix (z, \tilde z) + (\fl \times \fl) \xix (t, \tilde t). \nonumber
\end{align}
Furthermore, if we assume that the splittings fulfil the compatibility conditions  
\begin{align}\label{eqSymplComp}
\sigma \circ (\chix \times \chix) = 0, 
	&&	\sigma \circ \Big( \big( (\fl \times \fl) \circ \xix \big) \times \chix \Big) = 0, 
	&&	\sigma \circ \Big( \big( (\tr \times \tr) \circ \etax \big) \times \chix \Big) = 0 
\end{align}
with respect to the symplectic structure $\sigma$ on $\Conf^k(M;\bZ)$, 
recalling Lemma \ref{lemSymplSplit}, we conclude that 
\eqref{eqDecomposition} is also an isomorphism of symplectic Abelian groups. 
The symplectic structure on the source is obtained combining those of each summand, 
cf.\ \eqref{eqDynSympl}, \eqref{eqTopfSympl} and \eqref{eqToptSympl}. 
In fact, Lemma \ref{lemSymplSplit} and \eqref{eqSymplComp} entail the following identity: 
\begin{align}
\sigma & \left( 
\begin{array}{l}
(\tr \times \tr) \etax \dd A + (\tr \times \tr) (\nux \times \nux) (u, \tilde u) 
	+ \chix (z, \tilde z) + (\fl \times \fl) \xix (t, \tilde t), \\
(\tr \times \tr) \etax \dd A^\prime + (\tr \times \tr) (\nux \times \nux) (u^\prime, \tilde u^\prime) 
	+ \chix (z^\prime, \tilde z^\prime) + (\fl \times \fl) \xix (t^\prime, \tilde t^\prime) 
\end{array}
\right) \\
&	= \sigma_\Dyn (\dd A, \dd A^\prime) 
	+ \sigma_\free \big( (u, \tilde u, z, \tilde z), (u^\prime, \tilde u^\prime, z^\prime, \tilde z^\prime) \big) 
	+ \sigma_\tor \big( (t, \tilde t), (t^\prime, \tilde t^\prime) \big). 
\end{align}

\begin{theorem}\label{thmSplitting}
Let $M$ be an $m$-dimensional globally hyperbolic spacetime 
admitting a compact spacelike Cauchy surface $\Sigma$. 
Then there exist splittings as per \eqref{diaSplittings} and \eqref{eqSplittings} 
fulfilling \eqref{eqSymplComp} and compatible with dualities, 
cf.\ \eqref{eqn:duality}, \eqref{eqn:dualityFree}, \eqref{eqn:dualityTor}. 
\end{theorem}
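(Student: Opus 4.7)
The plan is to prove the theorem in three stages.

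\emph{Stage 1: existence of unconstrained splittings.} Both $\etax$ and $\xix$ split short exact sequences whose kernels are the real tori $\H^\ast(M;\bR)/\Hf^\ast(M;\bZ)$, which are divisible (hence injective) abelian groups; divisibility of the kernel forces each of the relevant sequences to split. The splitting $\chix$ (call it $\chix_0$ for now) exists because its codomain $\Hf^{k,m-k}(M;\bZ)$ is finitely generated and torsion-free, hence free abelian and therefore projective.

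\emph{Stage 2: symplectic orthogonality.} With $\etax$ and $\xix$ kept fixed, I would modify $\chix_0$ to a new splitting $\chix = \chix_0 + \delta$ fulfilling \eqref{eqSymplComp}. Since $\chix$ must still satisfy \eqref{eqSplittings}, the correction $\delta$ takes values in $\ker((\qx\times\qx)\circ(\ch\times\ch))$. By Lemma \ref{lemSplitting}, applied to the sub-diagram of \eqref{diaConf} obtained by discarding the bottom row, this kernel decomposes, via $(\tr\times\tr)\etax$, $(\tr\times\tr)(\nux\times\nux)$ and $(\fl\times\fl)\xix$, as a direct sum $\Dyn^k(M) \oplus \H^{k-1,m-k-1}(M;\bR)/\Hf^{k-1,m-k-1}(M;\bZ) \oplus \Topt^k(M)$. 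Hence $\delta$ is parametrised by homomorphisms $f$, $h$, $g$ from $\Hf^{k,m-k}(M;\bZ)$ to the three summands. Using Lemma \ref{lemSymplSplit}, the three conditions in \eqref{eqSymplComp} become a coupled system for $f$, $g$, $h$ involving the residual pairings of $\chix_0$ against the images of $(\tr\times\tr)\etax$, $(\fl\times\fl)\xix$ and $\chix_0$ itself. This system is solvable step by step: first one determines $f$ and $g$ as representatives of certain $\bT$-valued functionals on $\Dyn^k(M)$ and $\Topt^k(M)$ with respect to $\sigma_\Dyn$ and $\sigma_\tor$; then $h$ is fixed through the $\Topf^k(M)$-coupling encoded in the last identity of Lemma \ref{lemSymplSplit}.

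\emph{Stage 3: duality compatibility.} Given splittings in degree $k$ fulfilling \eqref{eqSymplComp}, I would define the corresponding splittings in degree $m-k$ by $\zeta$-conjugation — for instance $\chix^{(m-k)} \doteq \zeta \circ \chix^{(k)} \circ \zeta_\free^{-1}$ (composed with the obvious projection onto the $(z,\tilde z)$-component) — and analogously for $\etax$ and $\xix$. Naturality of the dualities in \eqref{diaConf} ensures that the conjugates are genuine splittings; that $\zeta$ preserves $\sigma$ (as observed after \eqref{eqn:symplform}) together with the analogous properties of $\zeta_\free$ and $\zeta_\tor$ (cf.\ \eqref{eqn:dualityFree}, \eqref{eqn:dualityTor}) guarantees that \eqref{eqSymplComp} persists.

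The hardest step is Stage 2. Weak non-degeneracy of $\sigma_\Dyn$ and $\sigma_\tor$ does not on its own ensure that an arbitrary $\bT$-valued functional is represented by an element of $\Dyn^k(M)$ or $\Topt^k(M)$. What makes the adjustments possible is the specific form of the functionals produced by $\chix_0$: they come from integration over $\Sigma$ against the characteristic class and the curvature of $\chix_0(z,\tilde z)$, and upon invoking \eqref{eqRingComp} together with Stokes' theorem on the compact Cauchy surface $\Sigma$ one can exhibit explicit representing elements. This is precisely where the standing assumption of a compact Cauchy surface is used in an essential way, as anticipated before the statement of the theorem.
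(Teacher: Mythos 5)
Your Stage 1 is fine: divisibility of the torus kernel gives $\etax'$ and $\xix'$, and freeness/projectivity of $\Hf^{k,m-k}(M;\bZ)$ gives $\chix_0$. Stage 3 (conjugating by the dualities) is also sound and essentially matches what the paper does by ``flipping components''.

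Stage 2, however, has a genuine gap, and it is exactly where the whole difficulty lies. You hold $\etax$ and $\xix$ fixed and try to repair things by modifying $\chix_0$ alone, with the correction $\delta$ decomposed into components $f$ (into $\Dyn^k$), $g$ (into the torus), $h$ (into $\Topt^k$). For the orthogonality $\sigma\circ((\tr\times\tr)\etax \times \chix)=0$, the only part of $\delta$ that couples to $(\tr\times\tr)\etax$ under $\sigma$ is the $\Dyn$ part: by Lemma~\ref{lemSymplSplit}, $\sigma((\tr\times\tr)\etax(\dd A),\,(\tr\times\tr)(\nux\times\nux)g(z))=0$ (because $\tr\nux=\fl\mux$ and $\sigma\circ((\tr\times\tr)\times(\fl\times\fl))=0$), and likewise $\sigma((\tr\times\tr)\etax(\dd A),\,(\fl\times\fl)\xix\,h(z))=0$. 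So you would need, for every $(z,\tilde z)$, an element $f(z,\tilde z)\in\Dyn^k(M)$ with
$\sigma_\Dyn(\dd A,f(z,\tilde z)) = -\sigma\big((\tr\times\tr)\etax(\dd A),\,\chix_0(z,\tilde z)\big)$
for all $\dd A$. This is not solvable in general: writing things on $\Sigma$, the right-hand side is $\int_\Sigma\tilde A_\Sigma\wedge\cu h_\Sigma-(-1)^{k(m-k)}\int_\Sigma A_\Sigma\wedge\cu\tilde h_\Sigma \bmod\bZ$, and the curvature $\cu h_\Sigma$ has a nontrivial harmonic (cohomological) part determined by $z_\Sigma$; the contribution of that harmonic part against the harmonic class of $\etax_2(\dd\tilde A)$ cannot be written as $\sigma_\Dyn(\dd A,\,\cdot\,)$, which only pairs exact data. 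The pairing $\sigma_\Dyn$ is merely weakly non-degenerate, and this functional is simply outside its range; invoking Stokes and \eqref{eqRingComp} does not change this. You acknowledge the representability issue but your resolution (``explicit representing elements'') does not exist here.

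The paper avoids this entirely by correcting $\etax$ and $\xix$ rather than $\chix$. It first builds a \emph{self-orthogonal} $\chix$ directly (picking bases of $\Hf^{k,m-k}(\Sigma;\bZ)$, lifting them, and subtracting harmonic $\fl\mux$-corrections so that $(\tilde h_{\tilde\imath}\cdot h_i)[\Sigma]=0$). Then, for condition (3), it sets $\etax=\etax'-(\nu\times\nu)\circ\Delta_\eta$, where for fixed $\dd A$ the map $z\mapsto\sigma((\tr\times\tr)\etax'(\dd A),\chix(z,\tilde z))$ is a character on the finitely generated free group $\Hf^{k,m-k}(\Sigma;\bZ)$, so Pontryagin duality (the torus being the Pontryagin dual of $\Hf$) yields a unique $\Delta_\eta(\dd A)$ in the torus; the last identity of Lemma~\ref{lemSymplSplit} then shows the correction cancels the residue exactly, cf.~\eqref{eqEtaChiOrthogonal}. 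The analogous $\Delta_\xi$ handles condition (2). The essential point you missed is that the residual pairings must be absorbed into the \emph{torus-valued} degrees of freedom of $\etax$ and $\xix$, not into $\chix$: only the torus sits in Pontryagin duality with $\Hf^{k,m-k}(\Sigma;\bZ)$, which is what makes the cancellation possible and is where compactness of $\Sigma$ enters.
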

\begin{proof}
In practice, it is easier to work using the equivalent description of $\Conf^k(M;\bZ)$ in terms of initial data 
provided by the restriction along the embedding $i_\Sigma: \Sigma \to M$ 
of the spacelike Cauchy surface $\Sigma$ into the spacetime $M$. 
In particular, we will construct the splittings with reference to \eqref{diaInitialData}. 
The actual statement of the theorem is then obtained via the isomorphism 
relating \eqref{diaConf} to \eqref{diaInitialData}, see \eqref{eqCauchyIso}. 
\vskip .5em

We start constructing $\chix: \Hf^{k,m-k}(\Sigma;\bZ) \to \dH^{k,m-k}(\Sigma;\bZ)$ 
such that $(\qx \times \qx) \circ (\ch \times \ch) \circ \chix = \id_{\Hf^{k,m-k}(\Sigma;\bZ)}$ 
and $\sigma^\Sigma \circ (\chix \times \chix) = 0$. 
By definition $\Hf^p(\Sigma;\bZ)$ is a free Abelian group. In particular, we can choose bases 
\begin{align}
\{z_i:\, i = 1, \ldots, n\} \subseteq \Hf^k(\Sigma;\bZ), 
	&&	\{\tilde z_{\tilde i}:\, \tilde i = 1, \ldots, \tilde n\} \subseteq \Hf^{m-k}(\Sigma;\bZ). 
\end{align} 
Since both $\qx: \H^p(\Sigma;\bZ) \to \Hf^p(\Sigma;\bZ)$ and $\ch: \dH^p(\Sigma;\bZ) \to \H^p(\Sigma;\bZ)$ 
are surjective, we can choose 
\begin{subequations}
\begin{align}
\{h_i:\, i = 1, \ldots, n\} \subseteq \dH^k(\Sigma;\bZ), 
	&&	\{\tilde h^\prime_{\tilde i}:\, \tilde i = 1, \ldots, \tilde n\} \subseteq \dH^{m-k}(\Sigma;\bZ) 
\end{align}
such that 
\begin{align}
\qx \ch h_i = z_i,	&&	\qx \ch \tilde h^\prime_{\tilde i} = \tilde z_{\tilde i}. 
\end{align}
\end{subequations}
By evaluation on the fundamental class $[\Sigma]$ of $\Sigma$, we introduce a set of real numbers 
\begin{align}
\{c_{\tilde i i}:\, \tilde i = 1, \ldots, \tilde n,\, i = 1, \ldots, n\} \subseteq \bR: 
	&&	c_{\tilde i i} \mod \bZ = (\tilde h^\prime_{\tilde i} \cdot h_i)[\Sigma]. 
\end{align}
Consider the non-degenerate pairing 
\begin{align}\label{eqn:realcohopairing}
\H^{m-k-1}(\Sigma;\bR) \times \H^k(\Sigma;\bR) \longrightarrow \bR, 
	&& (\tilde r, r) \longmapsto (\tilde r \smile r)[\Sigma] 
\end{align}
for cohomology with real coefficients on $\Sigma$. 
Since $\Hf^k(\Sigma;\bZ)$ is a lattice in $\H^k(\Sigma;\bR)$, $\{z_i\}$ is a basis of $\H^k(\Sigma;\bR)$ too. 
Then we can select its dual basis via the non-degenerate pairing displayed above: 
\begin{align}
\{\tilde r_i:\, i = 1, \ldots, n\} \subseteq \H^{m-k-1}(\Sigma;\bR): 
	&&	(\tilde r_i \smile z_j)[\Sigma] = \delta_{ij}. 
\end{align}
We define 
\begin{subequations}
\begin{align}
\{\tilde s_{\tilde i}:\, \tilde i = 1, \ldots, \tilde n\} \subseteq \H^{m-k-1}(\Sigma;\bR), 
	&&	\{\tilde u_{\tilde i}:\, \tilde i = 1, \ldots, \tilde n\} 
		\subseteq \frac{\H^{m-k-1}(\Sigma;\bR)}{\Hf^{m-k-1}(\Sigma;\bZ)} 
\end{align}
according to 
\begin{align}
\tilde s_{\tilde i} \doteq \sum_{i=1}^n c_{\tilde i i}\, \tilde r_i, 
	&&	\tilde u_{\tilde i} \doteq \tilde s_{\tilde i} \mod \Hf^{m-k-1}(\Sigma;\bZ). 
\end{align}
\end{subequations}
By construction, one finds 
\begin{equation}
(\tilde s_{\tilde i} \smile z_i)[\Sigma] = \sum_{j=1}^n c_{\tilde i j} (\tilde r_j \smile z_i)[\Sigma] = c_{\tilde i i}. 
\end{equation}
Therefore, setting also
\begin{align}
\{\tilde h_{\tilde i} \doteq \tilde h^\prime_{\tilde i} - \fl \mux \tilde u_{\tilde i}:\, \tilde i = 1, \ldots, \tilde n\} 
	\subseteq \dH^{k,m-k}(\Sigma;\bZ), 
\end{align}
we get 
\begin{align}
(\tilde h_{\tilde i} \cdot h_i)[\Sigma] 
&	= (\tilde h^\prime_{\tilde i} \cdot h_i)[\Sigma] - (\mux \tilde u_{\tilde i} \smile \ch h_i)[\Sigma]	\nonumber \\
&	= c_{\tilde i i} - (\tilde s_{\tilde i} \smile \qx \ch h_i)[\Sigma] \mod \bZ							\nonumber \\
&	= c_{\tilde i i} - (\tilde s_{\tilde i} \smile z_i)[\Sigma] \mod \bZ = 0. 
\end{align}
Hence, the formula 
\begin{align}
\chix: \Hf^{k,m-k}(\Sigma;\bZ) \longrightarrow \dH^{k,m-k}(\Sigma:\bZ), 
	&&	(z_i,0) \longmapsto (h_i,0),	&&	(0,\tilde z_{\tilde i}) \longmapsto (0,\tilde h_{\tilde i}) 
\end{align}
uniquely specifies the sought homomorphism on the basis 
$\{(z_i,0), (0, \tilde z_{\tilde i}): i = 1, \ldots, n, \tilde i = 1, \ldots, \tilde n\}$ of $\Hf^{k,m-k}(\Sigma;\bZ)$. 
The splitting $\chi$ can be made compatible with the dualities $\zeta_\free$ in \eqref{eqn:dualityFree} 
and $\zeta$ in \eqref{eqn:duality}. In fact, it is sufficient to consider also 
\begin{align}
\tilde \chix: \Hf^{m-k,k}(\Sigma;\bZ) \longrightarrow \dH^{m-k,k}(\Sigma:\bZ), 
	&&	(\tilde z_{\tilde i}, 0) \longmapsto (\tilde h_{\tilde i}, 0),		&&	(0, z_i) \longmapsto (0, h_i) 
\end{align}
to conclude that $\zeta \circ ((\tr \nux \times \tr \nux) \times \chix) 
= ((\tr \nux \times \tr \nux) \times \tilde \chix) \circ \zeta_\free$. 
\vskip .5em

As a second step, we focus on the construction of 
\begin{equation}
\eta: \dd_\Sigma \Omega^{k-1,m-k-1}(\Sigma) \longmapsto 
	\frac{\Omega^{k-1,m-k-1}(\Sigma)}{\OmegaZ^{k-1,m-k-1}(\Sigma)} 
\end{equation} 
such that $(\dd_\Sigma \times \dd_\Sigma) \circ \eta = \id_{\dd_\Sigma \Omega^{k-1,m-k-1}(\Sigma)}$ 
and $\sigma^\Sigma \circ \Big( \big( (\tr \times \tr) \circ \etax \big) \times \chix \Big) = 0$. 
First of all, we observe that a splitting $\etax^\prime = \etax^\prime_1 \times \etax^\prime_2$ exists. 
In fact, on account of \cite[Section A.1]{BSS14}, we obtain $\etax^\prime_1$ and $\etax^\prime_2$ 
such that $\dd_\Sigma \circ \etax^\prime_1 = \id_{\dd_\Sigma \Omega^{k-1}(\Sigma)}$ 
and $\dd_\Sigma \circ \etax^\prime_2 = \id_{\dd_\Sigma \Omega^{m-k-1}(\Sigma)}$. 
Our goal is to define $\eta$ as a suitable modification of $\eta^\prime$. 
For this purpose, we observe that $\H^p(\Sigma;\bR) / \Hf^p(\Sigma;\bZ)$ 
is the Pontryagin dual of $\Hf^{m-p-1}(\Sigma;\bZ)$, cf.\ \cite[Rem.\ 5.7]{BMath}. 
In particular, recalling the definition of $\sigma_\free^\Sigma$ in \eqref{eqTopfSympl}, we observe that 
\begin{align}
\frac{\H^{k-1,m-k-1}(\Sigma;\bR)}{\Hf^{k-1,m-k-1}(\Sigma;\bZ)} 
		\longrightarrow \Hf^{k,m-k}(\Sigma;\bZ)^\star,	
	&&	(u,\tilde u) \longmapsto \sigma_\free^\Sigma 
		\longmapsto \sigma_\free^\Sigma \big((u,\tilde u, 0, 0), \cdot \big) 
\end{align}
provides the Pontryagin duality isomorphism. 
This observation allows us to define 
\begin{subequations}\label{eqDeltaEta}
\begin{equation}
\Delta_\eta: \dd_\Sigma \Omega^{k-1,m-k-1}(\Sigma) \longmapsto 
	\frac{\H^{k-1,m-k-1}(\Sigma;\bR)}{\Hf^{k-1,m-k-1}(\Sigma;\bZ)} 
\end{equation}
by setting 
\begin{equation}
\sigma^\Sigma_\free \Big( \big( \Delta_\eta(\dd_\Sigma A, \dd_\Sigma \tilde A), 0, 0 \big), (0,0,z,\tilde z) \Big) 
\doteq \sigma^\Sigma \big( (\tr \times \tr) \etax^\prime (\dd_\Sigma A, \dd_\Sigma \tilde A) , \chi(z,\tilde z) \big) 
\end{equation}
\end{subequations}
for each $(\dd_\Sigma A, \dd_\Sigma \tilde A) \in \dd_\Sigma \Omega^{k-1,m-k-1}(\Sigma)$ 
and each $(z,\tilde z) \in \Hf^{k,m-k}(\Sigma;\bZ)$. 
In fact, for each $(\dd_\Sigma A, \dd_\Sigma \tilde A)$, the right-hand side yields a group character 
on $\Hf^{k,m-k}(\Sigma;\bZ)$; hence, $\Delta_\eta(\dd_\Sigma A, \dd_\Sigma \tilde A)$ 
fulfilling the defining condition displayed above exists and is unique. 
Notice that $\Delta_\eta$ is actually the Cartesian product of 
two homomorphisms exactly as $\eta^\prime$ due to the fact that 
its defining equation \eqref{eqDeltaEta} does not mix components. Introducing: 
\begin{equation}
\etax \doteq \etax^\prime - (\nu \times \nu) \circ \Delta_\eta: 
	\dd_\Sigma \Omega^{k-1,m-k-1}(\Sigma) \longrightarrow 
		\frac{\Omega^{k-1,m-k-1}(\Sigma)}{\OmegaZ^{k-1,m-k-1}(\Sigma)} 
\end{equation}
and recalling also the last equation of Lemma \ref{lemSymplSplit}, 
we find that $\etax$ fulfils the desired requirement: 
\begin{align}\label{eqEtaChiOrthogonal}
\sigma^\Sigma & \Big(  (\tr \times \tr) \etax (\dd A, \dd \tilde A), \chix (z,\tilde z) \Big) \nonumber \\
&	= \sigma^\Sigma \Big(  (\tr \times \tr) \etax^\prime (\dd A, \dd \tilde A), \chix (z,\tilde z) \Big) 
	- \sigma^\Sigma \Big(  (\tr \times \tr) (\nu \times \nu) \Delta_\eta (\dd A, \dd \tilde A), \chix (z,\tilde z) \Big) \nonumber \\
&	= \sigma^\Sigma_\free \Big( \big( \Delta_\eta(\dd A, \dd \tilde A), 0, 0 \big), (0,0,z,\tilde z) \Big) 
	- \sigma^\Sigma_\free \Big( \big( \Delta_\eta(\dd A, \dd \tilde A), 0, 0 \big), (0,0,z,\tilde z) \Big) = 0. 
\end{align}
Once again $\eta$ is the Cartesian product of two morphisms, namely its first component $\etax_1$, 
which does not depend on the second argument of $\etax$, and its second component $\etax_2$, 
which is instead independent of the first argument. 
It is now easy to confirm the compatibility with the dualities 
$\zeta_\Dyn$ of \eqref{eqn:dualityDyn} and $\zeta$ of \eqref{eqn:duality}
by introducing a second splitting whose components are obtained 
flipping the components of the splitting constructed above: 
\begin{align}
\tilde \etax: \dd_\Sigma \Omega^{m-k-1,k-1}(\Sigma) \longrightarrow 
	\frac{\Omega^{m-k-1,k-1}(\Sigma)}{\OmegaZ^{m-k-1,k-1}(\Sigma)} 
	&&	(\dd_\Sigma A, \dd_\Sigma \tilde A) \longmapsto (\etax_2(\dd_\Sigma A), \etax_1(\dd_\Sigma \tilde A)). 
\end{align}
As a consequence, we find $\zeta \circ (\tr \times \tr) \circ \etax = (\tr \times \tr) \circ \tilde \etax \circ \zeta_\Dyn$, 
which is the desired compatibility. 
\vskip .5em

The last step consists in providing $\xix: \Ht^{k,m-k}(\Sigma;\bZ) \to \H^{k-1,m-k-1}(\Sigma;\bT)$ 
such that $(\be \times \be) \circ \xix = \id_{\Ht^{k,m-k}(\Sigma;\bZ)}$ 
and $\sigma^\Sigma \circ \Big( \big( (\fl \times \fl) \circ \xix \big) \times \chix \Big) = 0$. 
To start with, note that $\H^{p-1}(\Sigma;\bR) / \Hf^{p-1}(\Sigma;\bZ)$ is a divisible group, 
so that there exists a splitting $\xix^\prime = \xix_1^\prime \times \xix_2^\prime:
\Ht^{k,m-k}(\Sigma;\bZ) \to \H^{k-1,m-k-1}(\Sigma;\bT)$. 
It only remains to modify $\xix^\prime$ in order to obtain the desired $\xix$. 
An argument similar to the one we used to define $\Delta_\eta$, cf.\ \eqref{eqDeltaEta}, allows us to introduce 
\begin{subequations}
\begin{equation}
\Delta_\xi: \Ht^{k,m-k}(\Sigma;\bZ) \longrightarrow 
	\frac{\H^{k-1,m-k-1}(\Sigma;\bR)}{\Hf^{k-1,m-k-1}(\Sigma;\bZ)} 
\end{equation}
by setting 
\begin{equation}
\sigma_\free^\Sigma \Big( \big( \Delta_\xi (t, \tilde t), 0, 0 \big), (0, 0, z, \tilde z) \Big) 
	\doteq \sigma^\Sigma \big( (\fl \times \fl) \xix^\prime (t, \tilde t), \chix (z, \tilde z) \big) 
\end{equation}
\end{subequations}
for each $(t, \tilde t) \in \Ht^{k,m-k}(\Sigma;\bZ)$ and each $(z, \tilde z) \in \Hf^{k,m-k}(\Sigma;\bZ)$. 
Now consider 
\begin{equation}
\xix \doteq \xix^\prime - (\mux \times \mux) \circ \Delta_\xi: \Ht^{k,m-k}(\Sigma;\bZ) 
	\longrightarrow \H^{k-1,m-k-1}(\Sigma;\bT). 
\end{equation}
Repeating the calculation in \eqref{eqEtaChiOrthogonal}, one can confirm that $\xix$ fulfils the desired property. 
By the same argument valid for $\Delta_\eta$, $\Delta_\xi$ is a Cartesian product of two morphisms 
and so is $\xix$. Introducing a new splitting with flipped components 
\begin{align}
\tilde \xix: \Ht^{m-k,k}(\Sigma;\bZ) \longrightarrow \H^{m-k-1,k-1}(\Sigma;\bT), 
&& (t, \tilde t) \longmapsto (\xix_2 t, \xix_1 \tilde t), 
\end{align}
one can confirm that $\zeta \circ (\iota \times \iota) \circ \xix = (\iota \times \iota) \circ \tilde \xix \circ \zeta_\tor$, 
which expresses the compatibility between the splittings $\xix$ and $\tilde \xix$ 
and the dualities $\zeta$ of \eqref{eqn:duality} and $\zeta_\tor$ of \eqref{eqn:dualityTor}. 
\end{proof}

\begin{remark}\label{rem:duality_m=2k}
While for generic $(m,k)$ the question does not make sense, 
for $m=2k$ one would like to find a splitting that is self-compatible with duality 
(Theorem \ref{thmSplitting} only guarantees the existence of a second splitting 
that agrees with the first one after duality). 
This question can be answered by finding splittings whose components are two copies of the same morphism. 
To illustrate how to achieve this result, let us find a suitable splitting of the form 
\begin{align}
\chix = \chix_1 \times \chix_2: \Hf^{k,k}(\Sigma;\bZ) \longrightarrow \dH^{k,k}(\Sigma;\bZ)
\end{align}
with $\chix_1 = \chix_2$. Adopting the notation of the proof of Theorem \ref{thmSplitting}, we consider dual bases 
\begin{align}
\{z_i:\, i = 1, \ldots, n\} \subseteq \Hf^k(\Sigma;\bZ), 
&& \{\tilde r_i:\, i = 1, \ldots, n\} \subseteq \H^{k-1}(\Sigma;\bR) 
\end{align}
and we choose arbitrarily 
\begin{align}
\{h_i^\prime:\, i = 1, \ldots, n\} \subseteq \dH^k(\Sigma;\bZ) && \mbox{such that} && \qx \ch h_i^\prime = z_i. 
\end{align}
Selecting a collection of real numbers 
\begin{align}
\{c_{ij} \in [0,1):\, i,j=1, \ldots, n\} && \mbox{such that} 
&& c_{ij} \mod \bZ = (h^\prime_i \cdot h^\prime_j)[\Sigma], 
\end{align}
one concludes that $c_{ij} = (-1)^{k^2} c_{ji}$, hence the set 
\begin{equation}
\Big\{ h_i \doteq h^\prime_i - \fl \mux \sum_{j=1}^n \tfrac{1}{2}\, c_{ij}\, \tilde r_j:\, i=1,\ldots,n \Big\} 
\end{equation}
satisfies the condition $(h_i \cdot h_j)[\Sigma] = 0$. Therefore we obtain the desired splitting specifying 
\begin{align}
\chix_1 = \chix_2: \Hf^k(\Sigma;\bZ) \longrightarrow \dH^k(\Sigma;\bZ), && z_i \longmapsto h_i. 
\end{align}
Similar conclusions follow for the other relevant splittings. 
In particular, one obtains a symplectically orthogonal decomposition also for the self-dual theory, 
which has been investigated in \cite[Sect.\ 7]{BPhys}. 
\end{remark}

\section{Quantization and states}\label{Sec:quantstates}
The goal of the present section is to construct quantized C$^\ast$-algebras of observables for the symplectic Abelian group $\Conf^k(M)$, the dynamical sector $\Dyn^k(M)$, the torsion-free topological sector $\Topf^k(M)$ and the torsion topological sector $\Topt^k(M)$. With Corollary \ref{isomorphism_dynamical_algebras} we will show that the decomposition of Theorem \ref{thmSplitting} has a quantum counterpart in terms of an appropriate tensor product of C$^\ast$-algebras. This allows us to define a state on the quantized C$^\ast$-algebra associated to $\Conf^k(M)$ by defining states on the C$^\ast$-algebras associated to each sector, cf.\ Proposition \ref{prp:stateDyn}, Proposition \ref{prp:statefree} and Proposition \ref{prp:statetor}. In particular, for the dynamical sector we construct a Hadamard state (this feature is not of interest for the other sectors).
In terms of induced GNS representations, one obtains Hilbert spaces equipped with isomorphisms 
implementing the duality $\zeta: \Conf^k(M;\bZ) \to \Conf^{m-k}(M;\bZ)$ of \eqref{eqn:duality}. 
In particular, for $m=2k$, these isomorphisms arise as unitary transformations. 
In Section \ref{sec:states_torsion_free} we will study in detail the GNS representation induced by the state on the torsion-free topological sector $\Topf^k(M)$ and we will make further comments on it in Remark \ref{rem:statefree}. 
The main results of this section are summarized in Theorem \ref{thmStates}.

\subsection{\label{sec:quantization}Quantization}
We quantize the symplectic Abelian group $\Conf^k(M;\bZ)$ implementing canonical commutation relations of Weyl type, thus obtaining a C$^\ast$-algebra of observables. In view of the symplectically orthogonal decomposition constructed in the previous section, we obtain an analogous factorization at the level of C$^\ast$-algebras. The analysis that we are going to present applies to any symplectically orthogonal decomposition. Although we are interested in applying it to $\Conf^k(M;\bZ)$, $\Dyn^k(M)$, $\Topf^k(M)$ and $\Topt^k(M)$, it is more convenient to work in the general setting, applying it to the case in hand only at the end. 
Let $G_1, G_2$ be Abelian groups equipped with a presymplectic form 
\begin{align}
\sigma_i:G_i\times G_i \longrightarrow \bT, && i=1,2,
\end{align}
i.e.\ an antisymmetric bi-homomorphism. 
For $i=1,2$ one forms the unital $*$-algebra $\mathcal{A}(G_i)$ generated by the symbols $\left\{ W (g_i), \, g_i\in G_i\right\}$ and subject to the defining relations 
\begin{align}\label{defining_relations}
W(g_i)^\ast = W(-g_i), && W(g_i)\, W(g^\prime_i) = \exp(2 \pi i\, \sigma_i(g_i,g^\prime_i))\, W(g_i+g^\prime_i).
\end{align}
Each of these algebras can be equipped with the following norm, defined in \cite{MAN73}:
\begin{align}\label{eqNorm1}
\|\cdot\|_1:\mathcal{A}(G_i)\longrightarrow\bR, && \Big\|\sum\limits_{j=1}^N\alpha_jW(g_j)\Big\|_1\doteq\sum_{j=1}^N|\alpha_j|,
\end{align}
where we consider arbitrary (but finite) linear combinations of the generators of $\mathcal{A}(G_i)$. Upon completion, we obtain Banach $*$-algebras
\begin{align}
\mB(G_i)\doteq\overline{(\mathcal{A}(G_i),\|\cdot\|_1)},\qquad i=1,2.
\end{align}
On the other hand, taking $(G_1\oplus G_2,\sigma)$ as our starting point, where we define
\begin{equation}\label{orthogonality}
\sigma((g_1,g_2),(g^\prime_1,g^\prime_2))\doteq\sigma_1(g_1,g^\prime_1)+\sigma_2(g_2,g^\prime_2),
\end{equation}
for all $(g_1,g_2),(g^\prime_1,g^\prime_2)\in G_1\oplus G_2$,
the same procedure can be repeated, resulting first in the $*$-algebra $\mathcal{A}(G_1\oplus G_2)$ and then in the Banach $*$-algebra $\mB(G_1\oplus G_2)$. The latter comes together with two canonical homomorphisms of Banach $*$-algebras 
\begin{equation}
\iota_i:\mB(G_i)\longrightarrow\mB(G_1\oplus G_2),\qquad i=1,2,
\end{equation}
which are completely specified by their action on generators, namely $\iota_1(W(g_1))\doteq W(g_1,0)$ for all $g_1 \in G_1$ and similarly for $\iota_2(W(g_2))\doteq W(0,g_2)$ for all $g_2 \in G_2$. Furthermore, we can consider the algebraic tensor product $\mB(G_1)\otimes\mB(G_2)$. 
This is a $\ast$-algebra with respect to the product and the involution that are defined componentwise 
by the counterparts on each factor. We equip $\mB(G_1)\otimes\mB(G_2)$ with the norm 
\begin{subequations}\label{tensor_product_norm}
\begin{align}
\|a\|_{\hat\otimes}=\inf\left(\sum\limits_{k=1}^N\|a_{k,1}\|_1\,\|a_{k,2}\|_1\right), && a\in\mB(G_1)\otimes\mB(G_2).
\end{align}
The infimum is taken over all possible presentations of $a$ as 
\begin{equation}
a=\sum_{k=1}^{N}a_{1,k}\otimes a_{2,k}, 
\end{equation}
\end{subequations}
with $a_{i,k}\in\mB(G_i)$. The completion of $\mB(G_1)\otimes\mB(G_2)$ with respect to \eqref{tensor_product_norm} leads to a Banach $*$-algebra \cite{Guichardet} denoted by 
\begin{equation}
\mB(G_1)\hat{\otimes}\mB(G_2). 
\end{equation}
Since \eqref{defining_relations} and \eqref{orthogonality} entail that $\iota_1(a_1)\, \iota_2(a_2) = \iota_2(a_2)\, \iota_1(a_1)$  for each $a_1\in\mB(G_1), a_2\in\mB(G_2)$, recalling the universal property of $\hat \otimes$, cf.\ \cite{Guichardet}, we obtain a Banach $\ast$-algebra morphism 
\begin{subequations}\label{continuous_morphism}
\begin{equation}
I:\mB(G_1)\hat{\otimes}\mB(G_2)\longrightarrow\mB(G_1\oplus G_2),
\end{equation}
uniquely specified by 
\begin{equation}
I(a_1 \otimes a_2) = \iota_1(a_1)\, \iota_2(a_2)
\end{equation}
for $a_i \in \mB(G_i)$. 
\end{subequations}
Our goal is to show that $I$ is an isomorphism of Banach $\ast$-algebras. It suffices to exhibit its inverse. In fact, consider the $\ast$-homomorphism
\begin{subequations}\label{inverse}
\begin{equation}
J: \mA(G_1 \oplus G_2) \longrightarrow \mB(G_1) \otimes \mB(G_2),
\end{equation}
defined on generators by 
\begin{equation}
J(W(g_1,g_2))=W(g_1)\otimes W(g_2),
\end{equation}
for $g_i \in G_i$. 
\end{subequations}
From \eqref{eqNorm1} and \eqref{tensor_product_norm}, one obtains the inequality 
\begin{equation}
\|J(a)\|_{\hat\otimes} \leq \|a\|_1 
\end{equation}
for all $a \in \mathcal{A}(G_1\oplus G_2)$, which entails that $J$ can be uniquely extended to a Banach $\ast$-algebra morphism after taking the completions on codomain and domain. With a slight abuse of notation, we denote this extension by 
\begin{equation}
J: \mB(G_1\oplus G_2) \longrightarrow \mB(G_1)\hat{\otimes}\mB(G_2).
\end{equation}
A direct inspection of the definitions of $I$ and $J$ unveils that 
\begin{align}
I\circ J= \textrm{id}_{\mB(G_1\oplus G_2)}, && J\circ I=\textrm{id}_{\mB(G_1)\hat{\otimes}\mB(G_2)},
\end{align}
which is tantamount to saying that $I$ is an isomorphism of Banach $\ast$-algebras. In other words, $\mB(G_1\otimes G_2)$ is isomorphic to $\mB(G_1)\hat{\otimes}\mB(G_2)$. To conclude our analysis we need to move to the level of C$^*$-algebras. To this end we recall that, to each unital Banach $*$-algebra, one can assign functorially its canonical enveloping C$^*$-algebra, see \cite[Sect.\ 2.7]{Dixmier}. This functor, which will be denoted by $\mC^\ast$, is the left-adjoint of the forgetful functor from unital C$^\ast$-algebras to unital Banach $\ast$-algebras. We consider the C$^*$-algebras $\mC^*(\mB(G_i))$, $i=1,2$, $\mC^*(\mB(G_1\oplus G_2))$ and $\mC^*(\mB(G_1)\hat{\otimes}\mB(G_2))$. Having already established that $\mB(G_1\oplus G_2)$ is isomorphic via \eqref{inverse} to $\mB(G_1)\hat{\otimes}\mB(G_2)$, by functoriality it follows that 
\begin{equation}
\mC^*(\mB(G_1\oplus G_2))\simeq \mC^*(\mB(G_1)\hat{\otimes}\mB(G_2)).
\end{equation}
Following \cite{Guichardet}, we introduce a C$^\ast$-norm $\|\cdot\|_{\check\otimes}$ on the algebraic tensor product $C_1 \otimes C_2$ of two C$^\ast$-algebras $C_1, C_2$ as the least upper bound of all C$^\ast$-subcross seminorms. The C$^\ast$-algebra $C_1 \check\otimes C_2$ obtained by completion with respect to $\|\cdot\|_{\check\otimes}$ is characterized by the following universal property: If $\psi_i: C_1 \to C_3$ ($i=1,2$) are two commuting morphisms of unital C$^*$-algebras from $C_i$ into $C_3$, then there exists a unique C$^*$-algebra morphism $\Psi:C_1\check{\otimes} C_2\to C_3$ such that $\Psi(c_1\otimes c_2) = \psi_1(c_1)\, \psi_2(c_2)$ for all $c_1\in C_1$ and for all $c_2\in C_2$. The following property relates $\hat\otimes$ and $\check\otimes$ via $\mC^\ast$ \cite{Guichardet}: given two unital Banach $\ast$-algebras, the enveloping C$^\ast$-algebra of their $\hat\otimes$-tensor product is naturally isomorphic to the $\check\otimes$-tensor product of their enveloping C$^\ast$-algebras. Therefore we obtain
\begin{equation}
\mC^*(\mB(G_1)\hat{\otimes}\mB(G_2)) \simeq \mC^*(\mB(G_1))\check{\otimes} \mC^*(\mB(G_2)).
\end{equation}
Summing up, we have the following:

\begin{proposition}\label{C*-isomorphism}
	Let $(G_1,\sigma_1)$ and $(G_2,\sigma_2)$ be two presymplectic Abelian groups. Then there is a canonical isomorphism of C$^\ast$-algebras:
	\begin{equation}
	\mC^*(\mB(G_1\oplus G_2))\simeq \mC^*(\mB(G_1))\check{\otimes} \mC^*(\mB(G_2)).
	\end{equation}
\end{proposition}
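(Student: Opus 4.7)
The proposition is really a summary of the entire preceding discussion, so my plan is to collect and chain the three ingredients already assembled. First I would produce a Banach $\ast$-algebra isomorphism
$$\mB(G_1)\hat\otimes\mB(G_2)\simeq \mB(G_1\oplus G_2)$$
by exhibiting the two maps $I$ and $J$ constructed above as mutually inverse. The existence of $I$ uses the universal property of $\hat\otimes$ once one observes that $\iota_1(\mB(G_1))$ and $\iota_2(\mB(G_2))$ commute inside $\mB(G_1\oplus G_2)$; this commutativity is a direct consequence of the Weyl relations \eqref{defining_relations} together with the block-diagonal form of $\sigma$ in \eqref{orthogonality}, since $\sigma((g_1,0),(0,g_2))=0$. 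The map $J$ is first defined on generators, then extended by $\ast$-linearity to $\mA(G_1\oplus G_2)$, and finally to $\mB(G_1\oplus G_2)$ via the elementary norm estimate $\|J(a)\|_{\hat\otimes}\le\|a\|_1$ that one reads off from \eqref{eqNorm1} and \eqref{tensor_product_norm}.

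Second, I would verify that $I\circ J$ and $J\circ I$ agree with the identity on the dense subspaces of finite linear combinations of Weyl generators, hence on the whole Banach completions by continuity. This is just an inspection: $I(J(W(g_1,g_2)))=\iota_1(W(g_1))\,\iota_2(W(g_2))=W(g_1,0)\,W(0,g_2)=W(g_1,g_2)$ (the last equality uses $\sigma((g_1,0),(0,g_2))=0$), and similarly for $J\circ I$ on elementary tensors $W(g_1)\otimes W(g_2)$.

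Third, I would promote this Banach $\ast$-algebra isomorphism to a C$^\ast$-algebra isomorphism by functoriality of $\mC^\ast$, obtaining
$$\mC^\ast(\mB(G_1\oplus G_2))\simeq \mC^\ast(\mB(G_1)\hat\otimes\mB(G_2)).$$
Finally, I would invoke the property of $\mC^\ast$ recorded from \cite{Guichardet}, namely that the enveloping C$^\ast$-algebra of the $\hat\otimes$-tensor product of two unital Banach $\ast$-algebras is canonically isomorphic to the $\check\otimes$-tensor product of their enveloping C$^\ast$-algebras, yielding the stated isomorphism.

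The only genuinely technical point is the continuity of $J$, which reduces to the pointwise norm inequality on $\mA(G_1\oplus G_2)$ that is straightforward from the definitions of $\|\cdot\|_1$ and $\|\cdot\|_{\hat\otimes}$. Everything else is a formal assembly of the universal properties of $\hat\otimes$, $\mC^\ast$ and $\check\otimes$ recalled in the preceding paragraphs; no deeper input is needed, so there is no real obstacle beyond bookkeeping.
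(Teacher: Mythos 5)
Your proposal is correct and follows essentially the same route as the paper: construct the mutually inverse Banach $\ast$-algebra morphisms $I$ and $J$ between $\mB(G_1)\hat\otimes\mB(G_2)$ and $\mB(G_1\oplus G_2)$ (using the universal property of $\hat\otimes$, the vanishing of $\sigma$ across factors, and the norm estimate $\|J(a)\|_{\hat\otimes}\le\|a\|_1$), then apply the enveloping C$^\ast$-functor and invoke the Guichardet compatibility $\mC^\ast(\mB_1\hat\otimes\mB_2)\simeq\mC^\ast(\mB_1)\check\otimes\mC^\ast(\mB_2)$. No meaningful deviation from the paper's argument.
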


\begin{remark}
	We observe that, as a consequence of \cite[Prop.\ 2.7.1]{Dixmier}, for a (pre)symplectic Abelian group $G$, the C$^*$-enveloping algebra associated to $\mB(G)$ is isomorphic to the C$^\ast$-algebra associated to $G$ defined in \cite{MAN73}, which encodes the Weyl canonical commutation relations. For this reason, we will indicate $\mC^\ast(\mB(G))$ with the symbol $\mathcal{W}(G)$. In particular, this observation entails that the quantization prescription considered in \cite{BDHS14, BPhys} is equivalent to the one adopted here. 
\end{remark}

The preceding analysis can be applied to the scenario of interest to us. Introducing 
\begin{align}\nonumber
\mW(\Conf^k(M;\bZ)) & \doteq C^\ast(\mB(\Conf^k(M;\bZ))), & \mW(\Dyn^k(M)) & \doteq C^\ast(\mB(\Dyn^k(M))),\\
\mW(\Topf^k(M)) & \doteq C^\ast(\mB(\Topf^k(M))), & \mW(\Topt^k(M)) & \doteq C^\ast(\mB(\Topt^k(M)))
\end{align}
and recalling the symplectically orthogonal decomposition in \eqref{eqDecomposition}, as well as Proposition \ref{C*-isomorphism}, we conclude that the C$^\ast$-algebra of observables $\mW(\Conf^k(M;\bZ))$ can be factorized as a $\check\otimes$-tensor product of three contributions:

\begin{corollary}\label{isomorphism_dynamical_algebras}
	The following is an isomorphism of C$^*$-algebras:
	\begin{equation}
	\mW(\Conf^k(M;\bZ)) \simeq \mW(\Dyn^k(M)) \check\otimes \mW(\Topf^k(M)) \check\otimes \mW(\Topt^k(M)).
	\end{equation}
\end{corollary}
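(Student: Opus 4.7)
The plan is to obtain the desired factorization by combining Theorem~\ref{thmSplitting} with Proposition~\ref{C*-isomorphism}, applied iteratively. Theorem~\ref{thmSplitting} furnishes splittings $\chix$, $\etax$, $\xix$ satisfying the orthogonality conditions \eqref{eqSymplComp}, from which the symplectic identity displayed just before the statement of that theorem ensures that \eqref{eqDecomposition} is not only an isomorphism of Abelian groups, but in fact an isomorphism of symplectic Abelian groups
\begin{equation*}
\big(\Conf^k(M;\bZ),\sigma\big) \simeq \big(\Dyn^k(M) \oplus \Topf^k(M) \oplus \Topt^k(M),\, \sigma_\Dyn \oplus \sigma_\free \oplus \sigma_\tor\big),
\end{equation*}
where the symplectic structure on the right-hand side is precisely of the form \eqref{orthogonality}, iterated over three summands.

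Next, I would apply the canonical functoriality of the Weyl C$^\ast$-algebra construction: any isomorphism of (pre)symplectic Abelian groups lifts to an isomorphism of the associated C$^\ast$-algebras $\mW(\cdot) = \mC^\ast(\mB(\cdot))$. Hence the symplectic isomorphism above yields an isomorphism of C$^\ast$-algebras
\begin{equation*}
\mW(\Conf^k(M;\bZ)) \simeq \mW(\Dyn^k(M) \oplus \Topf^k(M) \oplus \Topt^k(M)).
\end{equation*}
Here it is crucial that $\sigma_\Dyn$ is regarded as a $\bT$-valued bi-homomorphism, i.e.\ post-composed with the quotient $\bR \to \bT$, as highlighted in the remark at the end of Section~\ref{secDyn}; otherwise Proposition~\ref{C*-isomorphism} would not apply directly.

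Finally, I would invoke Proposition~\ref{C*-isomorphism} twice. First, grouping the two topological sectors, one gets
\begin{equation*}
\mW\big(\Dyn^k(M) \oplus (\Topf^k(M) \oplus \Topt^k(M))\big) \simeq \mW(\Dyn^k(M)) \check\otimes \mW\big(\Topf^k(M) \oplus \Topt^k(M)\big).
\end{equation*}
A second application of the same proposition to the remaining direct sum gives
\begin{equation*}
\mW\big(\Topf^k(M) \oplus \Topt^k(M)\big) \simeq \mW(\Topf^k(M)) \check\otimes \mW(\Topt^k(M)),
\end{equation*}
and chaining the three isomorphisms produces the asserted factorization.

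There is no substantial obstacle in this argument: the structural work has already been carried out in Theorem~\ref{thmSplitting} (existence of symplectically compatible splittings) and in Proposition~\ref{C*-isomorphism} (the tensor-product factorization at the C$^\ast$ level). The only point deserving attention is verifying that the direct-sum symplectic structure inherited from \eqref{eqDecomposition} is literally the one appearing in \eqref{orthogonality}, which is exactly the content of the identity combining $\sigma_\Dyn$, $\sigma_\free$ and $\sigma_\tor$ placed right before the statement of Theorem~\ref{thmSplitting}; hence the result follows essentially by assembly.
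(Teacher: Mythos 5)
Your argument is correct and follows exactly the route the paper takes: combine the symplectically orthogonal decomposition \eqref{eqDecomposition} (obtained via Theorem~\ref{thmSplitting}) with two iterated applications of Proposition~\ref{C*-isomorphism}, after noting that the Weyl construction $\mW(\cdot) = \mC^\ast(\mB(\cdot))$ is functorial on symplectic Abelian groups. Your additional remark about post-composing $\sigma_\Dyn$ with the quotient $\bR \to \bT$ is a useful explicit check that the paper relegates to Remark~3.2 but which you correctly flag as necessary for Proposition~\ref{C*-isomorphism} to apply.
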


\begin{remark}\label{rem:dualitiesQuant}
The dualities $\zeta$, $\zeta_\Dyn$, $\zeta_\free$, $\zeta_\tor$, 
cf.\ \eqref{eqn:duality}, \eqref{eqn:dualityDyn}, \eqref{eqn:dualityFree}, \eqref{eqn:dualityTor}, 
have quantum counterparts
\begin{subequations}\label{eqn:dualitiesQuant}
\begin{align}
\mW(\zeta): \mW(\Conf^k(M;\bZ)) & \longrightarrow \mW(\Conf^{m-k}(M;\bZ)), 
& W(h,\tilde h) & \longmapsto W(\zeta(h, \tilde h)), \\
\mW(\zeta_\Dyn): \mW(\Dyn^k(M;\bZ)) & \longrightarrow \mW(\Dyn^{m-k}(M;\bZ)), 
& W(\dd A) & \longmapsto W(\zeta_\Dyn(\dd A)), \\
\mW(\zeta_\free): \mW(\Topf^k(M;\bZ)) & \longrightarrow \mW(\Topf^{m-k}(M;\bZ)), 
& W(u,\tilde u, z, \tilde z) & \longmapsto W(\zeta_\free(u,\tilde u, z, \tilde z)), \\
\mW(\zeta_\tor): \mW(\Topt^k(M;\bZ)) & \longrightarrow \mW(\Topt^{m-k}(M;\bZ)), 
& W(t, \tilde t) & \longmapsto W(\zeta_\tor(t, \tilde t)) 
\end{align}
\end{subequations}
at the C$^\ast$ algebra level defined as the unique extensions of the obvious formulas given on generators. 
The compatibility between splittings and dualities stated in Theorem \ref{thmSplitting} 
and in Remark \ref{rem:duality_m=2k} induce an analogous property 
between the factorization of Corollary \ref{isomorphism_dynamical_algebras} 
and the quantum dualities of \eqref{eqn:dualitiesQuant}. 
\end{remark}

\subsection{States for the dynamical sector}\label{Sec:states_dyn}
With Corollary \ref{isomorphism_dynamical_algebras} we have established 
a factorization of the algebra of observables 
induced by the symplectically orthogonal splitting in \eqref{eqDecomposition}. 
This allows us to define states on $\mW(\Conf^k(M;\bZ))$ 
by assigning a state on each of the $\check\otimes$-tensor factors. 
We start from the dynamical sector, cf.\ Section \ref{secDyn}, 
i.e.\ we look for a Hadamard state on $\mW(\Dyn^k(M))$. 
Our approach is motivated by the following proposition, 
where the requirement of a compact Cauchy surface is inessential and can be easily removed 
by introducing differential forms with timelike compact support, 
see e.g.\ \cite{Ben16} for analogous results in this more general case 
(we refrain from this level of generality here): 

\begin{proposition}\label{prpDynIso}
Let $M$ be an $m$-dimensional globally hyperbolic spacetime (admitting a compact Cauchy surface)
and consider the causal propagator $G: \Omega_\c^p(M) \to \Omega^p(M)$ 
for the normally hyperbolic differential operator $\Box \doteq \de \dd + \dd \de$ 
defined on $p$-forms (see \cite{BGP, B15}). Then the following is an isomorphism of vector spaces: 
\begin{align}
L: \frac{\Omega_\c^k(M)}{\Omega_{\c,\dd}^k(M) \oplus \Omega_{\c,\de}^k(M)} \longrightarrow \Dyn^k(M), 
&& [\rho] \longmapsto \dd (G \de \rho) = \ast \dd \big((-1)^{mk+1} G \ast \dd \rho\big),
\end{align}
where the subscripts $_{\dd}$ and $_{\de}$ denote 
the kernels of $\dd: \Omega_\c^k(M) \to \Omega_\c^{k+1}(M)$ 
and respectively of $\de: \Omega_\c^k(M) \to \Omega_\c^{k-1}(M)$. 
\end{proposition}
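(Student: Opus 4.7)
The plan is to verify four things in turn: (a) both formulas defining $L$ agree and have image in $\Dyn^k(M)$; (b) $L$ descends to the stated quotient; (c) $L$ is injective; (d) $L$ is surjective---the last being the main difficulty. Throughout I rely on the standard intertwining properties of the causal propagator, namely that $G$ commutes with $\dd$, $\de$, and $\ast$ (each commutes with $\Box$) while $G\Box=\Box G=0$ on $\Omega_\c^p(M)$; the short exact sequence
\begin{equation*}
0 \longrightarrow \Omega_\c^p(M) \xrightarrow{\ \Box\ } \Omega_\c^p(M) \xrightarrow{\ G\ } \{u\in\Omega^p(M):\Box u=0\} \longrightarrow 0,
\end{equation*}
valid on globally hyperbolic spacetimes with compact Cauchy surface \cite{BGP,B15}; and the vanishing of compactly supported solutions of $\Box$.

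The equality of the two formulas for $L([\rho])$ is a sign-chasing exercise combining $\de=(-1)^k\ast^{-1}\dd\ast$, $\ast\ast=(-1)^{p(m-p)+1}$, and $G\ast=\ast G$, the sign $(-1)^{mk+1}$ absorbing the bookkeeping. The first expression shows $L([\rho])\in\dd\Omega^{k-1}(M)$ while the second shows $L([\rho])\in\ast\dd\Omega^{m-k-1}(M)$, so their common value lies in $\Dyn^k(M)$. Descent to the quotient is then immediate: $\rho\in\Omega_{\c,\de}^k$ (i.e.\ $\de\rho=0$) annihilates the first expression, while $\rho\in\Omega_{\c,\dd}^k$ (i.e.\ $\dd\rho=0$) annihilates the second.

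For injectivity, assume $L([\rho])=0$. Intertwining $G$ past $\dd$ and $\de$ turns this into $G(\dd\de\rho)=0=G(\de\dd\rho)$, so by exactness there exist $\gamma,\tilde\gamma\in\Omega_\c^k(M)$ with $\dd\de\rho=\Box\gamma$ and $\de\dd\rho=\Box\tilde\gamma$. Applying $\de$ to the first equation yields $\Box\de\gamma=0$, whence $\de\gamma=0$ by the vanishing of compactly supported wave solutions; symmetrically $\dd\tilde\gamma=0$. Summing, $\Box(\rho-\gamma-\tilde\gamma)=0$, and the same vanishing gives $\rho=\gamma+\tilde\gamma\in\Omega_{\c,\de}^k+\Omega_{\c,\dd}^k$.

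Surjectivity is the main obstacle. Given $\omega\in\Dyn^k(M)$, write $\omega=\dd A=\de\alpha$ (so $\Box\omega=0$) and pick a smooth switching function $\chi$ equal to $0$ in the past of a Cauchy surface $\Sigma_-$ and to $1$ in the future of another Cauchy surface $\Sigma_+$; then $\dd\chi$ has support in the compact slab between them. The compactly supported source $\rho_0:=\Box(\chi\omega)=[\Box,\chi]\omega$ satisfies $\omega=G\rho_0$ by the standard switching-function argument (the retarded piece recovers $\chi\omega$ from its past-compact source, the advanced piece recovers $(\chi-1)\omega$, and the difference is $\omega$). Now expand $\rho_0=\dd\de(\chi\omega)+\de\dd(\chi\omega)$. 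From $\omega=\dd A$ and Leibniz one gets $\dd(\chi\omega)=\dd\chi\wedge\omega=-\dd(\dd\chi\wedge A)$, so with $\nu:=\dd\chi\wedge A\in\Omega_\c^k(M)$ we obtain $\de\dd(\chi\omega)=\dd\de\nu-\Box\nu$. Dually, from $\omega=\de\alpha$ together with $\ast\de=\pm\dd\ast$ one constructs $\tilde\sigma\in\Omega_\c^k(M)$ with $\de(\chi\omega)=\de\tilde\sigma$, hence $\dd\de(\chi\omega)=\dd\de\tilde\sigma$. Setting $\sigma:=\tilde\sigma+\nu\in\Omega_\c^k(M)$ and summing, $\rho_0=\dd\de\sigma-\Box\nu$; applying $G$ and using $G\Box=0$ yields $\omega=G\rho_0=\dd G\de\sigma=L([\sigma])$. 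The crucial, nontrivial point driving the argument is that both halves of the defining condition on $\Dyn^k(M)$ are needed: $\omega=\dd A$ produces $\nu$, while $\omega=\de\alpha$ produces $\tilde\sigma$.
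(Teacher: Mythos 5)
Your overall structure (well-definedness, descent, injectivity, surjectivity) matches the paper and is correct in substance, but there is one slip in the injectivity step and one genuine divergence of method in the surjectivity step worth noting.

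In the injectivity argument you write that applying $\de$ to $\dd\de\rho=\Box\gamma$ yields $\Box\de\gamma=0$. It does not: $\de\dd\de\rho=(\Box-\dd\de)\de\rho=\Box\de\rho$, so you get $\Box\de\gamma=\Box\de\rho$, i.e.\ $\de\gamma=\de\rho$, not $\de\gamma=0$. The operator you want to apply to the first equation is $\dd$, giving $\Box\dd\gamma=\dd\dd\de\rho=0$ and hence $\dd\gamma=0$; symmetrically, applying $\de$ to $\de\dd\rho=\Box\tilde\gamma$ gives $\de\tilde\gamma=0$. The conclusion $\rho=\gamma+\tilde\gamma\in\Omega_{\c,\dd}^k\oplus\Omega_{\c,\de}^k$ is unchanged (it is symmetric under the swap), but as written the intermediate step is false. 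Otherwise your injectivity argument is a minor streamlining of the paper's: you avoid deriving $\de\gamma=\de\rho$ and $\dd\tilde\gamma=\dd\rho$ and instead observe directly that $\Box\rho=\dd\de\rho+\de\dd\rho=\Box(\gamma+\tilde\gamma)$.

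For surjectivity you take a genuinely different route. The paper fixes Lorenz-gauge representatives $\de A=\de\tilde A=0$, so $\Box A=\Box\tilde A=0$, writes $A=G\alpha$, $\tilde A=G\tilde\alpha$, and finds $\rho$ with $\dd\alpha-\ast\dd\tilde\alpha=\Box\rho$, from which $\dd\alpha=\dd\de\rho$ and hence $\dd A=\dd G\de\rho$. You instead use a switching function $\chi$ to build the compactly supported source $\rho_0=[\Box,\chi]\omega$ with $G\rho_0=\omega$, and then massage $\rho_0=\dd\de(\chi\omega)+\de\dd(\chi\omega)$ into the form $\dd\de\sigma-\Box\nu$ by exploiting both $\omega=\dd A$ (to get $\nu$) and $\omega=\de\alpha$ (to get $\tilde\sigma$). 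Both arguments are correct and both hinge on the same structural fact — that an element of $\Dyn^k(M)$ is simultaneously exact and coexact — but they reach it differently: the paper's version pushes the gauge freedom into the choice of potentials, whereas yours pushes it into the choice of $\chi$ and is somewhat more elementary in that it never invokes gauge fixing. Your version also makes explicit (in the final sentence) why both halves of the definition of $\Dyn^k(M)$ are indispensable, which is a useful observation.
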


\begin{proof}
First of all, notice that $L$ is well-defined. In fact, this follows from the fact 
that $G$ is the causal propagator for $\Box$ on $p$-forms and 
that both $\dd$ and $\ast$ intertwine $\Box$ (defined on forms of suitable degrees). 
In particular, one obtains $\dd G \de = G (\Box - \de \dd) = - \de G \dd$ 
on $k$-forms with compact support. This confirms that $L$ is well-defined 
and that the equality displayed in its definition holds true. 
\vskip .5em

To confirm injectivity, let us consider $\rho \in \Omega_\c^k(M)$ such that $G \dd \de \rho = 0$. 
Then by the properties of the causal propagator, see e.g.\ \cite{BGP}, 
there exists $\alpha \in \Omega_\c^k(M)$ such that $\Box \alpha = \dd \de \rho$. 
In particular, $\dd \alpha = 0$ and $\de \rho = \de \alpha$. 
Since also $G \de \dd \rho = 0$, a similar argument allows us to find $\tilde \alpha \in \Omega_\c^k(M)$ 
such that $\de \tilde \alpha = 0$ and $\dd \rho = \dd \tilde \alpha$. 
Combining these results, one has the identity 
\begin{equation}
\Box \rho = \de \dd \alpha + \dd \de \tilde \alpha = \Box(\alpha + \tilde \alpha), 
\end{equation}
therefore $\rho = \alpha + \tilde \alpha \in \Omega_{\c,\dd}^k(M) \oplus \Omega_{\c,\de}^k(M)$. 
\vskip .5em

To show that $L$ is also surjective, consider $\dd A = \ast \dd \tilde A \in \Dyn^k(M)$. 
Without loss of generality, we can assume that $\de A = 0$ and $\de \tilde A = 0$ 
(this corresponds to fixing the Lorenz gauge, as in \cite{Ben16} for example). 
With this further assumption, $\dd A = \ast \dd \tilde A$ entails that $\Box A = 0$ and $\Box \tilde A = 0$. 
Therefore we find $\alpha \in \Omega_\c^{k-1}(M)$ and $\tilde \alpha \in \Omega_\c^{m-k-1}(M)$ 
such that $G \alpha = A$ and $G \tilde \alpha = \tilde A$. 
From $\dd A = \ast \dd \tilde A$ it follows that there exists $\rho \in \Omega_\c^k(M)$ 
such that $\dd \alpha - \ast \dd \tilde \alpha = \Box \rho$. 
Evaluating the left and the right hand side on $\dd \de$, 
one obtains $\Box \dd \alpha = \dd \de \dd \alpha = \Box \dd \de \rho$, 
hence $\dd \alpha = \dd \de \rho$. This allows us to conclude that $\dd A = \dd G \de \rho$. 
Since $L$ is clearly linear, we conclude that $L$ is an isomorphism of vector spaces as claimed. 
\end{proof}

The isomorphism in Proposition \ref{prpDynIso} can be promoted to one of symplectic vector spaces. 
In fact, we can equip the vector space 
\begin{equation}
\Omega_\c^k(M)_\Dyn \doteq \frac{\Omega_\c^k(M)}{\Omega_{\c,\dd}^k(M) \oplus \Omega_{\c,\de}^k(M)}
\end{equation}
with a symplectic structure as follows: 
\begin{align}
\tau_\Dyn: \Omega_\c^k(M)_\Dyn \times \Omega_\c^k(M)_\Dyn \longrightarrow \bR,
&& ([\rho],[\rho^\prime]) \longmapsto \int_M \rho \wedge \ast G \dd \de \rho^\prime. 
\end{align}
Notice that there are other equivalent formulas defining $\tau_\Dyn$: 
\begin{equation}
-\int_M \dd \rho \wedge \ast G \dd \rho^\prime = \int_M \rho \wedge \ast G \dd \de \rho^\prime 
= \int_M \de \rho \wedge \ast G \de \rho^\prime. 
\end{equation}
These identities, which follow from the fact that $\dd$ and its formal adjoint $\de$ 
intertwine the causal propagators, show that $\tau_\Dyn$ is well-defined. 
To confirm that $\tau_\Dyn$ is antisymmetric recall that 
$\Box$ is formally self-adjoint, hence the causal propagator $G$ is formally anti-selfadjoint. 
Being also non-degenerate (to prove it one argues as for injectivity in Proposition \ref{prpDynIso}), 
$\tau_\Dyn$ is indeed a symplectic form. 
With a quite standard, although lengthy, computation, one checks that 
the isomorphism $L$ is compatible with the symplectic structures $\tau_\Dyn$ and $\sigma_\Dyn$, 
respectively defined on the source and on the target. 
This calculation is based on Stokes theorem and on the properties of the retarded/advanced Green operators 
$G^\pm: \Omega_\c^p \to \Omega^p(M)$ for $\Box: \Omega^p(M) \to \Omega^p(M)$: 
\begin{align}\label{eqn:tausigma}
\tau_\Dyn([\rho],[\rho^\prime]) & = \int_{J^+_M(\Sigma)} \Box G^- \rho \wedge \ast \dd G \de \rho^\prime 
+ \int_{J^-_M(\Sigma)} \Box G^+ \rho \wedge \ast \dd G \de \rho^\prime \nonumber \\
& = \int_\Sigma \de G \rho \wedge \ast \dd G \de \rho^\prime 
- \int_\Sigma \dd G \de \rho^\prime \wedge \ast \dd G \rho \nonumber \\
& = \sigma_\Dyn(\dd(G \de \rho), \dd(G \de \rho^\prime)),
\end{align}
where $\Sigma$ is a Cauchy surface of $M$ while $J^\pm_M(\Sigma)$ denotes its causal future/past. 
For later reference, let us observe how $\zeta_\Dyn$ of \eqref{eqn:dualityDyn} looks like from this point of view:
\begin{align}\label{eqn:dualityDyn2}
\zeta_\Dyn: \Omega^k_\c(M)_\Dyn \longrightarrow \Omega^{m-k}_\c(M)_\Dyn, 
&& [\rho] \longmapsto [(-1)^{k(m-k)} \ast \rho]. 
\end{align}
This alternative, yet equivalent, perspective on the symplectic vector space $\Dyn^k(M)$ 
suggests us how to introduce a two-point function that will be later used to define 
a Hadamard state on $\mW(\Dyn^k(M))$. In fact, due to \cite{SV01}, 
one always obtains a Hadamard two-point function $\mathfrak{W}_k \in \Omega_\c^{2k}(M \times M)^\prime$ 
associated to $\Box: \Omega^k(M) \to \Omega^k(M)$. For example, when dealing 
with ultra-static spacetimes, one way to achieve this result is to adopt 
the so-called \textit{positive frequencies} prescription, which leads to the ground state (see e.g.\ \cite{WAL94}). 
Then, mimicking the formula for the symplectic form $\tau_\Dyn$, 
one is induced to regard $\mathfrak{W}_k \circ (\id \otimes \dd \de)$ as a natural candidate 
for the two-point function of the quantum field theory corresponding to $\Dyn^k(M)$. 
\vskip 1em

For the sake of concreteness, let us focus on the case of an ultra-static globally hyperbolic spacetime $M$ 
admitting a compact Cauchy surface $\Sigma$. This means that we can present $M$ as 
\begin{align}
M \simeq \bR \times \Sigma, && g = -\dd t \otimes \dd t + h, 
\end{align}
where $h$ is a Riemannian metric on $\Sigma$ (constant in $t \in \bR$). 
This allows us to decompose differential forms on $M$ in terms of sections 
of the pullbacks along the projection $\pi_2: M \to \Sigma$ of the bundles $\bigwedge^p T^\ast \Sigma$ 
of skew-symmetric $p$-cotensors over $\Sigma$. Specifically, one has: 
\begin{equation}
\Omega^k(M) = \Gamma\left(M,\pi_2^\ast \bigwedge^k T^\ast \Sigma\right) 
\oplus dt \wedge \Gamma\left(M,\pi_2^\ast \bigwedge^{k-1} T^\ast \Sigma\right). 
\end{equation}
With respect to this decomposition, $\Box$ takes the form 
\begin{equation}
\Box(\omega_\Sigma + \dd t \wedge \omega_t) = (\partial_t^2 \omega_\Sigma + \triangle \omega_\Sigma) 
+ \dd t \wedge (\partial_t^2 \omega_t + \triangle \omega_t).
\end{equation}
This allows us to use the spectral calculus associated to 
the Hodge-de Rham Laplacian $\triangle = \de_\Sigma \dd_\Sigma + \dd_\Sigma \de_\Sigma$ on $\Sigma$ 
(note that differential, codifferential and Hodge dual are indicated with a subscript $_\Sigma$ 
whenever they refer to the geometry of the Cauchy surface $\Sigma$, instead of that of the whole spacetime $M$). 
In particular, for $p$-forms on $\Sigma$ we have the Hodge decomposition 
into harmonic, exact and coexact contributions: 
\begin{equation}
\Omega^p(\Sigma) = \mH^p(\Sigma) \oplus \dd_\Sigma \Omega^{p-1}(\Sigma) \oplus \de_\Sigma \Omega^{p+1}(\Sigma). 
\end{equation}
We denote the projections on the harmonic part and the projection on its orthogonal complement by: 
\begin{align}\label{projections}
\pi_\mH^p: \Omega^p(\Sigma) \longrightarrow \mH^p(\Sigma), &&
\pi_\perp^p: \Omega^p(\Sigma) \longrightarrow \dd_\Sigma \Omega^{p-1}(\Sigma) 
\oplus \de_\Sigma \Omega^{p+1}(\Sigma).
\end{align}
With these preparations, we can write down a quite explicit formula for the causal propagator $G$ 
associated to $\Box$ acting on $\Omega^k(M)$. Regarding $\rho_\Sigma$ and $\rho_t$ 
as smoothly $\bR$-parametrized differential forms on $\Sigma$, 
i.e.\ $t \in \bR \mapsto \rho_\Sigma(t, \cdot) \in \Omega^k(\Sigma)$ 
and $t \in \bR \mapsto \rho_t(t, \cdot) \in \Omega^{k-1}(\Sigma)$, we obtain 
\begin{subequations}
\begin{align}
G: \Omega_\c^k(M) & \longrightarrow \Omega^k(M), \nonumber \\
\rho_\Sigma + \dd t \wedge \rho_t & \longmapsto 
G_\mH\, \pi_\mH^k\, \rho_\Sigma + G_\perp\, \pi_\perp^k\, \rho_\Sigma 
+ \dd t \wedge (G_\mH\, \pi_\mH^{k-1}\, \rho_t + G_\perp\, \pi_\perp^{k-1}\, \rho_t) ,
\end{align}
where 
\begin{align}
(G_\mH\, \alpha_\mH)(t,\cdot) & = \int_\bR (t - t^\prime)\, \alpha_\mH(t^\prime, \cdot) \dd t^\prime, \\
(G_\perp\, \alpha_\perp)(t,\cdot) & = \int_\bR 
\triangle^{-\frac{1}{2}} \sin(\triangle^\frac{1}{2} (t - t^\prime))\, \alpha_\perp(t^\prime, \cdot) \dd t^\prime, 
\end{align}
\end{subequations}
for $\alpha_\mH, \alpha_\perp \in \Gamma_\c(M,\pi_2^\ast \bigwedge^p T^\ast \Sigma)$ such that, 
for each $t \in \bR$, $\alpha_\mH(t, \cdot) \in \mH^p(\Sigma)$ and 
$\alpha_\perp(t, \cdot) \in \dd_\Sigma \Omega^{p-1}(\Sigma) \oplus \de_\Sigma \Omega^{p+1}(\Sigma)$. 

Following an approach inspired by \cite{FP03}, we introduce a bidistribution 
$\mathfrak{W}_k \in \Omega_\c^{2k}(M \times M)^\prime$ where only the part orthogonal to the harmonic one contributes: 
\begin{subequations}\label{state-part1}
\begin{align}\nonumber
\mathfrak{W}_k: \Omega_\c^k(M) \otimes \Omega_\c^k(M) & \longrightarrow \bC \\
(\rho_\Sigma + \dd t \wedge \rho_t) \otimes (\rho^\prime_\Sigma + \dd t \wedge \rho^\prime_t) 
& \longmapsto \mathfrak{W}_\perp(\pi_\perp^k\, \rho_\Sigma \otimes \pi_\perp^k\, \rho^\prime_\Sigma) 
- \mathfrak{W}_\perp(\pi_\perp^{k-1}\, \rho_t \otimes \pi_\perp^{k-1}\, \rho^\prime_t),
\end{align}
where 
\begin{equation}
\mathfrak{W}_\perp(\alpha_\perp \otimes \alpha^\prime_\perp) = \int_\bR \int_\bR \Big\langle \alpha_\perp(t, \cdot),\, 
\tfrac{1}{2} \triangle^{-\frac{1}{2}} \exp(-i\, \triangle^\frac{1}{2}\, (t-t^\prime))\, 
\alpha^\prime_\perp(t^\prime, \cdot) \Big\rangle\, \dd t\, \dd t^\prime,\label{state-part2}
\end{equation}
\end{subequations}
for $\alpha_\perp, \alpha^\prime_\perp \in \Gamma_\c(M,\pi_2^\ast \bigwedge^p T^\ast \Sigma)$ ($p=k-1,k$) 
such that, for each $t \in \bR$, $\alpha_\perp(t, \cdot), \alpha_\perp^\prime(t, \cdot) 
\in \dd_\Sigma \Omega^{p-1}(\Sigma) \oplus \de_\Sigma \Omega^{p+1}(\Sigma)$, 
and where $\langle \cdot, \cdot \rangle$ denotes the $L^2$-scalar product on $\Omega^p(\Sigma)$. 
A straightforward computation allows us to confirm that $\mathfrak{W}_k$ is a bisolution of $\Box$. 
In fact, for $\alpha_\perp, \alpha^\prime_\perp$, as above we have 
\begin{equation}
\mathfrak{W}_\perp((\partial_t^2 \alpha_\perp + \triangle \alpha_\perp) \otimes \alpha^\prime_\perp) = 
\mathfrak{W}_\perp(\alpha_\perp \otimes (\partial_t^2 \alpha^\prime_\perp + \triangle \alpha^\prime_\perp)) = 0. 
\end{equation} 
The argument illustrated in \cite[Appendix B]{FP03} allows us to conclude that 
$\mathfrak{W}_k$ fulfils the microlocal spectrum condition (recall that $\mathfrak{W}_k$ is a $\Box$-bisolution 
whose antisymmetric part differs from $-i\, G$ only for the harmonic contribution, which is smooth). 
Using $\mathfrak{W}_k$ we introduce 
\begin{equation}
\omega_2 \doteq (\id \otimes \dd \de)\, \mathfrak{W}_k 
= \mathfrak{W}_k \circ (\id \otimes \dd \de) \in \Omega_\c^{2(m-k)}(M \times M)^\prime. 
\end{equation}
Recalling that for each $\omega \in \Omega^k(M)$ one has 
\begin{subequations}
\begin{align}
\dd \de (\omega_\Sigma + \dd t \wedge \omega_t) 
= \dd_\Sigma \de_\Sigma \omega_\Sigma + \dd_\Sigma \partial_t \omega_t 
+ \dd t \wedge (\partial_t^2 \omega_t + \dd_\Sigma \de_\Sigma \omega_t 
+ \de_\Sigma \partial_t \omega_\Sigma), \\
\de \dd (\omega_\Sigma + \dd t \wedge \omega_t) = 
\partial_t^2 \omega_\Sigma + \de_\Sigma \dd_\Sigma \omega_\Sigma - \dd_\Sigma \partial_t \omega_t 
+ \dd t \wedge (\de_\Sigma \dd_\Sigma \omega_t - \de_\Sigma \partial_t \omega_\Sigma), 
\end{align}
\end{subequations}
one can confirm that 
\begin{equation}
\omega_2 = (\dd \de \otimes \id)\, \mathfrak{W}_k. 
\end{equation}
Furthermore, since $\mathfrak{W}_k$ is a bisolution of $\Box$, it follows that 
\begin{equation}
\omega_2 = - (\id \otimes \de \dd)\, \mathfrak{W}_k = - (\de \dd \otimes \id)\, \mathfrak{W}_k. 
\end{equation}
Similarly, recalling that for each $\omega \in \Omega^p(M)$ one has 
\begin{subequations}\label{eqn:de}
\begin{align}
\dd (\omega_\Sigma + \dd t \wedge \omega_t) 
& = \dd_\Sigma \omega_\Sigma + \dd t \wedge (\partial_t \omega_\Sigma - \dd_\Sigma \omega_t), \\
\de (\omega_\Sigma + \dd t \wedge \omega_t) 
& = \de_\Sigma \omega_\Sigma + \partial_t \omega_t - \dd t \wedge \de_\Sigma \omega_t, 
\end{align}
\end{subequations}
one shows also that 
\begin{equation}\label{eqn:twopoint-dede}
\omega_2 = (\de \otimes \de)\, \mathfrak{W}_{k-1} = - (\dd \otimes \dd)\, \mathfrak{W}_{k+1}. 
\end{equation}
These observations entail that $\omega_2$ vanishes both on closed and on coclosed forms, thus yielding 
\begin{align}
\omega_2: \Omega_\c^k(M)_\Dyn \otimes \Omega_\c^k(M)_\Dyn \longrightarrow \bC, 
&& [\rho] \otimes [\rho^\prime] \longmapsto \omega_2(\rho \otimes \rho^\prime). 
\end{align}
Notice that the antisymmetric part of $\omega_2$ agrees with $\tau_\Dyn$: 
\begin{equation}
\omega_2([\rho] \otimes [\rho^\prime] - [\rho^\prime] \otimes [\rho]) = -i\, \tau_\Dyn([\rho],[\rho^\prime]) 
\end{equation}
for all $[\rho],[\rho^\prime] \in \Omega^k(M)_\Dyn$. 
Eq.\ \eqref{eqn:twopoint-dede} is also crucial to confirm that $\omega_2$ inherits 
the microlocal spectrum condition from $\mathfrak{W}_{k+1}$. In fact, we are going to show that 
\begin{equation}\label{eqn:WF}
\WF(\omega_2) = \WF(\mathfrak{W}_{k+1}). 
\end{equation}
First of all, notice that the principal symbol of $\dd \otimes \dd$ is the 
homomorphism of vector bundles over $M \times M$: 
\begin{align}
\sigma_{\dd \otimes \dd}: T^\ast (M \times M) \otimes 
\Big( \bigwedge^k T^\ast M \boxtimes \bigwedge^k T^\ast M \Big) 
& \longrightarrow \bigwedge^{k+1} T^\ast M \boxtimes \bigwedge^{k+1} T^\ast M) \nonumber \\
(k,k^\prime) \otimes (\omega \otimes \omega^\prime) 
& \longmapsto (k \wedge \omega) \otimes (k^\prime \wedge \omega^\prime). 
\end{align}
In particular, it follows that $(k,k^\prime) \in  T^\ast (M \times M)$ belongs to 
the characteristic set $\Char(\dd \otimes \dd)$ 
if and only if precisely one between $k$ and $k^\prime$ vanishes. 
Then the microlocal spectrum condition for $\mathfrak{W}_{k+1}$ entails that 
$\WF(\mathfrak{W}_{k+1}) \cap \Char(\dd \otimes \dd) = \emptyset$. 
Taking into account also \cite[Ch.\ 8]{Hor}, we have the chain of inclusions 
$\WF(\omega_2) \subseteq \WF(\mathfrak{W}_{k+1}) \subseteq \WF(\omega_2) \cup \Char(\dd \otimes \dd)$. 
Therefore \eqref{eqn:WF} follows, showing that $\omega_2$ 
inherits the microlocal spectrum condition from $\mathfrak{W}_{k+1}$. 
\vskip 1em

To summarize, we constructed a bidistribution $\omega_2$ that fulfils the microlocal spectrum condition, 
that descends to the quotient in $\Omega_\c^k(M)_\Dyn$ and that is compatible 
with the canonical commutation relations encoded in $\mW(\Dyn^k(M))$. 
A straightforward computation conducted expanding \eqref{eqn:twopoint-dede} 
allows us to confirm that $\omega_2$ is also non-negative, cf.\ \cite{FP03} for a similar argument: 
\begin{equation}
\omega_2([\rho] \otimes [\rho]) \geq 0
\end{equation}
for all $[\rho] \in \Omega_\c^k(M)_\Dyn$. In fact, introducing also the projections 
\begin{align}
\pi_{\dd}^p: \Omega^p(\Sigma) \longrightarrow \dd_\Sigma \Omega^{p-1}(\Sigma), 
&& \pi_{\de}^p: \Omega^p(\Sigma) \longrightarrow \de_\Sigma \Omega^{p+1}(\Sigma),
\end{align}
that decompose $\pi^p_\perp$ as $\pi^p_\perp = (\pi_{\dd}^p, \pi_{\de}^p)$ 
and recalling \eqref{eqn:de}, for all $\rho, \rho^\prime \in \Omega_\c^k(M)$ one obtains 
\begin{align}\label{2-pt_function}
\omega_2(\rho \otimes \rho^\prime) & = \mathfrak{W}_{k-1}(\de \rho \otimes \de \rho^\prime) \nonumber \\
& = \mathfrak{W}_\perp \Big( \pi_{\de}^{k-1}(\de_\Sigma \rho_\Sigma + \partial_t \rho_t) 
\otimes \pi_{\de}^{k-1}(\de_\Sigma \rho^\prime_\Sigma + \partial_t \rho^\prime_t) \Big) \nonumber \\
& \phantom{=} + \mathfrak{W}_\perp \Big( \pi_{\dd}^{k-1} (\partial_t \rho_t) \otimes \pi_{\dd}^{k-1} (\partial_t \rho^\prime_t) \Big) 
- \mathfrak{W}_\perp \Big( \pi_{\de}^{k-2}(\de_\Sigma \rho_t) \otimes \pi_{\de}^{k-2}(\de_\Sigma \rho^\prime_t) \Big) \nonumber \\
& = \mathfrak{W}_\perp \Big( \pi_{\de}^{k-1}(\de_\Sigma \rho_\Sigma + \partial_t \rho_t) 
\otimes \pi_{\de}^{k-1}(\de_\Sigma \rho^\prime_\Sigma + \partial_t \rho^\prime_t) \Big) \geq 0, 
\end{align}
where the two contributions appearing in the third line cancel out due to 
\begin{align}
\mathfrak{W}_\perp \Big( \pi_{\de}^{k-2}(\de_\Sigma \rho_t) \otimes \pi_{\de}^{k-2}(\de_\Sigma \rho^\prime_t) \Big) 
\nonumber & = \int_\bR \int_\bR \langle \pi_{\dd}^{k-1} \rho_t, \tfrac{1}{2} \triangle^\frac{1}{2}\, 
\exp(-i\, \triangle^\frac{1}{2} (t - t^\prime))\, \pi_{\dd}^{k-1} \rho^\prime_t \rangle\, \dd t \, \dd t^\prime \\
& = \mathfrak{W}_\perp \Big( \pi_{\dd}^{k-1} (\partial_t \rho_t) \otimes \pi_{\dd}^{k-1} (\partial_t \rho^\prime_t) \Big). 
\end{align}
We are now in a position to define the desired state on $\mW(\Dyn^k(M))$: 
\begin{proposition}\label{prp:stateDyn}
Let $M$ be an $m$-dimensional globally hyperbolic spacetime with compact Cauchy surface 
and recall Proposition \ref{prpDynIso}. Then 
\begin{align}
\omega_\Dyn: \mW(\Dyn^k(M)) \longrightarrow \bC, 
&& W(L [\rho]) \longmapsto \exp \big( -2\pi\, \omega_2([\rho] \otimes [\rho]) \big). 
\end{align}
is a state on the C$^\ast$-algebra $\mW(\Dyn^k(M))$ that fulfils the microlocal spectrum condition. 
Furthermore, for the state on $\mW(\Dyn^k(M))$ and its analogue on $\mW(\Dyn^{m-k}(M))$, one has 
\begin{equation}
\omega_\Dyn \circ \mW(\zeta_\Dyn) = \omega_\Dyn, 
\end{equation}
where $\mW(\zeta_\Dyn): \mW(\Dyn^k(M)) \to \mW(\Dyn^{m-k}(M))$ 
is the duality introduced in Remark \ref{rem:dualitiesQuant}. 
\end{proposition}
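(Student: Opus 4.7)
The proposition combines three assertions---the state property, the microlocal spectrum condition, and invariance under $\mW(\zeta_\Dyn)$---and I plan to dispatch them in this order, exploiting the fact that the hard analytic work has been carried out in the preparations leading up to the statement.

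For the state property, the natural plan is to invoke the standard characterization of quasi-free states on Weyl-type C$^\ast$-algebras from \cite{MAN73}: a prescription of the form $W(f) \mapsto \exp(-2\pi\,\Omega(f,f))$ extends to a state on $\mW(G,\sigma)$ exactly when $\Omega$ is a positive semi-definite Hermitian form on the complexification of $G$ whose imaginary part reproduces $-\sigma/2$. Under the isomorphism $L$ of Proposition \ref{prpDynIso}, taking $\Omega = \omega_2$ makes the imaginary-part requirement identical to the antisymmetric identity $\omega_2([\rho]\otimes[\rho']) - \omega_2([\rho']\otimes[\rho]) = -i\,\tau_\Dyn([\rho],[\rho'])$ derived above, together with the compatibility $\tau_\Dyn = \sigma_\Dyn\circ(L\times L)$ established in \eqref{eqn:tausigma}. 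The diagonal positivity $\omega_2(\rho,\rho)\geq 0$ is \eqref{2-pt_function}; its strengthening to full Hermitian positivity on the complexification is visible from \eqref{state-part2}: after a spectral decomposition in $\triangle$ and a Fourier transform in $t$, the kernel $\mathfrak{W}_\perp$ becomes a direct integral of rank-one positive kernels with weights $\tfrac{1}{2\sqrt{\lambda}}$, so the Cauchy--Schwarz-type bound $|\sigma_\Dyn(f,g)|^2 \leq 4\,\omega_2(f,f)\,\omega_2(g,g)$ needed to close the positivity argument holds automatically. Positivity on the generating $\ast$-subalgebra then propagates to the full C$^\ast$-algebra through the universal property of the enveloping construction.

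The microlocal spectrum condition is essentially free: the identity $\WF(\omega_2) = \WF(\mathfrak{W}_{k+1})$ in \eqref{eqn:WF} has already been established, and $\mathfrak{W}_{k+1}$ is of Hadamard form by construction \cite{FP03,SV01}.

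For the duality invariance, after unfolding via \eqref{eqn:dualityDyn2} the identity reduces to showing $\omega_2^{(m-k)}([\ast\rho]\otimes[\ast\rho]) = \omega_2^{(k)}([\rho]\otimes[\rho])$ for $[\rho]\in\Omega^k_{\c}(M)_{\Dyn}$, where the superscript records the form degree and the overall factor $(-1)^{k(m-k)}$ squares to $1$ on the diagonal. I would use either of the equivalent expressions $\omega_2 = -\mathfrak{W}_{k+1}\circ(\dd\otimes\dd) = \mathfrak{W}_{k-1}\circ(\de\otimes\de)$ together with the intertwining relations $\ast\,\dd = \pm\,\de\,\ast$ and the observation that on an ultrastatic background $\ast_\Sigma$ is an isometry commuting with $\triangle$ that exchanges $\pi_{\dd}^p$ with $\pi_{\de}^{m-1-p}$; the explicit formula \eqref{state-part2} is then manifestly invariant under this swap. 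The main---and essentially only---delicate point in the whole proof is keeping track of the signs arising from the interplay between $\ast_M$ and $\ast_\Sigma$ in this last step.
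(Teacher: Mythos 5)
Your proposal is correct and follows essentially the same route as the paper's proof: in both cases the state property reduces to the positivity of $\omega_2$ as a Hermitian form compatible with $\tau_\Dyn$ (you invoke the MSTV73 quasi-free characterization explicitly, while the paper leaves this implicit before passing from $\mA(\Dyn^k(M))$ to $\mB(\Dyn^k(M))$ to the enveloping C$^\ast$-algebra via \cite[Prop.\ 2.7.4]{Dixmier}); the microlocal spectrum condition is read off from \eqref{eqn:WF}; and the duality compatibility is established through \eqref{eqn:dualityDyn2}, \eqref{eqn:twopoint-dede} and the $\ast_\Sigma$-invariance of $\mathfrak{W}_\perp$ recorded in \eqref{eqn:Wperpast}, which is exactly your observation that $\ast_\Sigma$ is an isometry commuting with $\triangle$ and swapping the Hodge projectors.
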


\begin{proof}
We have shown that $\omega_2$ is a bidistribution 
1) that fulfils the microlocal spectrum condition, 
2) that descends to the quotient $\Omega^k(M)_\Dyn \simeq \Dyn^k(M)$, 
3) whose antisymmetric part coincides with $-i\, \tau_\Dyn$ (which is equivalent to $-i\, \sigma_\Dyn$ 
under the isomorphism of Proposition \ref{prpDynIso}, cf.\ \eqref{eqn:tausigma}), 
4) that is non-negative. 
Therefore $\omega_\Dyn$ will be a ``Hadamard state'' for the C$^\ast$-algebra $\mW(\Dyn^k(M))$ 
as soon as we confirm that it is sufficient to specify it 
on the generators of the $\ast$-algebra $\mA(\Dyn^k(M))$, cf.\ Section \ref{sec:quantization}. 
Note that it is positive and normalized on $\mA(\Dyn^k(M))$. 
Furthermore, it is immediate to check continuity with respect to $\|\cdot\|_1$ 
because the exponential factor is bounded from above by 1. 
In particular, $\omega_\Dyn$ can be extended to the $\|\cdot\|_1$-completion of $\mA(\Dyn^k(M))$, 
i.e.\ the Banach $\ast$-algebra $\mB(\Dyn^k(M))$. 
By a standard property of the enveloping C$^\ast$-algebra \cite[Prop.\ 2.7.4]{Dixmier}, 
the representations of $\mW(\Dyn^k(M)) = \mC^\ast(\mB(\Dyn^k(M)))$ are in bijective correspondence 
with those of $\mB(\Dyn^k(M))$. Therefore it is sufficient to specify $\omega_\Dyn$ on $\mA(\Dyn^k(M))$ 
(as we did) in order to obtain a unique canonical extension to $\mW(\Dyn^k(M))$. 
\vskip .5em

To confirm that our prescription for the construction of $\omega_\Dyn$ is compatible 
with $\mW(\zeta_\Dyn): \mW(\Dyn^k(M)) \to \mW(\Dyn^{m-k}(M))$, 
let us observe that, on account of a similar property of the $L^2$-scalar product on $\Omega^p(\Sigma)$, 
for all $\alpha_\perp \in \Gamma_\c(M, \pi_2^\ast \bigwedge^p T^\ast \Sigma)$ and 
$\beta_\perp \in \Gamma_\c(M, \pi_2^\ast \bigwedge^{m-p-1} T^\ast \Sigma)$ such that, for all $t \in \bR$, 
$\alpha_\perp(t, \cdot) \in \dd_\Sigma \Omega^{p-1}(\Sigma) \oplus \de_\Sigma \Omega^{p+1}(\Sigma)$ and 
$\beta_\perp(t, \cdot) \in \dd_\Sigma \Omega^{m-p-2}(\Sigma) \oplus \de_\Sigma \Omega^{m-p}(\Sigma)$, 
one has 
\begin{equation}\label{eqn:Wperpast}
\mathfrak{W}_\perp(\alpha_\perp \otimes \ast_\Sigma \beta_\perp) 
= \mathfrak{W}_\perp((-1)^{mp} \ast_\Sigma \alpha_\perp \otimes \beta_\perp), 
\end{equation}
As a direct consequence, for all $[\rho], [\rho^\prime] \in \Omega^k_\c(M)_\Dyn$ one finds 
\begin{equation}
\omega_2(\zeta_\Dyn [\rho] \otimes \zeta_\Dyn [\rho^\prime]) 
= \mathfrak{W}_{m-k-1}(\de \ast \rho \otimes \de \ast \rho^\prime) 
= - \mathfrak{W}_{k+1}( \dd \rho \otimes \dd \rho^\prime) = \omega_2([\rho] \otimes [\rho^\prime]). 
\end{equation}
Notice that we used \eqref{eqn:dualityDyn2} and \eqref{eqn:twopoint-dede} for the first step, 
\eqref{eqn:Wperpast} for the second one and again \eqref{eqn:twopoint-dede} to conclude. 
In particular, this entails the desired claimed relation 
between the state on $\mW(\Dyn^k(M))$ and the state on $\mW(\Dyn^{m-k}(M))$. 
\end{proof}

\begin{remark}
To conclude this section, we observe that $\omega_\Dyn$ has been constructed so to be a {\em ground state}, as per \cite[App. A]{SV00}.
\end{remark}

\subsection{\label{sec:TopfState}States for the torsion-free topological sector} \label{sec:states_torsion_free}
In this section we exhibit a state on $\mW(\Topf^k(M))$ commenting, in particular, on its significance. 
Recalling Section \ref{sec:quantization}, we have that $\mW(\Topf^k(M))$ 
is the enveloping C$^\ast$-algebra associated to the Banach $\ast$-algebra $\mB(\Topf^k(M))$ 
obtained as the $\|\cdot\|_1$-completion of the $\ast$-algebra $\mA(\Topf^k(M))$. 
Recalling also Section \ref{secTopf}, we denote the generators of $\mA(\Topf^k(M))$ 
by $W(u,\tilde u, z, \tilde z)$ for $(u,\tilde u, z, \tilde z) \in \Topf^k(M)$. 

\begin{proposition}\label{prp:statefree}
	Let $\omega_\free:\mW(\Topf^k(M))\to\bC$ be the linear functional specified by
	\begin{equation}\label{eqStateTopf}
	\omega_\free(W(u,\tilde u, z, \tilde z))=\begin{cases}
	1 & \textrm{if } z=0, \tilde{z}=0,\\
	0 & \textrm{otherwise}.
	\end{cases}
	\end{equation}
	Then $\omega_\free$ is a state on the C$^\ast$-algebra $\mW(\Topf^k(M))$. Furthermore the state on $\mW(\Topf^k(M))$ and its analogue on $\mW(\Topf^{m-k}(M))$ are compatible with the duality $\mW(\zeta_\free): \mW(\Topf^k(M)) \to \mW(\Topf^{m-k}(M))$ introduced in Remark \ref{rem:dualitiesQuant}, i.e.,  
	\begin{equation}
	\omega_\free \circ \mW(\zeta_\free) = \omega_\free.
	\end{equation}
\end{proposition}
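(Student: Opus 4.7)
The plan is to verify that $\omega_\free$ satisfies all the requirements of a state (linearity, normalization, positivity, continuity) on the dense $*$-subalgebra $\mA(\Topf^k(M))$ and then extend it canonically to $\mW(\Topf^k(M))$. Linearity is built into the definition, while normalization $\omega_\free(W(0,0,0,0))=1$ is immediate. The bulk of the work is positivity; the duality compatibility will follow by an immediate inspection on generators.

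For positivity, take a generic element $a = \sum_{j=1}^N \alpha_j W(u_j, \tilde u_j, z_j, \tilde z_j) \in \mA(\Topf^k(M))$. Expanding $a^* a$ via the Weyl relations \eqref{defining_relations} and collecting phases, each cross-term $W^*W$ produces a generator with group argument $(u_j - u_i, \tilde u_j - \tilde u_i, z_j - z_i, \tilde z_j - \tilde z_i)$. By the definition of $\omega_\free$, only indices $(i,j)$ with $z_i = z_j$ and $\tilde z_i = \tilde z_j$ contribute. I would therefore partition $\{1,\dots,N\}$ into equivalence classes $I_{(z,\tilde z)}$ according to the value of $(z_j, \tilde z_j)$, so that
\begin{equation}
\omega_\free(a^*a) = \sum_{(z,\tilde z)} \sum_{i,j \in I_{(z,\tilde z)}} \bar\alpha_i \alpha_j\, \exp\!\big(2\pi i\, \sigma_\free(-(u_i,\tilde u_i,z,\tilde z), (u_j,\tilde u_j,z,\tilde z))\big).
\end{equation}
Using the explicit formula \eqref{eqTopfSymplSigma} for $\sigma_\free$ one checks that, on the fixed class $I_{(z,\tilde z)}$, the phase factorizes as $\overline{\chi_{(z,\tilde z)}(u_i,\tilde u_i)}\,\chi_{(z,\tilde z)}(u_j,\tilde u_j)$ for the group character
\begin{equation}
\chi_{(z,\tilde z)}(u,\tilde u) \doteq \exp\!\big(2\pi i\, [\la \tilde u, z\ra_\free - (-1)^{k(m-k)}\la u, \tilde z\ra_\free]\big).
\end{equation}
Hence each class contributes $\big|\sum_{j\in I_{(z,\tilde z)}} \alpha_j\, \chi_{(z,\tilde z)}(u_j,\tilde u_j)\big|^2 \geq 0$, and positivity follows. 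Extracting this factorizing character is the technical heart of the proof.

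With positivity and normalization in hand, continuity is immediate: from \eqref{eqStateTopf} one has $|\omega_\free(a)| \leq \sum_j |\alpha_j| = \|a\|_1$, so $\omega_\free$ extends continuously to the Banach $*$-algebra $\mB(\Topf^k(M))$. The universal property of the enveloping C$^*$-algebra (cf.\ \cite[Prop.\ 2.7.4]{Dixmier}), already invoked in the proof of Proposition \ref{prp:stateDyn}, then yields the unique canonical extension to a state on $\mW(\Topf^k(M)) = \mC^*(\mB(\Topf^k(M)))$.

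Finally, for compatibility with duality, by Remark \ref{rem:dualitiesQuant} it suffices to check the identity on generators. From \eqref{eqn:dualityFree},
\begin{equation}
\omega_\free\big(\mW(\zeta_\free) W(u,\tilde u, z,\tilde z)\big) = \omega_\free\big(W(\tilde u, (-1)^{k(m-k)+1}u, \tilde z, (-1)^{k(m-k)+1}z)\big),
\end{equation}
which equals $1$ precisely when $\tilde z = 0$ and $z = 0$, matching $\omega_\free(W(u,\tilde u, z,\tilde z))$ exactly. By linearity and continuity the identity then promotes to all of $\mW(\Topf^k(M))$.
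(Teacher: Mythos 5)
Your proof is correct and follows essentially the same route as the paper's: partition the index set by the free cohomology components $(z_j,\tilde z_j)$, extract a factorizing character so that each class contributes a non-negative term $\big|\sum_j \alpha_j\,\chi_{(z,\tilde z)}(u_j,\tilde u_j)\big|^2$, bound $\omega_\free$ by $\|\cdot\|_1$, and pass to the enveloping C$^\ast$-algebra via \cite[Prop.\ 2.7.4]{Dixmier}; your character $\chi_{(z,\tilde z)}$ is precisely the phase $\exp\!\big(2\pi i\,\sigma_\free((u,\tilde u,0,0),(0,0,z,\tilde z))\big)$ that the paper writes out. The duality check on generators is also the same observation.
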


\begin{proof}
	The functional is normalized since the unit in $\mathcal{W}(\Topf^k(M))$ is the element $W(0,0,0,0)$ and, by definition, $\omega_\free(W(0,0,0,0))=1$. To prove positivity, let $I$ be an index set of finite cardinality and let $a=\sum_{i\in I}\alpha_i\, W(u_i,\tilde{u}_i,z_i,\tilde{z}_i)$, where $\alpha_i \in \bC$ and $(u_i,\tilde{u}_i,z_i,\tilde{z}_i) \in \Topf^k(M)$ for all $i \in I$. Without loss of generality, we assume $(u_i,\tilde{u}_i,z_i,\tilde{z}_i)\neq(u_j,\tilde{u}_j,z_j,\tilde{z}_j)$ for all $i,j\in I$ such that $i\neq j$. Set $i\sim j$ if and only if $z_i=z_j$ and $\tilde{z}_i=\tilde{z}_j$. Clearly, $\sim$ is an equivalence relation. Let $\tilde{I}=I/\sim$ and let us indicate with $\tilde{i}$ the equivalence class of $i \in I$. Using \eqref{eqTopfSympl} and \eqref{eqStateTopf}, we obtain
	\begin{equation}
	\omega_\free(a^*\, a) = \sum_{\tilde{i} \in \tilde{I}} 	\Big| \sum_{i \in \tilde{i}} \alpha_i\, 
	\exp\Big(2\pi i\, \sigma\big( (u_i, \tilde u_i, 0, 0), (0, 0, z_i, \tilde z_i) \big) \Big) \Big|^2 \geq 0,
	\end{equation}
    which guarantees the positivity of $\omega_\free$. Furthermore, $\omega_\free$ is clearly continuous with respect to the norm $\|\cdot\|_1$, hence it induces a unique state on the Banach $\ast$-algebra $\mB(\Topf^k(M))$. By \cite[Prop.\ 2.7.4]{Dixmier} this provides a unique representation, hence a state, also on the enveloping C$^\ast$-algebra $\mW(\Topf^k(M)) = \mC^\ast(\mB(\Topf^k(M)))$. 
\vskip .5em

To confirm that our prescription is compatible with duality note that the last two components of $\zeta_\free(u, \tilde u, z, \tilde z) \in \Topf^{m-k}(M)$ vanish if and only if the last two components of $(u, \tilde u, z, \tilde z) \in \Topf^k(M)$ vanish. Therefore $\omega_\free(\mW(\zeta_\free)\, W(u, \tilde u, z, \tilde z)) = \omega_\free(W(u, \tilde u, z, \tilde z))$, leading to the conclusion. 
\end{proof}

Observe that the state is not faithful: by direct inspection of \eqref{eqStateTopf} one finds $0 \neq a \in \mathcal{W}(\Topf^k(M))$ such that $\omega_\free(a^\ast\, a) = 0$. For example, such an $a$ is given by 
\begin{equation}\label{Gelfand_quotient}
a = W(0,0,z,\tilde{z}) 
- \exp \Big( 2\pi i\, \sigma_\free \big( (0, 0, z, \tilde z), (u, \tilde u, z, \tilde z) \big) \Big)\, W(u,\tilde u,z,\tilde z).
\end{equation}

\begin{remark}\label{rem:faithstatefree}
	A faithful alternative to $\omega_\free$ is the state $\widetilde{\omega}_\free: \mW(\Topf^k(M)) \to \bC$ defined by 
	\begin{equation}\label{eqStateTopf_faithful}
	\widetilde{\omega}_\free(W(u,\tilde{u},z,\tilde{z}))=
	\begin{cases}
	1 & \textrm{if } u=0, \tilde{u}=0, z=0, \tilde{z}=0,\\
	0 & \textrm{otherwise}.
	\end{cases}
	\end{equation}
	Although the GNS representation induced by $\widetilde \omega_\free$ is faithful, $\omega_\free$ leads to a more appealing interpretation, which is why we regard it as our prime example.
\end{remark}

In order to explain why we regard \eqref{eqStateTopf} as our prime example, we construct the associated GNS representation. The Gelfand ideal $\mI^k_\free \subseteq \mW(\Topf^k(M))$ of $\omega_\free$ is precisely generated by elements of $\mW(\Topf^k(M))$ of the form \eqref{Gelfand_quotient}. Hence the GNS Hilbert space is the completion 
\begin{subequations}
\begin{equation}
\scrH^k_\free \doteq \overline{\scrD^k_\free}
\end{equation}
of the pre-Hilbert space 
\begin{equation}
\scrD^k_\free \doteq \mW(\Topf^k(M)) / \mI^k_\free 
= \textrm{span}_\bC \{ |z,\tilde{z}\rangle : (z,\tilde z) \in \H^{k,m-k}(M;\bZ) \}
\end{equation}
equipped with the scalar product 
\begin{align}
\langle 	\,\cdot	\, | \, \cdot \, \rangle: \scrD^k_\free \times \scrD^k_\free \longrightarrow \bC, 
&& \langle z^\prime,\tilde{z}^\prime | z,\tilde{z} \rangle 
\doteq \omega_\free \big(W(0,0,z^\prime,\tilde{z}^\prime)^\ast\, W(0,0,z,\tilde{z}) \big)
\end{align}
induced by $\omega_\free$, where for notational convenience we set 
\begin{equation}
| z,\tilde{z} \rangle \doteq [W(0,0,z,\tilde{z})] \in \mW(\Topf^k(M)) / \mI^k_\free. 
\end{equation}
\end{subequations}
The GNS representation associated to $\omega_\free$ is defined by 
\begin{subequations}
\begin{align}
\pi^k_\free: \mW(\Topf^k(M)) \longrightarrow \mathcal{BL}(\scrH^k_\free), 
&& W(u,\tilde{u},z,\tilde{z})\, \longmapsto \pi_\free^k(W(u,\tilde{u},z,\tilde{z})),
\end{align}
where $\pi_\free^k(W(u,\tilde{u},z,\tilde{z}))$ acts on $\scrH^k_\free$ according to 
\begin{align}
\pi_\free^k(W(u,\tilde{u}, z,\tilde{z})): \scrH^k_\free & \longrightarrow \scrH^k_\free, \\
| z^\prime,\tilde{z}^\prime \rangle & \longmapsto 
\exp \Big( 2\pi i\, \sigma_\free \big( (u, \tilde u, 0, 0), (0, 0, z + 2 z^\prime, \tilde z + 2 \tilde z^\prime) \big) \Big)\, |z+z^\prime,\tilde{z}+\tilde{z}^\prime \rangle. \nonumber
\end{align}
\end{subequations}
As a by-product, the cyclic vector of the GNS representation is $|0,0\rangle$. 
Furthermore, one observes that generators of the form $W(u, \tilde u , 0, 0)$ act on $\scrH^k_\free$ 
by multiplication with a phase that depends linearly on $u$ and $\tilde u$, 
while those of the form $W(0, 0, z, \tilde z)$ act on $\scrH^k_\free$ by shift: 
\begin{align}\label{eqn:phase_shift}
\Phi^k(u, \tilde u) \doteq \pi_\free^k(W(u, \tilde u, 0, 0)), 
&& \Sigma^k(z, \tilde z) \doteq \pi_\free^k(W(0, 0, z, \tilde z)). 
\end{align}
In particular, it holds that 
\begin{align}
\Phi^k: \H^{k-1,m-k-1}(M;\bR) / \Hf^{k-1,m-k-1}(M;\bZ) \longmapsto \mathcal{BL}(\scrH^k_\free), 
&& (u, \tilde u) \longmapsto \Phi^k(u, \tilde u) 
\end{align}
is a strongly continuous family of unitary operators linearly 
parametrized by the quotient of $\H^{k-1,m-k-1}(M;\bR)$ by $\Hf^{k-1,m-k-1}(M;\bZ)$. 
In particular, for each $(r, \tilde r) \in \H^{k-1,m-k-1}(M;\bR)$, 
Stone's theorem provides an unbounded densely defined self-adjoint operator 
\begin{align}\label{eqn:fluxobs}
P^k(r, \tilde r): \scrD^k_\free \to \scrH^k_\free, && |z, \tilde z\rangle \longmapsto 
{2\, \widetilde \sigma_\free \big( (r, \tilde r, 0, 0), (0, 0, z, \tilde z) \big)}\, |z, \tilde z\rangle
\end{align}
that generates $t \in \bR \mapsto \Phi^k\left(t(r, \tilde r)\right) = \exp\big(2 \pi i t\, P^k(r, \tilde r)\big)$. 
Here $\widetilde \sigma_\free$ is the lift of $\sigma_\free$ defined by 
\begin{align}
\widetilde \sigma_\free: \big( \H^{k-1,m-k-1}(M;\bR) \oplus \Hf^{k,m-k}(M;\bZ) \big)^{\times 2} & \longrightarrow \bR, \\
\big( (r, \tilde r, z, \tilde z), (r^\prime, \tilde r^\prime, z^\prime, \tilde z^\prime) \big) 
\longmapsto & (\iota_\Sigma^\ast \tilde r \smile \iota_\Sigma^\ast z^\prime)[\Sigma] 
- (-1)^{k(m-k)} (\iota_\Sigma^\ast r \smile \iota_\Sigma^\ast \tilde z^\prime)[\Sigma] \nonumber \\
& - (\iota_\Sigma^\ast \tilde r^\prime \smile \iota_\Sigma^\ast z)[\Sigma] 
+ (-1)^{k(m-k)} (\iota_\Sigma^\ast r^\prime \smile \iota_\Sigma^\ast \tilde z)[\Sigma] \nonumber
\end{align}
in terms of the cohomological pairing on $\Sigma$ introduced in \eqref{eqn:realcohopairing}. 
Notice that these operators precisely detect the values of $z$ and $\tilde z$, 
which correspond to magnetic and electric fluxes \cite{BPhys, FMSa, FMSb}. 
As such, we regard the operators $P^k(r, \tilde r)$ as \textit{flux observables} 
(for $r=0$ only the magnetic flux is tested, conversely for $\tilde r = 0$ only the electric flux). 
The shift operators instead are precisely those modifying such fluxes 
(adding $z$ to the magnetic flux and $\tilde z$ to the electric one). 
Because of this appealing interpretation, that resembles the quantum mechanical description of 
a system formed by point particles freely moving on the circle with momenta $(z, \tilde z)$, 
we regard $\omega_\free$ as our prime example of state for the torsion-free topological sector. 

\begin{remark}\label{rem:statefree}
We already observed that the duality $\mW(\zeta_\free): \mW(\Topf^k(M)) \to \mW(\Topf^{m-k}(M))$ 
preserves the states $\omega_\free$ defined on the source and the target, cf.\ Proposition \ref{prp:statefree}. 
As a consequence, we obtain the isomorphism 
\begin{align}
U_\free^k: \scrH^k_\free \longrightarrow \scrH^{m-k}_\free, 
&& | z, \tilde z \rangle \longmapsto | \tilde z, (-1)^{k(m-k)+1} z\rangle 
\end{align}
between the GNS Hilbert spaces. $U_\free^k$ implements the duality 
$\mW(\zeta_\free): \mW(\Topf^k(M)) \to \mW(\Topf^{m-k}(M))$ of Remark \ref{rem:dualitiesQuant} 
at the level of GNS representations: 
\begin{align}
U_\free^k\, \pi_\free(\cdot)\, (U^k_\free)^{-1} = \pi_\free \circ \mW(\zeta_\free). 
\end{align}
As a by-product, we obtain that the operators in \eqref{eqn:phase_shift} 
and \eqref{eqn:fluxobs} are intertwined by these isomorphisms. 
In particular, for $m=2k$, this allows us to interpret $U_\free^k$ as the unitary operator on $\scrH^k_\free$ 
that interchanges magnetic and electric fluxes (with a sign that accounts for the appropriate degrees): 
\begin{align}
P^k(\tilde r, (-1)^{k^2+1} r)\, U^k_\free = U^k_\free\, P^k(r, \tilde r). 
\end{align}
Additional natural operations related to duality and in particular to the unitary operator $U^k_\free$ can be defined straightforwardly, cf.\ \cite[Sect.\ 4.2]{Thesis} for a detailed analysis.
\end{remark}

\subsection{\label{sec:ToptState}States for the torsion topological sector}
On the torsion topological sector we introduce a state similar to the one of Remark \ref{rem:faithstatefree}. 
Examples of spacetimes for which this sector is non-trivial are illustrated in \cite{BPhys, FMSa, FMSb}. 

\begin{proposition}\label{prp:statetor}
Let $\omega_\tor:\mW(\Topt^k(M))\to\bC$ be the linear functional specified by
\begin{equation}
\omega_\tor(W(t,\tilde t))=
\begin{cases}
1 & \textrm{if } t=0, \tilde{t}=0,\\
0 & \textrm{otherwise}.
\end{cases}
\end{equation}
Then $\omega_\tor$ is a faithful state on the C$^\ast$-algebra $\mW(\Topt^k(M))$. 
Furthermore, for the state on $\mW(\Topt^k(M))$ and its analogue on $\mW(\Topt^{m-k}(M))$, one has 
\begin{equation}\label{eqn:statetorduality}
\omega_\tor \circ \mW(\zeta_\tor) = \omega_\tor, 
\end{equation}
where $\mW(\zeta_\tor): \mW(\Topt^k(M)) \to \mW(\Topt^{m-k}(M))$ 
denotes the duality introduced in Remark \ref{rem:dualitiesQuant}. 
\end{proposition}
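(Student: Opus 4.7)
The plan is to follow closely the template of Proposition \ref{prp:statefree}, taking advantage of a crucial additional feature: $\Topt^k(M)$ is a \emph{finite} Abelian group. Indeed, since $M$ is of finite type, the integral cohomology groups $\H^p(M;\bZ)$ are finitely generated, so their torsion subgroups $\Ht^p(M;\bZ)$ are finite. Consequently $\Topt^k(M) = \Ht^{k,m-k}(M;\bZ)$ is finite, the $\ast$-algebra $\mA(\Topt^k(M))$ is finite-dimensional over $\bC$, and the three candidate algebras $\mA(\Topt^k(M))$, $\mB(\Topt^k(M))$ and $\mW(\Topt^k(M))$ all coincide as vector spaces, equipped with a common (hence unique) C$^\ast$-norm.

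First I will verify normalization and positivity on $\mA(\Topt^k(M))$. Normalization is immediate, as $W(0,0)$ is the unit of $\mW(\Topt^k(M))$ and $\omega_\tor(W(0,0)) = 1$. For positivity, I take a finite combination $a = \sum_{i \in I} \alpha_i\, W(t_i, \tilde t_i)$ with pairwise distinct labels and expand $a^\ast a$ via \eqref{defining_relations}: since $W(-t_i,-\tilde t_i)\, W(t_j,\tilde t_j)$ is proportional to $W(t_j - t_i, \tilde t_j - \tilde t_i)$, only the diagonal terms $i = j$ survive the application of $\omega_\tor$, yielding
\begin{equation*}
\omega_\tor(a^\ast a) = \sum_{i \in I} |\alpha_i|^2 \geq 0.
\end{equation*}
The very same diagonal expression also provides faithfulness: $\omega_\tor(a^\ast a) = 0$ forces each $\alpha_i$ to vanish, hence $a = 0$. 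Because $\mW(\Topt^k(M))$ coincides with $\mA(\Topt^k(M))$ in this finite-dimensional setting, faithfulness on the dense $\ast$-algebra directly extends to the full C$^\ast$-algebra.

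Finally, the duality compatibility \eqref{eqn:statetorduality} is transparent: recalling \eqref{eqn:dualityTor}, $\zeta_\tor(t, \tilde t) = (\tilde t, (-1)^{k(m-k)+1} t)$ vanishes if and only if $(t, \tilde t) = 0$, so $\omega_\tor \circ \mW(\zeta_\tor)$ agrees with $\omega_\tor$ on generators; the identity then extends to the whole algebra by linearity. The step that I would otherwise flag as the most delicate, namely the passage from a dense $\ast$-subalgebra to the enveloping C$^\ast$-algebra (and in particular the non-trivial question of whether faithfulness survives the completion process), is here rendered vacuous by the finiteness of $\Topt^k(M)$: no genuine obstacle remains.
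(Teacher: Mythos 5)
Your proof is correct and follows essentially the same route as the paper: normalization is trivial, positivity and faithfulness both drop out of the diagonal computation $\omega_\tor(a^\ast a) = \sum_{i}|\alpha_i|^2$, and duality compatibility is immediate since $\zeta_\tor(t,\tilde t)=0$ iff $(t,\tilde t)=0$. The one place where you genuinely add something is the observation that $\Topt^k(M)$ is a finite Abelian group (finite-type manifolds have finitely generated integral cohomology, hence finite torsion subgroups), so $\mA(\Topt^k(M))=\mB(\Topt^k(M))=\mW(\Topt^k(M))$ as $\ast$-algebras. The paper simply asserts that "the identity displayed above also shows that the state $\omega_\tor$ is faithful" and leaves the reader to bridge the passage from the dense $\ast$-subalgebra to the enveloping C$^\ast$-algebra; your finiteness argument makes that step completely transparent, closing a small gap. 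The only nuance worth spelling out a bit more is that the identification $\mB(\Topt^k(M))=\mW(\Topt^k(M))$ requires the enveloping C$^\ast$-seminorm to be an actual norm, which you get precisely because the GNS representation of $\omega_\tor$ restricted to $\mA(\Topt^k(M))$ is already faithful — so the logic is "faithful state on $\mA$ $\Rightarrow$ faithful representation $\Rightarrow$ no collapse in passing to $\mW$," and then faithfulness on $\mW$ follows; this is not circular, but it is worth ordering the inferences explicitly.
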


\begin{proof}
Normalization and continuity with respect to $\|\cdot\|_1$ are immediate. 
For positivity, consider $a = \sum_{i \in I} \alpha_i\, W(t_i, \tilde t_i)$, 
where $I$ is a finite set that labels $(t_i, \tilde t_i) \in \Topt^k(M)$ faithfully, 
meaning that $i \neq j$ implies $(t_i, \tilde t_i) \neq (t_j, \tilde t_j)$. Then one finds 
\begin{equation}
\omega_\tor(a^\ast\, a) = \sum_{i \in I} |\alpha_i|^2 \geq 0. 
\end{equation}
Then by the same argument presented in the proof of Proposition \ref{prp:statefree}, 
we obtain the state $\omega_\tor$ on the C$^\ast$-algebra $\mW(\Topt^k(M))$. 
The identity displayed above also shows that the state $\omega_\tor$ is faithful. 
Furthermore, $\omega_\tor$ on $\mW(\Topt^k(M))$ and its analogue on $\mW(\Topt^{m-k}(M))$ 
are clearly related by $\zeta_\tor$. In fact, $(t, \tilde t) = 0 \in \Topt^k(M)$ 
if and only if $\zeta_\tor(t, \tilde t) = 0 \in \Topt^{m-k}(M)$. 
\end{proof}

\begin{remark}
Notice that, passing to the GNS representations associated to $\omega_\tor$ 
on $\mW(\Topt^k(M))$ and on $\mW(\Topt^{m-k}(M))$, we would obtain 
an isomorphism implementing the duality $\zeta_\tor$ between the GNS Hilbert spaces. 
The procedure is identical to the one of Remark \ref{rem:statefree}. 
In fact, whenever the states satisfy a relation such as the one in \eqref{eqn:statetorduality}, 
the above mentioned Hilbert space isomorphism is just a by-product of the GNS construction. 
In particular, for $m=2k$ this leads to duality being unitarily implemented at the GNS level. 
\end{remark}

\noindent We are now in a position to state and prove the main result of Section \ref{Sec:quantstates}: 

\begin{theorem}\label{thmStates}
Let $M$ be an $m$-dimensional globally hyperbolic spacetime $M$ with compact Cauchy surface. 
Then via the factorization of Corollary \ref{isomorphism_dynamical_algebras} we obtain a state 
\begin{align}
\omega: \mW(\Conf^k(M;\bZ)) \longmapsto \bC 
\end{align}
by tensoring the states 
\begin{align}
\omega_\Dyn: \mW(\Dyn^k(M)) \to \bC, && \omega_\free: \mW(\Topf^k(M)) \to \bC, 
&& \omega_\tor: \mW(\Topt^k(M)) \to \bC 
\end{align}
of Propositions \ref{prp:stateDyn}, \ref{prp:statefree} and \ref{prp:statetor}. 
In addition, this construction is compatible with the duality 
$\mW(\zeta): \mW(\Conf^k(M;\bZ)) \to \mW(\Conf^{m-k}(M;\bZ))$ of Remark \ref{rem:dualitiesQuant}, 
namely the states on the source and on the target are related by 
\begin{equation}\label{eqDualityCompatiblity}
\omega \circ \mW(\zeta) = \omega. 
\end{equation}
Furthermore, the state on the dynamical sector $\mW(\Dyn^k(M))$ 
fulfils the microlocal spectrum condition.\footnote{Notice that 
the microlocal spectrum condition only makes sense on $\mW(\Dyn^k(M))$. 
The other sectors only possess ``finitely many'' degrees of freedom. 
In fact, they correspond to group characters on a \textit{finite dimensional} Abelian Lie group 
(the topological configuration space), which is isomorphic to the Cartesian product 
of a finite dimensional torus and a discrete group.} 
\end{theorem}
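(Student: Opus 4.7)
The plan is to define $\omega$ as the product state induced by the factorization of Corollary \ref{isomorphism_dynamical_algebras}. Transporting the problem along the canonical isomorphism $\mW(\Conf^k(M;\bZ)) \simeq \mW(\Dyn^k(M)) \check\otimes \mW(\Topf^k(M)) \check\otimes \mW(\Topt^k(M))$, it suffices to exhibit a state on the minimal C$^\ast$-tensor product. A natural candidate is the product state $\omega_\Dyn \otimes \omega_\free \otimes \omega_\tor$, uniquely determined on elementary tensors by
\[
\omega(a \otimes b \otimes c) \doteq \omega_\Dyn(a)\, \omega_\free(b)\, \omega_\tor(c).
\]
Normalization is immediate from the normalization of the three factors. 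Positivity and $\|\cdot\|_{\check\otimes}$-continuity come from a standard construction: the GNS representations of $\omega_\Dyn$, $\omega_\free$ and $\omega_\tor$ assemble into a representation of the algebraic tensor product on the Hilbert tensor product of the three GNS spaces, whose associated operator norm is a C$^\ast$-cross-seminorm and is therefore dominated by $\|\cdot\|_{\check\otimes}$; $\omega$ is then recovered as the vector state of the tensor product of the three cyclic vectors.

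For the duality compatibility \eqref{eqDualityCompatiblity}, I would argue factorwise. Remark \ref{rem:dualitiesQuant} ensures that the isomorphism of Corollary \ref{isomorphism_dynamical_algebras} intertwines $\mW(\zeta)$ with the product morphism $\mW(\zeta_\Dyn) \check\otimes \mW(\zeta_\free) \check\otimes \mW(\zeta_\tor)$. On an elementary tensor the claim therefore reduces to
\[
\omega_\Dyn(\mW(\zeta_\Dyn)\, a)\, \omega_\free(\mW(\zeta_\free)\, b)\, \omega_\tor(\mW(\zeta_\tor)\, c) = \omega_\Dyn(a)\, \omega_\free(b)\, \omega_\tor(c),
\]
which is precisely the conjunction of the three factorwise compatibility statements already established in Propositions \ref{prp:stateDyn}, \ref{prp:statefree} and \ref{prp:statetor}. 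Extension by linearity and norm-continuity yields the desired identity on the whole of $\mW(\Conf^k(M;\bZ))$.

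Finally, the microlocal spectrum condition on the dynamical sector is exactly the content of Proposition \ref{prp:stateDyn} and demands no extra work; the footnote about the topological sectors is a structural observation rather than a proof obligation. The only mildly delicate point in the program is verifying that the product state extends continuously to the minimal C$^\ast$-tensor product, but this is essentially a classical fact about product states on $\check\otimes$ and requires no input beyond the GNS triples already furnished by Sections \ref{Sec:states_dyn}, \ref{sec:TopfState} and \ref{sec:ToptState}.
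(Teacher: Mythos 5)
Your approach matches the paper's: you build $\omega$ as a product state via the factorization of Corollary \ref{isomorphism_dynamical_algebras}, realize it as the vector state of the tensored GNS cyclic vectors, and reduce the duality compatibility to the three factorwise identities of Propositions \ref{prp:stateDyn}, \ref{prp:statefree} and \ref{prp:statetor}; the microlocal spectrum condition is, as you say, simply inherited from Proposition \ref{prp:stateDyn}.

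One terminological slip is worth flagging: you refer to $\check\otimes$ as the \emph{minimal} C$^\ast$-tensor product, but the paper defines $\|\cdot\|_{\check\otimes}$ as the least upper bound of all C$^\ast$-subcross seminorms and invokes the universal property for commuting morphisms, i.e.\ $\check\otimes$ is the \emph{maximal} tensor product in the sense of \cite{Guichardet}. This does not undermine your argument --- indeed the step ``the operator seminorm coming from the tensored GNS representation is a C$^\ast$-cross-seminorm and is therefore dominated by $\|\cdot\|_{\check\otimes}$'' is \emph{precisely} what the maximal norm guarantees, and is the same fact the paper exploits through the universal property. Had $\check\otimes$ genuinely been the minimal (spatial) norm, that domination would not be automatic and you would need to invoke the standard but separate fact that product states are bounded for the minimal norm. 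As written, your argument is correct and follows the paper's route; just adjust the label.
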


\begin{proof}
By Corollary \ref{isomorphism_dynamical_algebras} we obtain a state on $\mW(\Conf^k(M;\bZ))$ 
by assigning one on the $\check \otimes$-tensor product 
of the C$^\ast$-algebras $\mW(\Dyn^k(M))$, $\mW(\Topf^k(M))$ and $\mW(\Topt^k(M))$, 
cf.\ \cite{Guichardet} and Section \ref{sec:quantization}. 
By \cite{Guichardet} two commuting representations (one for each factor) on a common Hilbert space 
provide a unique representation of the $\check\otimes$-tensor product. 
Since it is always possible to merge via the tensor product the carrier Hilbert spaces associated to two representations into a single counterpart on which the original representations act on one component and trivially on the other (hence they commute), 
it is sufficient for us to provide a representation of each $\check\otimes$-tensor factor. 
Indeed, this amounts to assigning a state on each sector, 
namely on $\mW(\Dyn^k(M))$, $\mW(\Topf^k(M))$ and $\mW(\Topt^k(M))$ respectively. 
This task is accomplished by Propositions \ref{prp:stateDyn}, \ref{prp:statefree} and \ref{prp:statetor}. 
In particular, Proposition \ref{prp:stateDyn} provides a Hadamard state. 
\vskip .5em

Concerning the behaviour with respect to the duality 
$\mW(\zeta): \mW(\Conf^k(M;\bZ)) \to \mW(\Conf^{m-k}(M;\bZ))$, 
we observe that Propositions \ref{prp:stateDyn}, \ref{prp:statefree} and \ref{prp:statetor} 
provide relations similar to \eqref{eqDualityCompatiblity} for each $\check\otimes$-tensor factor. 
Furthermore, in Remark \ref{rem:dualitiesQuant} we observed that 
the factorization of Corollary \ref{isomorphism_dynamical_algebras} 
intertwines the duality $\mW(\zeta)$ with the $\check\otimes$-tensor product 
of the dualities $\mW(\zeta_\Dyn)$, $\mW(\zeta_\free)$ and $\mW(\zeta_\tor)$. 
Therefore, the claim follows from the definition of $\omega$. 
\end{proof}

\begin{remark}
Although we do not explicitly pursue this goal here, let us mention that our analysis 
can be straightforwardly adapted to the case of self-dual configurations. 
In particular, one obtains an analogue of Theorem \ref{thmStates}. 
However, this requires some care in the presence of torsion. 
In fact, one should keep in mind that the symplectic structure in the self-dual subtheory is not only 
the restriction of the symplectic structure $\sigma$ on $\Conf^k(M;\bZ)$ defined in \eqref{eqn:symplform}, 
but it has to be \textit{rescaled} by $1/2$. This has to be done in order to avoid artificial degeneracies 
in the torsion topological sector that would otherwise show up whenever a $\bZ_2$-factor is present. 
Refer to \cite[Sect.\ 7]{BPhys} for further information about self-dual Abelian gauge fields. 
\end{remark}

\subsection{\label{sec:LorCyl}An example: the Lorentz cylinder}
In the last section, we discuss explicitly a simple but instructive example. Additional ones are present in \cite{Thesis}. We consider the so-called Lorentz cylinder $M = \bR \times \mathbb{S}^1$ (notice that our convention is to set the length, and not the radius, of the circle to $1$). Introducing the standard coordinates $(t, \theta)$, we endow $M$ with the ultra-static metric $g = -\dd t \otimes \dd t+ d\theta \otimes d\theta$. In addition, we focus our attention on the degree $k=1$. Since $H^0(\mathbb{S}^1;\bZ) \simeq \bZ \simeq H^1(\mathbb{S}^1;\bZ)$, it ensues that $\Topt^1(M)$ is trivial, while $\Topf^1(M) \simeq \bT^2 \oplus \bZ^2$. Furthermore, $\Dyn^1(M)=\dd C^\infty(M)\cap \ast \dd C^\infty(M)$. Hence, as a consequence of Corollary \ref{isomorphism_dynamical_algebras}, the C$^\ast$-algebra of observables consists of two factors only:
\begin{equation}
\mW(\Conf^1(M;\bZ)) \simeq \mW(\Dyn^1(M)) \check{\otimes} \mW(\Topf^1(M)).
\end{equation}
A state thereon is completely specified by assigning it independently on each factor of the tensor product. While the one associated to $\mathcal{W}(\Topf^1(M))$ is nothing but \eqref{eqStateTopf}, we can find a more explicit formula for the two-point function on the dynamical sector $\Dyn^1(M)$.
\vskip 1em

Let us take $\rho,\rho^\prime\in\Omega^1_c(M)$. Starting from \eqref{eqn:twopoint-dede}, we consider $\mathfrak{W}_0$. \eqref{eqn:de} shows that $\delta\rho=\delta_{\mathbb{S}^1}\rho_{\mathbb{S}^1}+\partial_t\rho_t$, where we regard $t\mapsto\rho_{\mathbb{S}^1}(t,\cdot)$ and $t\mapsto\rho_t(t,\cdot)$ as smoothly $\bR$-parametrized differential forms on $\mathbb{S}^1$. In addition, recalling \eqref{projections}, it holds that 
\begin{equation}
\pi^0_\perp(\delta\rho) = \sum_{n>0} c_n(t)\, \cos(2\pi n\theta) + d_n(t)\, \sin(2\pi n\theta),
\end{equation}
where
\begin{align}
c_n(t) \doteq 2 \int_{0}^{1} \cos(2\pi n\theta^\prime)\, \delta\rho(t,\theta^\prime)\, d\theta^\prime,
&& d_n(t) \doteq 2 \int_{0}^{1} \sin(2\pi n\theta^\prime)\, \delta\rho(t,\theta^\prime)\, d\theta^\prime.
\end{align}
By writing the same expression for $\rho^\prime$, we can now evaluate directly \eqref{state-part1} obtaining: 
\begin{align}\label{2-pt-Lorentz}
\omega_2([\rho] \otimes [\rho^\prime]) & = \mathfrak{W}_0(\de\rho \otimes \de\rho^\prime) 
	= \mathfrak{W}_\perp(\pi^0_\perp(\delta\rho) \otimes \pi^0_\perp(\delta\rho^\prime)) \nonumber \\
& = \sum_{n>0}\frac{1}{4\pi n} \Big( \widehat{\delta\rho}(n,n)\, \widehat{\delta\rho^\prime}(-n,-n) 
	+ \widehat{\delta\rho}(n,-n)\, \widehat{\delta\rho^\prime}(-n,n) \Big),
\end{align}
where $[\rho],[\rho^\prime]\in\Omega^1_\c(M)_\Dyn$ and 
\begin{equation}
\widehat{\delta\rho}(n,m)=\int_{\bR}dt\,\int_0^1d\theta\,e^{-2\pi i nt} e^{2\pi i m \theta}\delta\rho(t,\theta),
\end{equation}
for all integers $m,n$. 

\begin{remark}
Observe that \eqref{2-pt-Lorentz} and the ensuing $\omega_\Dyn$ identify a ground state for the underlying dynamical theory. At first glance, this might appear as a contradiction to the renowned no-go result for the existence of ground states for a massless scalar field on a two-dimensional globally hyperbolic spacetime, see for example \cite{SHU13}. The origin of such obstruction lies in the presence of an infrared singularity, which is reflected in the contribution of the $0$-mode in the Fourier expansion of the two-point function. It is noteworthy that our implementation of Abelian duality automatically removes such pernicious feature as one can infer by direct inspection of \eqref{2-pt-Lorentz}, where the mode $n=0$ is not present.
\end{remark}


\section*{Acknowledgements}
The authors are grateful to Nicol\`o Drago and Alexander Schenkel 
for stimulating discussions and valuable suggestions. 
M.C.\ and C.D.\ are grateful to the Institute of Mathematics of the University of Potsdam 
for the kind hospitality during the realization of part of this work. 
The work of M.B.\ has been supported by a research fellowship of the Alexander von Humboldt foundation. 
The work of M.C.\ has been partially supported by IUSS (Pavia). 
The work of C.D.\ has been supported by the University of Pavia.


\end{document}